\theoremstyle{plain}
\newtheorem*{thm*}{Theorem}
\newtheorem{thm}{Theorem}[section]
\newtheorem{prop}[thm]{Proposition}
\theoremstyle{definition}
\newtheorem*{defn*}{Definition}
\title{EWF : simulating exact paths of the Wright–Fisher diffusion}
\date{\today}
\author{Jaromir Sant$^{1}$, Paul A.~Jenkins$^{1,2,3}$, Jere Koskela$^{1}$, Dario Span{\`o}$^{1}$ \\ \\ \normalsize{Department of Statistics$^{1}$ \& Department of Computer Science$^{2}$} \\ \normalsize{University of Warwick, Coventry CV4 7AL, United Kingdom} \\ \normalsize{The Alan Turing Institute$^{3}$, British Library, London NW1 2DB, United Kingdom}}
\begin{document}
	\maketitle
	
	\begin{abstract}
	\noindent The Wright--Fisher diffusion is important in population genetics in modelling the evolution of allele frequencies over time subject to the influence of biological phenomena such as selection, mutation, and genetic drift. Simulating paths of the process is challenging due to the form of the transition density. We present EWF, a robust and efficient sampler which returns exact draws for the diffusion and diffusion bridge processes, accounting for general models of selection including those with frequency-dependence. Given a configuration of selection, mutation, and endpoints, EWF returns draws at the requested sampling times from the law of the corresponding Wright--Fisher process. Output was validated by comparison to approximations of the transition density via the Kolmogorov--Smirnov test and QQ plots. All software is available at https://github.com/JaroSant/EWF
	\end{abstract}
	
	\section{Introduction}
	
	The Wright--Fisher diffusion is a central model for the temporal fluctuation of allele frequencies in a large population evolving under random mating and in the presence of mutation and selection.
	Despite its importance, it remains difficult to work with from a computational perspective, both in the absence of selection (where the transition density admits an infinite series expansion) and the non-neutral case (where the corresponding infinite series expansion has intractable terms).
	Additionally, in a diallelic model the diffusion lives on the bounded interval $[0,1]$ and thus even simple approximate sampling techniques such as the Euler--Maruyama scheme require sophisticated modifications to respect its boundary behaviour \citep{Dangerfield}.
	Existing approaches in the literature have tackled this by resorting to a combination of discretisation and numerical approximation, e.g.\ solving the Kolmogorov backwards equation numerically \citep{Bollback,Malaspinas}, approximating through more tractable processes \citep{MathiesonMcVean}, truncating a spectral expansion of the transition density \citep{SteinruckenTDF}, and using Riemann sum approximations \citep{Schraiber}, all of which induce a bias which is hard to quantify.
	
	In some cases, \emph{exact} sampling routines making use of rejection sampling are available.
	This class of techniques has been extended to certain variants of the Wright--Fisher diffusion: \cite{JenkinsSpano} showed that neutral Wright--Fisher diffusion paths and bridges can be simulated exactly via simulation techniques tailored for infinite series, and that neutral paths are the natural proposal mechanism for simulating non-neutral paths by rejection.
	Their work assumes that the mutation parameters are strictly positive and the endpoints for both the diffusion and diffusion bridge lie in the interior of $[0, 1]$.
	The case of diffusion bridges that start and end at 0 was tackled by \cite{GriffithsJenkinsSpano}, but several other combinations of startpoint, endpoint, and parameters remain unaddressed.
	Moreover, no simulation package implementing all of the cases of interest exists.
	
	We present EWF, a C++ package producing exact draws from both neutral and non-neutral Wright--Fisher diffusions. The method properly accounts for all types of boundary (entrance, reflecting, and absorbing), incorporates a wide class of selection models, and allows for arbitrary endpoints, substantially extending previous work by \cite{JenkinsSpano, GriffithsJenkinsSpano}.
	These new theoretical details can be found in the accompanying supplement.
	Additionally, EWF preserves accuracy over long times, in contrast to Euler--Maruyama type schemes where errors accumulate over the simulated path. 
	
	\section{Models}
	
	Consider the two-allele non-neutral Wright--Fisher diffusion $(X_t)_{t\geq 0}$ with mutation parameter $\boldsymbol{\theta}=(\theta_1,\theta_2)$, which is given by the solution to the following stochastic differential equation
	\begin{align}\label{WFDiff}
		dX_t = {}& \frac{1}{2}\left[\sigma X_{t}(1-X_{t})\eta(X_t) -\theta_2 X_t + \theta_1(1-X_t)\right]dt \nonumber \\
		&{}+ \sqrt{X_t(1-X_t)}dW_t 
	\end{align}
	for $t \geq 0$ with $X_0 \in [0,1]$, and $\eta(x)=\sum_{i=0}^{n}a_i x^{i}$ for $n$ finite (e.g.\ for genic selection $\eta(x)=1$, and for diploid selection $\eta(x)=h+x(1-2h)$ with $h$ the dominance parameter).
	When the mutation parameter $\boldsymbol{\theta}$ has positive entries, the corresponding neutral (i.e.\ $\sigma = 0$) transition density can be decomposed into a mixture distribution 
	\begin{align*}
		p^{(\theta_1,\theta_2)}(x,y;t) = \sum_{m=0}^{\infty} q_{m}^{\theta}(t)\sum_{l=0}^{m}\mathrm{Bin}_{m,x}(l)\mathrm{Beta}_{\theta_1+l,\theta_2+m-l}(y),
	\end{align*}
	where $(q_m^{\theta}(t))_{m\in\mathbb{N}}$ is a distribution on the integers and $\theta:=\theta_1+\theta_2$. This allows for exact simulation \citep[Section 2]{JenkinsSpano}. EWF extends this approach to the $\theta_1 = 0$ and/or $\theta_2=0$ cases, when the diffusion is absorbed on hitting 0 and/or 1 in finite time almost surely.
	
	It is often of interest to consider the evolution of a de novo mutation which appears at time $t_0$ and is observed in the population at a sampling time $t > t_0$.
	If $\boldsymbol{\theta}=\mathbf{0}$, one needs to condition the diffusion on non-absorption to recover a non-degenerate transition density. The resulting density can be found in Section 1 in the Supplementary Information (together with the respective details), as well as the corresponding transition densities for the cases when $\boldsymbol{\theta}=(0,\theta)$ or $\boldsymbol{\theta}=(\theta, 0)$.
	
	The transition density for a diffusion \emph{bridge} can be similarly derived (see Section 2 in the Supplementary Information), whilst in the presence of selection (i.e.\ $\sigma \neq 0$ in \eqref{WFDiff}), draws from the corresponding non-neutral process can be returned by simulating neutral paths as candidates in an appropriate rejection scheme \citep[Section 5]{JenkinsSpano}.
	
		\section{Methods}
		
		The expression for $p^{(\theta_1,\theta_2)}(x,y;t)$ tells us that draws from the transition density can be achieved by the following:
		\begin{enumerate}
			\item Draw $M \sim \{q^{\theta}_{m}(t)\}_{m\in\mathbb{N}}$
			\item Conditional on $M=m$, draw $L \sim \textnormal{Bin}(m,x)$
			\item Conditional on $M=m, L=l$, draw $Y \sim \textnormal{Beta}(\theta_1+l,\theta_2+m-l)$
		\end{enumerate}
		Steps 2 and 3 are simple.
		Step 1 is more involved since each $q_m^{\theta}(t)$ is an infinite series (see Supplementary information Section 3 where we have extended the procedure to generate samples when $\boldsymbol{\theta}=\mathbf{0}$ or $\boldsymbol{\theta}=(0,\theta)$).
		
		If the time increment $t$ is small, approximations are necessary due to numerical instabilities in computing $q_{m}^{\theta}(t)$. EWF employs a Gaussian approximation of $q_{m}^{\theta}(t)$ for small $t$ \citep[Theorem 4]{Griffiths84} ($t \leq 0.08$ by default), with similar approximations used for bridges whenever subsequent time increments fall below some threshold. For full details see Section 5 in the Supplementary Information.
		
		The implementation was tested extensively and validated through a combination of QQ plots and the Kolmogorov--Smirnov test (see Supplementary Information Section 7). An example is shown in Fig.\ \ref{HorsesSimulation}.
		
		\begin{figure}[!ptb]
			\centering
			\centering{{\includegraphics[width=.98\linewidth]{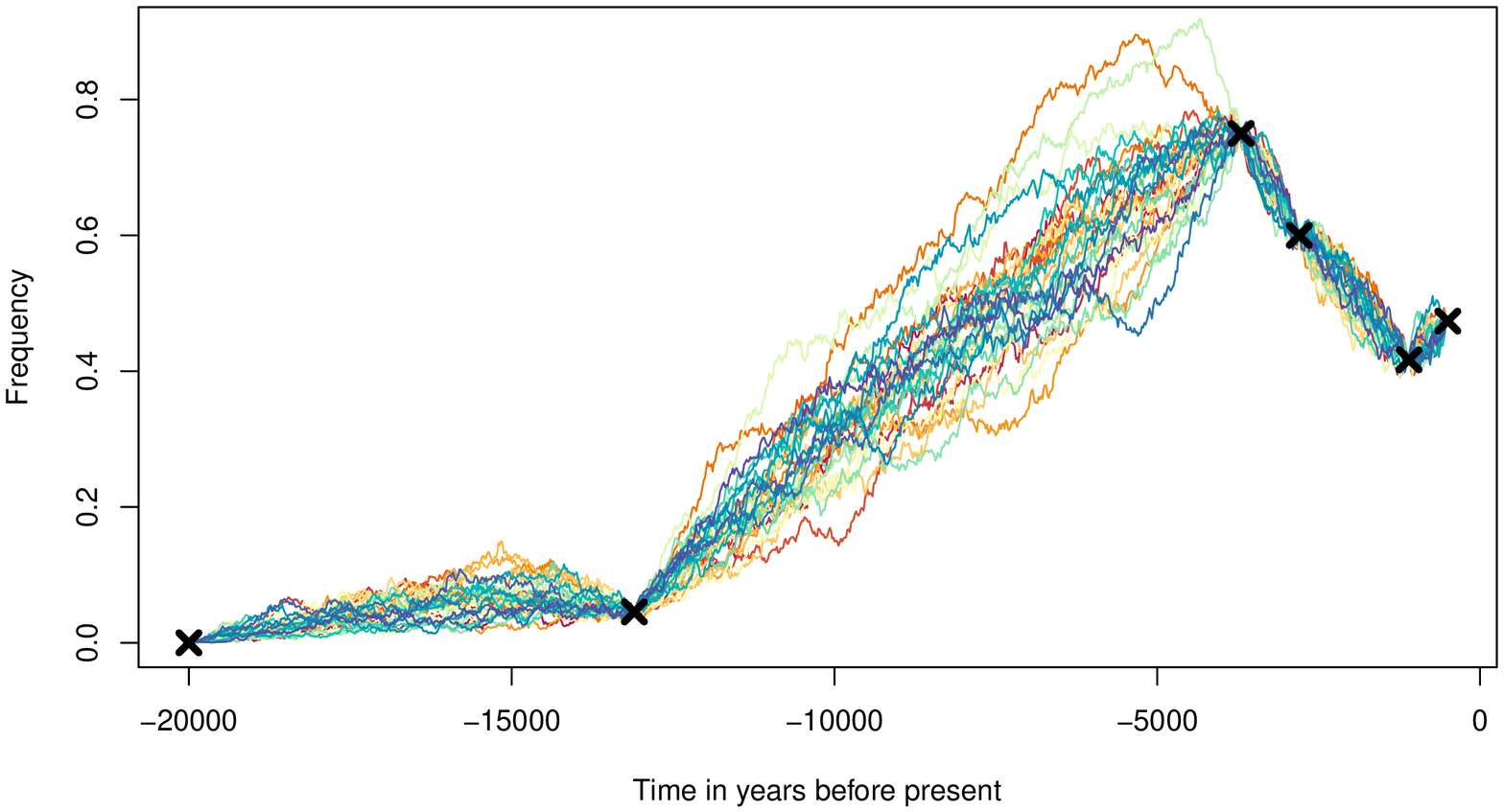} }}
			\caption{Illustration of 30 candidate trajectories for the horse coat color data found in \cite{Ludwig} simulated using EWF (note that the observed frequencies (black crosses) are assumed to be exact observations of the underlying diffusion). Simulations used the inferred selection coefficient $s = 0.0007$ with a consensus effective population size $N_e=10,000$ \citep{Ludwig,Malaspinas,Schraiber}, giving $\sigma= 2N_es = 14$. We used $\boldsymbol{\theta}=\mathbf{0}$ and a generation time of 5 years.}
		\label{HorsesSimulation}
	\end{figure}

\section{Discussion}

EWF provides a robust, efficient, and exact sampling routine to target a wide family of Wright--Fisher diffusions featuring a broad class of selective regimes, any mutation parameters, and any start/end points. The implementation can be used as a stand-alone package, or incorporated into simulation-based inference pipelines from time series allele frequency data. This is particularly useful in view of the recent increase in availability of such data \citep{Wutke2016,Fages2019}. 

\section*{Funding}

This work has been supported by the EPSRC and the Alan Turing Institute under
grants EP/R044732/1, EP/V049208/1, EP/N510129/1. \\ \newline

\noindent \huge\textbf{Supplementary Information} \normalsize

\setcounter{section}{0}

\section{Transition densities for neutral Wright--Fisher diffusions}\label{TransitionDensitiesDiffusion}
\noindent Consider a Wright--Fisher diffusion started from some arbitrary initial point $x\in[0,1]$ with one of the mutation parameters set to 0, say $\boldsymbol{\theta}=(0,\theta)$. Under such a setup, the diffusion  survives up to a time $T_0 := \inf\{ t \geq 0 : X_t = 0 \}$, when it hits 0 and remains there. In this section we derive the transition density when the hitting time $T_0$ is both allowed to occur at any time, and when the sampling time is conditioned on $\{t < T_0\}$. The latter case is slightly harder to tackle because it is necessary to incorporate this conditioning. \newline

\noindent Similar arguments apply for the case when mutation is absent (i.e.\ $\boldsymbol{\theta}=\boldsymbol{0}$), and we further point out that the case $\boldsymbol{\theta}=(\theta,0)$ follows immediately from the case $\boldsymbol{\theta}=(0,\theta)$ by considering the symmetric mapping $x \mapsto 1-x$ and observing that the resulting process is once again a Wright--Fisher diffusion with mutation parameter $\boldsymbol{\theta}'=(\theta_2,\theta_1)$ and selection parameter $\sigma' = -\sigma$. 

\subsection{Neutral diffusion with strictly positive mutation}
\noindent We begin by considering $\theta_1,\theta_2 > 0$ such that both 0 and 1 are non-absorbing boundaries. In this case the transition density can be expressed (\cite{Griffiths79,Tavare84}) as
\begin{align}\label{TransitionDensityDiffusion(theta1,theta2)}
	p^{(\theta_1,\theta_2)}(x,y;t) = \sum_{m=0}^{\infty} q_{m}^{\theta}(t)\sum_{l=0}^{m}\mathcal{B}_{m,x}(l)\mathcal{D}_{\theta_1+l,\theta_2+m-l}(y),
\end{align}
\noindent where $\theta = |\boldsymbol{\theta}| = \theta_1+\theta_2$, $\mathcal{B}_{m,x}(\cdot)$ denotes the binomial probability mass function with parameters $m$ and $x$, $\mathcal{D}_{\theta_1+l,\theta_2+m-l}(\cdot)$ denotes the beta probability density function with parameters $\theta_1+l$ and $\theta_2+m-l$, and 
\begin{align*}
	q_{m}^{\theta}(t) := \sum_{k=m}^{\infty} (-1)^{k-m}\frac{\theta+2k-1}{k!(k-m)!}\frac{\Gamma(\theta+m+k-1)}{\Gamma(\theta+m)}e^{\frac{-k(k+\theta-1)t}{2}},
\end{align*}
\noindent with $\Gamma(\cdot)$ denoting the gamma function. We point out that $\{q_{m}^{\theta}(t)\}_{m\in\mathbb{N}}$ correspond to the transition probabilities of the number of lineages in Kingman's coalescent (which is the moment dual to the Wright--Fisher diffusion), such that $q_{m}^{\theta}(t)$ is the probability that $m$ lineages survive up to time $t$ when one starts with an infinite number of lineages at time 0. For more details, we refer the interested reader to \cite{Griffiths79,Tavare84}. The inclusion of the mutation parameters on the LHS of \eqref{TransitionDensityDiffusion(theta1,theta2)} makes explicit the dependence of the transition density on these quantities, however in an effort to reduce on encumbrance, we shall suppress this notation henceforth and simply write $p(x,y;t)$ for the transition density of the diffusion, with the specific mutation regime being considered specified exogenously.  

\subsection{Neutral diffusion with one sided mutation}
For $\boldsymbol{\theta}=(0,\theta)$, the diffusion is absorbed upon hitting 0 and the transition density can be expressed as
\begin{align}\label{TransitionDensityDiffusion(0,theta)}
	p(x,y;t) = \sum_{m=0}^{\infty} q_{m}^{\theta}(t)\left[\sum_{l=1}^{m}\mathcal{B}_{m,x}(l)\mathcal{D}_{l,\theta+m-l}(y)+(1-x)^{m}\delta_{0}(y)\right],
\end{align}
\noindent where $\delta_{0}(y)$ denotes a point mass at 0 and represents the case when the diffusion is absorbed at 0. In cases like this we reinterpret `density' appropriately, with respect to a dominating measure containing both a Lebesgue component and an atom at each of 0 and 1. \newline

\noindent If we condition on the event $\{t < T_0\}$, standard conditional probability gives us that the transition density of the diffusion \emph{conditioned} on non-absorption until time $t$ is given by 
\begin{align*}
	\tilde{p}(x,y;t) &= \frac{p(x,y;t)}{\mathbb{P}_{x}\left[T_0 > t\right]},
\end{align*}
for $y\in(0,1]$, where we use the notation $\tilde{p}(\cdot,\cdot;\cdot)$ to make explicit the fact that this is the transition density of the \emph{conditioned} diffusion process. Additionally, we have that 
\begin{align}\label{HittingTimeProbability}
	\mathbb{P}_{x}\left[T_0>t\right] &= \int_{(0,1]} p(x,u;t) du \nonumber \\
	&= \sum_{m=1}^{\infty}q_{m}^{\theta}(t)\sum_{l=1}^{m}\mathcal{B}_{m,x}(l),
\end{align}
and we note that the contributions from $m=0$ above are missing as the corresponding beta density collapses to a point mass at 0. Thus for $x,y\in (0,1]$ we have
\begin{align}\label{WFOneSidedMut1}
	\tilde{p}(x,y;t) = \sum_{m=1}^{\infty}\frac{ q_{m}^{\theta}(t)}{\sum_{l=1}^{m}\mathcal{B}_{m,x}(l)\sum_{d=1}^{\infty} q_{d}^{\theta}(t)(1-(1-x)^{d})}\mathcal{D}_{l,\theta+m-l}(y).
\end{align}
\noindent For small $x$, we have the following leading order expansion in $x$
\begin{align}\label{0ThetaDensity}
	p(x,y;t) = x\sum_{m=1}^{\infty}q_{m}^{\theta}(t)m(\theta+m-1)(1-y)^{\theta+m-2} + O(x^{2}),
\end{align}
\noindent and note further \eqref{HittingTimeProbability} is also of leading order $x$ for $x$ small. Thus upon taking the limit $x\rightarrow 0$ in \eqref{WFOneSidedMut1} we get that
\begin{align}\label{WFOneSidedMut2}
	\tilde{p}(0,y;t) &= \sum_{m=1}^{\infty}\frac{m q_{m}^{\theta}(t)}{\sum_{d=1}^{\infty}d q_{d}^{\theta}(t)}\mathcal{D}_{1,\theta+m-1}(y).
\end{align}
\noindent Putting all of the above together we get that the conditioned diffusion has transition density given by
\begin{align}\label{OneSidedMutDen}
	\tilde{p}(x,y;t) = \begin{cases} \displaystyle\sum_{m=1}^{\infty}\frac{m q_{m}^{\theta}(t)}{\sum_{d=1}^{\infty}d q_{d}^{\theta}(t)}\mathcal{D}_{1,\theta+m-1}(y) & x = 0, \\ & \\
		\displaystyle\sum_{m=1}^{\infty}\frac{ q_{m}^{\theta}(t)\sum_{l=1}^{m}\mathcal{B}_{m,x}(l)}{\sum_{d=1}^{\infty} q_{d}^{\theta}(t)(1-(1-x)^{d})}\mathcal{D}_{l,\theta+m-l}(y) & x \in (0,1].
	\end{cases}
\end{align}
\noindent We point out that as the diffusion is conditioned on avoiding 0, there will always be at least one surviving lineage in the moment-dual Kingman coalescent, and thus the index for $m$ starts at 1. 

\subsection{Diffusion without mutation}
\noindent If $\boldsymbol{\theta}=\boldsymbol{0}$, then the diffusion is absorbed upon hitting either boundary, and the corresponding transition density is given by
\begin{align} \label{TransitionDensityDiffusion(0,0)}
	p(x,y;t) = \sum_{m=2}^{\infty} q_{m}^{\theta}(t)\left[\sum_{l=1}^{m-1}\mathcal{B}_{m,x}(l)\mathcal{D}_{l,m-l}(y)+(1-x)^{m}\delta_{0}(y)+x^{m}\delta_{1}(y)\right],
\end{align}

\noindent Conditioning the diffusion on remaining inside the interior of $[0,1]$, and again employing a leading order analysis of the resulting numerator and denominator allows us to conclude that the transition density in this case is given by
\begin{align}\label{ZeroMutDen}
	\tilde{p}(x,y;t) = \begin{cases} \displaystyle\sum_{m=2}^{\infty}\frac{m q_{m}^{0}(t)}{\sum_{d=2}^{\infty}d q_{d}^{0}(t)}\mathcal{D}_{1,m-1}(y) & x = 0, \\ & \\
		\displaystyle\sum_{m=2}^{\infty}\frac{m q_{m}^{0}(t)}{\sum_{d=2}^{\infty}d q_{d}^{0}(t)}\mathcal{D}_{m-1,1}(y) & x = 1, \\ & \\
		\displaystyle\sum_{m=2}^{\infty}\frac{ q_{m}^{0}(t)\sum_{l=1}^{m-1}\mathcal{B}_{m,x}(l)}{\sum_{d=2}^{\infty} q_{d}^{0}(t)(1-x^{d}-(1-x)^{d})}\mathcal{D}_{l,m-l}(y) & x \in (0,1).
	\end{cases}
\end{align}
Note that as $\boldsymbol{\theta} = \boldsymbol{0}$ and we are conditioning on non-absorption, the indices $m$ and $d$ are now forced to start from 2. This follows from the fact that the derivations performed above assume the starting point $x$ to be within $(0,1)$ and subsequently send $x$ to the corresponding boundary from within the interior of $(0,1)$, which corresponds to starting the diffusion arbitrarily close to the boundary. Thus at all times there is a fraction $x$ of the population having one type, with the other fraction $1-x$ having the other, neither of which can be lost by mutation. 

\section{Transition densities for neutral Wright--Fisher diffusion bridges}\label{TransitionDensitiesBridge}
\noindent We now derive the density of a point $y\in[0,1]$ sampled at time $s\in(0,t)$ from the law of a Wright--Fisher diffusion bridge started at $x$ at time 0 and ending at $z$ at time $t$. In addition to considering each mutation regime separately, we further split our considerations based on the values the start and end points $x$ and $z$ assume. As in the diffusion case, we derive the relevant expressions in the case $\boldsymbol{\theta}=(0,\theta)$, as the other cases ($\boldsymbol{\theta}=(0,0)$ or $\boldsymbol{\theta}=(\theta,0)$) follow using similar arguments. We further consider both cases when (i) the bridge is allowed to be absorbed at any time point within the time interval $(0,t)$, and (ii) the bridge is conditionally non-absorbing: $X_s \in (0,1)$ for all $s \in (0,t)$. We make use of the following short-hand notation for the different possible end-point combinations.
\begin{center}
	\begin{tabular}{c|c|c|c}\label{BridgeNotationTable}
		& $x=0$ & $x=1$ & $x \in(0,1)$ \\ \hline
		$z=0$ & A1 & B1 & C1 \\
		$z\in(0,1)$ & A2 & B2 & C2 \\
		$z=1$ & A3 & B3 & C3 
	\end{tabular}
\end{center}
We further introduce a letter at the front of each of the above to differentiate between the cases $\boldsymbol{\theta}=\boldsymbol{0}$ (`Z' for zero), $\boldsymbol{\theta}=(0,\theta)$ (`O' for one sided), and $\boldsymbol{\theta}$ with strictly positive entries (`P' for strictly positive). \newline

\noindent Before proceeding with deriving the transition densities for all the above outlined cases, observe that the transition density for a Wright--Fisher diffusion bridge started from $x\in[0,1]$ at time 0, ending at $z\in[0,1]$ at time $t$ and sampled at time $s$ can be factorised as follows for $y \in [0,1]$:
\begin{align}\label{BridgeDecomposition}
	p^{x,z;t}(y;s) = \frac{p(x,y;s)p(y,z;t-s)}{p(x,z;t)},
\end{align} 
\noindent where again, for simplicity the dependence of \eqref{BridgeDecomposition} on the mutation parameters is omitted from the notation. 

\subsection{Neutral diffusion bridge with one sided mutation $\boldsymbol{\theta}=(0,\theta)$}
\noindent We start by noting that if the diffusion bridge is allowed to be absorbed at 0 at any time within the interval $(0,t)$, then the only cases of interest are when the left endpoint $x\in(0,1]$, for otherwise the bridge stays at 0. Additionally if $z \in (0,1]$, the bridge could not have been absorbed within the time interval $(0,t)$, and is therefore equivalent to conditioning it on non-absorption (which shall be tackled shortly). Thus we take $x\in(0,1)$ and $z = 0$, substitute \eqref{TransitionDensityDiffusion(0,theta)} into \eqref{BridgeDecomposition}, and re-group terms to get that
\begin{align}\label{TransitionDensityBridge(0,theta)}
	p^{x,z;t}(y;s) = \sum_{m,k=1}^{\infty}\frac{q^{\theta}_{m}(s)q^{\theta}_{k}(t-s)}{\sum_{d=1}^{\infty}q_{d}^{\theta}(t)(1-x)^{d}}\Bigg[&\sum_{l=1}^{m}\mathcal{B}_{m,x}(l)\frac{B(l,\theta+m-l+k)}{B(l,\theta+m-l)}\mathcal{D}_{l,\theta+m-l+k}(y) \nonumber \\
	&\qquad{}+ (1-x)^{m}\delta_{0}(y)\Bigg].
\end{align}
\noindent where $B(\cdot,\cdot)$ is the beta function. \newline

\noindent To derive the transition density when $z\in(0,1]$, we first point out that conditioning a diffusion (or conditioning a diffusion bridge) on non-absorption is a special case of taking an $h$-transform for said process (see for instance \cite{Fitzsimmons92,GriffithsJenkinsSpano}). Furthermore, diffusion bridges are invariant under $h$-transforms (see equation (10) in \cite{GriffithsJenkinsSpano}), and thus the distribution of a diffusion bridge conditioned on non-absorption is the same as that of the corresponding unconditioned process. We therefore need not differentiate between the transition density of the conditioned or unconditioned diffusion bridge, and simply use $p^{x,z;t}(y;s)$ throughout. \newline

\noindent Expanding \eqref{BridgeDecomposition} for $x,z\in(0,1]$ gives
\begin{align}\label{XZOneSidedMut}
	p^{x,z;t}(y;s) ={} & \sum_{m,k=1}^{\infty}\frac{q_{m}^{\theta}(s)q_{k}^{\theta}(t-s)}{\sum_{d=1}^{\infty}q_{d}^{\theta}(t)\sum_{f=1}^{d}\mathcal{B}_{d,x}(f)\mathcal{D}_{f,\theta+d-f}(z)} \nonumber \\ 
	&\qquad{}\times\sum_{l,j=1}^{m,k}\binom{k}{j}\frac{B(l+j,\theta+m-l+k-j)}{B(l,\theta+m-l)}\mathcal{D}_{j,\theta+k-j}(z)\mathcal{D}_{l+j,\theta+m-l+k-j}(y).
\end{align}
\noindent When $x=0$, we make use of \eqref{0ThetaDensity} in both the numerator and denominator above, and subsequently take the limit as $x\rightarrow 0$, to arrive at
\begin{align}\label{X0ZOneSidedMut}
	p^{0,z;t}(y;s) &= \lim_{x\rightarrow0}\left(\frac{x\sum_{m=1}^{\infty}q_{m}^{\theta}(s)m(\theta+m-1)(1-y)^{\theta+m-2} + o(x^{2})}{x\sum_{d=1}^{\infty}q_{d}^{\theta}(t)d(\theta+d-1)(1-z)^{\theta+d-2} + o(x^{2})}\right)p(y,z;t-s) \nonumber \\
	&= \sum_{m,k=1}^{\infty}\frac{q_{m}^{\theta}(s)q_{k}^{\theta}(t-s)}{\sum_{d=1}^{\infty}q_{d}^{\theta}(t)d(d+\theta-1)(1-z)^{\theta+d-2}} \nonumber\\ 
	&\qquad{}\times\sum_{j=1}^{k}\binom{k}{j}\frac{B(j+1,\theta+m-1+k-j)}{B(1,\theta+m-1)}\mathcal{D}_{j,\theta+k-j}(z)\mathcal{D}_{j+1,\theta+m-1+k-j}(y).
\end{align}
The above expression can further be used to derive the expression when $z=0$ by taking leading order terms in $z$ and taking the limit $z\rightarrow 0$, giving
\begin{align}\label{X0Z0OneSidedMut}
	p^{0,0;t}(y;s) &= \sum_{m,k=1}^{\infty}\frac{q_{m}^{\theta}(s)q_{k}^{\theta}(t-s)}{\sum_{d=1}^{\infty}q_{d}^{\theta}(t)d(d+\theta-1)}\frac{m(m+\theta-1)k(k+\theta-1)}{(m+k+\theta-1)(m+k+\theta-2)} \mathcal{D}_{2,\theta+m-1+k-1}(y).
\end{align}
\noindent As previously mentioned, the case $\boldsymbol{\theta}=(\theta,0)$ follows from the above by considering the symmetric map $x \mapsto 1-x$.

\subsection{Neutral diffusion bridge with no mutation}
\noindent We can replicate all of the above arguments for when $\boldsymbol{\theta} = \boldsymbol{0}$ to get that if $x\in(0,1)$ and $z=0$, for $y \in [0,1)$ we have
\begin{align}\label{TransitionDensityBridge(0,0)z0}
	p^{x,0;t}(y;s) =  \sum_{m,k=1}^{\infty}\frac{q^{0}_{m}(s)q^{0}_{k}(t-s)}{\sum_{d=1}^{\infty}q_{d}^{0}(t)(1-x)^{d}}\Bigg[&\sum_{l=1}^{m-1}\mathcal{B}_{m,x}(l)\frac{B(l,m-l+k)}{B(l,m-l)}\mathcal{D}_{l,m-l+k}(y) \nonumber \\
	&\qquad{}+ (1-x)^{m}\delta_{0}(y)\Bigg]
\end{align}
\noindent whilst if $x \in (0,1)$ and $z = 1$, we get for $y \in (0,1]$
\begin{align}\label{TransitionDensityBridge(0,0)z1}
	p^{x,1;t}(y;s) =  \sum_{m,k=1}^{\infty}\frac{q^{0}_{m}(s)q^{0}_{k}(t-s)}{\sum_{d=1}^{\infty}q_{d}^{0}(t)x^{d}}\left[\sum_{l=1}^{m-1}\mathcal{B}_{m,x}(l)\frac{B(l+k,m-l)}{B(l,m-l)}\mathcal{D}_{l+k,m-l}(y) + x^{m}\delta_{1}(y)\right]
\end{align}
\noindent Note that if $z = 0$, then we cannot have $y = 1$ and similarly if $z=1$, $y$ cannot be equal to 0. \newline

\noindent Computing the transition densities conditioned on non-absorption can be done as in the one-sided mutation case illustrated above, by following the same arguments. \newline

\noindent The resulting expressions for the conditioned diffusion bridges under all three mutation regimes can be found below (recall the notation in Table \ref{BridgeNotationTable}). 

\subsection{Bridge diffusion transition density when $\boldsymbol{\theta}=\boldsymbol{0}$}
\begin{align*}
	p^{x,z;t}(y;s) &= \sum\limits_{m,k=2}^{\infty}\frac{ q_{m}^{0}(s)q_{k}^{0}(t-s)}{\sum_{d=2}^{\infty}q_{d}^{0}(t)d(d-1) }\frac{m(m-1)k(k-1)}{(m+k-1)(m+k-2)}\mathcal{D}_{2,m+k-2}(y) & ZA1 \\ & & \\
	p^{x,z;t}(y;s) &= \sum\limits_{m,k=2}^{\infty}\frac{q_{m}^{0}(s)q_{k}^{0}(t-s)}{\sum_{d=2}^{\infty}q_{d}^{0}(t)d\mathcal{D}_{1,d-1}(z)}m\sum\limits_{j=1}^{k-1}\binom{k}{j}\frac{B(j+1,m-1+k-j)}{B(1,m-1)}\mathcal{D}_{j,k-j}(z) & \\
	&\qquad{}\qquad{}\qquad{}\qquad{}\qquad{}\qquad{}\qquad{}\qquad{}\qquad{}\qquad{}\times \mathcal{D}_{j+1,m-1+k-j}(y) & ZA2 \\ & & \\
	p^{x,z;t}(y;s) &= \sum\limits_{m,k=2}^{\infty}\frac{ q_{m}^{0}(s)q_{k}^{0}(t-s)}{2q_{2}^{0}(t)}m(m-1)k(k-1)B(k,m)\mathcal{D}_{k,m}(y) & ZA3 \\ & & \\
	p^{x,z;t}(y;s) &= \sum\limits_{m,k=2}^{\infty}\frac{ q_{m}^{0}(s)q_{k}^{0}(t-s)}{2q_{2}^{0}(t)}m(m-1)k(k-1)B(m,k)\mathcal{D}_{m,k}(y) & ZB1 \\ & & \\
	p^{x,z;t}(y;s) &= \sum\limits_{m,k=2}^{\infty}\frac{ q_{m}^{0}(s)q_{k}^{0}(t-s)}{\sum_{d=2}^{\infty}q_{d}^{0}(t)d\mathcal{D}_{d-1,1}}m\sum\limits_{j=1}^{k-1}\binom{k}{j}\frac{B(m-1+j,k-j+1)}{B(m-1,1)}\mathcal{D}_{j,k-j}(y) & \\
	&\qquad{}\qquad{}\qquad{}\qquad{}\qquad{}\qquad{}\qquad{}\qquad{}\qquad{}\qquad{}\times \mathcal{D}_{m-1+j,1+k-j}(y) & ZB2 \\ & & \\
	p^{x,z;t}(y;s) &= \sum\limits_{m,k=2}^{\infty}\frac{ q_{m}^{0}(s)q_{k}^{0}(t-s)}{\sum_{d=2}^{\infty}q_{d}^{0}(t)d(d-1) }\frac{m(m-1)k(k-1)}{(m+k-1)(m+k-2)}\mathcal{D}_{m+k-2,2}(y) & ZB3 \\ & & \\
	p^{x,z;t}(y;s) &= \sum\limits_{m,k=2}^{\infty}\frac{q_{m}^{0}(s)q_{k}^{0}(t-s)}{\sum_{d=2}^{\infty}q_{d}^{0}(t)(d-1)\mathcal{B}_{d,x}(1)}\sum_{l=1}^{m-1}\mathcal{B}_{m,x}(l)k(k-1)\frac{B(l+1,m-l+k-1)}{B(l,m-l)} & \\
	&\qquad{}\qquad{}\qquad{}\qquad{}\qquad{}\qquad{}\qquad{}\qquad{}\qquad{}\qquad{}\times \mathcal{D}_{l+1,m-l+k-1}(y) & ZC1 \\ & & \\
	p^{x,z;t}(y;s) &= \sum\limits_{m,k=2}^{\infty}\frac{q_{m}^{0}(s)q_{k}^{0}(t-s)}{\tilde{p}(x,z;t)}\sum\limits_{l,j=1}^{m-1,k-1}\mathcal{B}_{m,x}(l)\binom{k}{j}\frac{B(l+j,m-l+k-j)}{B(l,m-l)}\mathcal{D}_{j,k-j} & \\
	&\qquad{}\qquad{}\qquad{}\qquad{}\qquad{}\qquad{}\qquad{}\qquad{}\qquad{}\qquad{}\times\mathcal{D}_{l+j,m-l+k-j}(y) & ZC2 \\
	p^{x,z;t}(y;s) &= \sum\limits_{m,k=2}^{\infty}\frac{q_{m}^{0}(s)q_{k}^{0}(t-s)}{\sum_{d=2}^{\infty}q_{d}^{0}(t)(d-1)\mathcal{B}_{d,x}(d-1)}\sum_{l=1}^{m-1}\mathcal{B}_{m,x}(l)k(k-1)\frac{B(l+k-1,m-l+1)}{B(l,m-l)} \\
	&\qquad{}\qquad{}\qquad{}\qquad{}\qquad{}\qquad{}\qquad{}\qquad{}\qquad{}\qquad{}\times\mathcal{D}_{l+k-1,m-l+1}(y) & ZC3
\end{align*}

\subsection{Bridge diffusion transition density when $\boldsymbol{\theta}=(0,\theta)$}
\begin{align*}
	p^{x,z;t}(y;s) &= \sum\limits_{m,k=1}^{\infty}\frac{ q_{m}^{\theta}(s)q_{k}^{\theta}(t-s)}{\sum_{d=1}^{\infty}q_{d}^{\theta}(t)d(d+\theta-1) }\frac{m(m+\theta-1)k(k+\theta-1)}{(m+k+\theta-1)(m+k+\theta-2)}\mathcal{D}_{2,\theta+m+k-2}(y) & OA1 \\ & & \\
	p^{x,z;t}(y;s) &= \sum_{m,k=1}^{\infty}\frac{q_{m}^{\theta}(s)q_{k}^{\theta}(t-s)}{\sum_{d=1}^{\infty}q_{d}^{\theta}(t)d\mathcal{D}_{1,\theta+d-1}(z)}m\sum_{j=1}^{k}\binom{k}{j}\frac{B(j+1,\theta+m-1+k-j)}{B(1,\theta+m-1)} & \\
	&\qquad{}\qquad{}\qquad{}\qquad{}\qquad{}\qquad{}\qquad{}\qquad{}\qquad{}\qquad{}\times\mathcal{D}_{j,\theta+k-j}(z) \mathcal{D}_{j+1,\theta+m-1+k-j}(y) & OA2 \\ & & \\
	p^{x,z;t}(y;s) &= \sum\limits_{m,k=1}^{\infty}\frac{q_{m}^{\theta}(s)q_{k}^{\theta}(t-s)}{\theta q_1}m(m+\theta-1)\frac{B(k+1,\theta+m-1)}{B(k,\theta)}\mathcal{D}_{k+1,\theta+m-1}(y) & OA3 \\ & & \\
	p^{x,z;t}(y;s) &= \sum\limits_{m,k=1}^{\infty}\frac{q_{m}^{\theta}(s)q_{k}^{\theta}(t-s)}{\theta q_1}k(k+\theta-1)\frac{B(m+1,\theta+k-1)}{B(m,\theta)}\mathcal{D}_{m+1,\theta+k-1}(y) & OB1 \\ & & \\
	p^{x,z;t}(y;s) &= \sum\limits_{m,k=1}^{\infty}\frac{q_{m}^{\theta}(s)q_{k}^{\theta}(t-s)}{\sum_{d=1}^{\infty}q_{d}^{\theta}(t)\mathcal{D}_{d,\theta}}\sum\limits_{j=1}^{k}\binom{k}{j}\frac{B(m+j,\theta+k-j)}{B(m,\theta)}\mathcal{D}_{j,\theta+k-j}(z) & \\ 
	&\qquad{}\qquad{}\qquad{}\qquad{}\qquad{}\qquad{}\qquad{}\qquad{}\times\mathcal{D}_{m+j,\theta+k-j}(y) & OB2 \\ & & \\
	p^{x,z;t}(y;s) &= \sum\limits_{m,k=1}^{\infty} \frac{q_{m}^{\theta}(s)q_{k}^{\theta}(t-s)}{\sum_{d=1}^{\infty}q_{d}^{\theta}(t)\frac{1}{B(d,\theta)}}\frac{B(m+k,\theta)}{B(m,\theta)B(k,\theta)}\mathcal{D}_{m+k,\theta}(y) & OB3 \\ & & \\
	p^{x,z;t}(y;s) &= \sum\limits_{m,k=1}^{\infty}\frac{q_{m}^{\theta}(s)q_{k}^{\theta}(t-s)k(k+\theta-1)}{\sum_{d=1}^{\infty}q_{d}^{\theta}(t)(d+\theta-1)\mathcal{B}_{d,x}(1)}\sum\limits_{l=1}^{m}\mathcal{B}_{m,x}(l)\frac{B(l+1,\theta+k-1+m-l)}{B(l,\theta+m-l)} & \\
	&\qquad{}\qquad{}\qquad{}\qquad{}\qquad{}\qquad{}\qquad{}\qquad{}\qquad{}\times\mathcal{D}_{l+1,\theta+k-1+m-l}(y) & OC1 \\ & & \\
	p^{x,z;t}(y;s) &= \sum_{m,k=1}^{\infty}\frac{q_{m}^{\theta}(s)q_{k}^{\theta}(t-s)}{\tilde{p}(x,z;t)}\sum\limits_{l,j=1}^{m,k}\mathcal{B}_{m,x}(l)\binom{k}{j}\frac{B(l+j,\theta+m-l+k-j)}{B(l,\theta+m-l)}\mathcal{D}_{j,\theta+k-j}(z) & \\
	&\qquad{}\qquad{}\qquad{}\qquad{}\qquad{}\qquad{}\qquad{}\qquad{}\times\mathcal{D}_{l+j,\theta+m-l+k-j} & OC2 \\ & & \\
	p^{x,z;t}(y;s) &= \sum\limits_{m,k=1}^{\infty}\frac{q_{m}^{\theta}(s)q_{k}^{\theta}(t-s)}{\sum_{d=1}^{\infty}q_{d}^{\theta}(t)\frac{x^{d}}{B(d,\theta)}}\sum_{l=1}^{m}\mathcal{B}_{m,x}(l)\frac{B(l+k,\theta+m-l)}{B(l,\theta+m-l)B(k,\theta)}\mathcal{D}_{l+k,\theta+m-l}(y) & OC3
\end{align*}

\subsection{Bridge diffusion transition density when $\boldsymbol{\theta}=(\theta_1,\theta_2)$}
\begin{align*}
	p^{x,z;t}(y;s) &= \sum\limits_{m,k=0}^{\infty}\frac{q_{m}^{\theta}(s)q_{k}^{\theta}(t-s)}{\sum_{d=0}^{\infty}q_{d}^{\theta}\frac{1}{B(\theta_1,\theta_2+d)}}\frac{B(\theta_1,\theta_2+m+k)}{B(\theta_1,\theta_2+m)B(\theta_1,\theta_2+k)}\mathcal{D}_{\theta_1,\theta_2+m+k}(y) & PA1 \\ & & \\
	p^{x,z;t}(y;s) &= \sum\limits_{m,k=0}^{\infty}\frac{q_{m}^{\theta}(s)q_{k}^{\theta}(t-s)}{\sum_{d=0}^{\infty}q_{d}^{\theta}\mathcal{D}_{\theta_1,\theta_2+d}(z)}\sum_{j=0}^{k}\binom{k}{j}\frac{B(\theta_1+j,\theta_2+m+k-j)}{B(\theta_1,\theta_2+m)}\mathcal{D}_{\theta_1+j,\theta_2+k-j}(z) & \\
	&\qquad{}\qquad{}\qquad{}\qquad{}\qquad{}\qquad{}\qquad{}\times\mathcal{D}_{\theta_1+j,\theta_2+m+k-j}(y) & PA2 \\ & & \\
	p^{x,z;t}(y;s) &= \sum\limits_{m,k=0}^{\infty}\frac{q_{m}^{\theta}(s)q_{k}^{\theta}(t-s)}{q_{0}^{\theta}\frac{1}{B(\theta_1,\theta_2)}}\frac{B(\theta_1+k,\theta_2+m)}{B(\theta_1,\theta_2+m)B(\theta_1+k,\theta_2)}\mathcal{D}_{\theta_1+k,\theta_2+m}(y) & PA3 \\ & & \\
	p^{x,z;t}(y;s) &= \sum\limits_{m,k=0}^{\infty}\frac{q_{m}^{\theta}(s)q_{k}^{\theta}(t-s)}{q_{0}^{\theta}\frac{1}{B(\theta_1,\theta_2)}}\frac{B(\theta_1+m,\theta_2+k)}{B(\theta_1+m,\theta_2)B(\theta_1,\theta_2+k)}\mathcal{D}_{\theta_1+m,\theta_2+k}(y) & PB1 \\ & & \\
	p^{x,z;t}(y;s) &= \sum\limits_{m,k=0}^{\infty}\frac{q_{m}^{\theta}(s)q_{k}^{\theta}(t-s)}{\sum_{d=0}^{\infty}q_{d}^{\theta}\mathcal{D}_{\theta_1+d,\theta_2}(z)}\sum_{j=0}^{k}\binom{k}{j}\frac{B(\theta_1+m+j,\theta_2+k-j)}{B(\theta_1+m,\theta_2)}\mathcal{D}_{\theta_1+j,\theta_2+k-j}(z) & \\
	&\qquad{}\qquad{}\qquad{}\qquad{}\qquad{}\qquad{}\qquad{}\times\mathcal{D}_{\theta_1+m+j,\theta_2+k-j}(y) & PB2 \\ & & \\
	p^{x,z;t}(y;s) &= \sum\limits_{m,k=0}^{\infty}\frac{q_{m}^{\theta}(s)q_{k}^{\theta}(t-s)}{\sum_{d=0}^{\infty}q_{d}^{\theta}\frac{1}{B(\theta_1+d,\theta_2)}}\frac{B(\theta_1+m+k,\theta_2)}{B(\theta_1+m,\theta_2)B(\theta_1+k,\theta_2)}\mathcal{D}_{\theta_1+m+k,\theta_2}(y) & PB3 \\ & & \\
	p^{x,z;t}(y;s) &= \sum_{m,k=0}^{\infty}\frac{q_{m}^{\theta}(s)q_{k}^{\theta}(t-s)}{\sum_{d=0}^{\infty}q_{d}^{\theta}(t)\frac{(1-x)^{d}}{B(\theta_1,\theta_2+d)}}\sum_{l=0}^{m}\mathcal{B}_{m,x}(l)\frac{B(\theta_1+l,\theta_2+m-l+k)}{B(\theta_1,\theta_2+k)B(\theta_1+l,\theta_2+m-l)} \\
	&\qquad{}\qquad{}\qquad{}\qquad{}\qquad{}\qquad{}\qquad{}\times\mathcal{D}_{\theta_1+l,\theta_2+m-l+k}(y) & PC1 \\ & & \\
	p^{x,z;t}(y;s) &= \sum_{m,k=0}^{\infty}\frac{q_{m}^{\theta}(s)q_{k}^{\theta}(t-s)}{\tilde{p}(x,z;t)}\sum_{l,j=0}^{m,k}\mathcal{B}_{m,x}(l)\binom{k}{j}\frac{B(\theta_1+l+j,\theta_2+m-l+k-j)}{B(\theta_1+l,\theta_2+m-l)} \\
	&\qquad{}\qquad{}\qquad{}\qquad{}\qquad{}\qquad{}\qquad{}\times\mathcal{D}_{\theta_1+j,\theta_2+k-j}(z)\mathcal{D}_{\theta_1+l+j,\theta_2+m-l+k-j}(y) & PC2 \\ & & \\
	p^{x,z;t}(y;s) &= \sum_{m,k=0}^{\infty}\frac{q_{m}^{\theta}(s)q_{k}^{\theta}(t-s)}{\sum_{d=0}^{\infty}q_{d}^{\theta}(t)\frac{x^{d}}{B(\theta_1+d,\theta_2)}}\sum_{l=0}^{m}\mathcal{B}_{m,x}(l)\frac{B(\theta_1+l+k,\theta_2+m-l)}{B(\theta_1+k,\theta_2)B(\theta_1+l,\theta_2+m-l)} \\
	&\qquad{}\qquad{}\qquad{}\qquad{}\qquad{}\qquad{}\qquad{}\times\mathcal{D}_{\theta_1+l+k,\theta_2+m-l}(y) & PC3
\end{align*}

\section{Sampling schemes}\label{SamplingSchemes}
We now detail how to obtain sample draws from the above transition densities for both the diffusion and diffusion bridge case.
\subsection{Sampling from the law of the diffusion}
\noindent Note that we need only consider the cases $\boldsymbol{\theta} = (0,\theta)$ and $\boldsymbol{\theta}=\boldsymbol{0}$, as the case $\boldsymbol{\theta}=(\theta_1,\theta_2)$ is already covered in \cite{JenkinsSpano}. Furthermore, the transition densities \eqref{TransitionDensityDiffusion(0,theta)}, \eqref{OneSidedMutDen}, \eqref{TransitionDensityDiffusion(0,0)} and \eqref{ZeroMutDen} for $x\in[0,1)$ are similar, allowing for near identical sampling schemes. To this end, we restrict our attention to the case when $\boldsymbol{\theta}=(0,\theta)$, starting with a sampling scheme for \eqref{TransitionDensityDiffusion(0,theta)}. \newline

\noindent In this case, Algorithm 1 in \cite{JenkinsSpano} can be easily adapted to sample from \eqref{TransitionDensityDiffusion(0,theta)}:
\begin{enumerate}
	\item Sample $M \sim \{ q^{\theta}_{m}(t) \}_{m \in\mathbb{N}}$,
	\item Conditionally on $M = m$, sample $L \sim \textnormal{Bin}(m,x)$,
	\item If $L = 0$ return 0, else draw $Y \sim \textnormal{Beta}(l,\theta+m-l)$.
\end{enumerate}
\noindent The only modification to Algorithm 1 in \cite{JenkinsSpano} is the sampling procedure in step 3, where the outcome $L=0$ encodes the event when the diffusion is absorbed before the sampling time. A similar strategy allows for draws from \eqref{TransitionDensityDiffusion(0,0)}, where additionally if $L=m$, then in step 3 we return $Y = 1$. \newline

\noindent For the case when the diffusion is conditioned on non-absorption, both expressions on the RHS of \eqref{OneSidedMutDen} are mixtures of beta distributions, with the weights forming a probability mass function (pmf) on $\mathbb{N}$. When the starting point $x$ is set to 0, one can return a draw $Y$ from the law of the corresponding diffusion process sampled at time $t$, by
\begin{enumerate}
	\item Drawing $M \sim \left\{ \frac{mq_{m}^{0}(t)}{\sum_{d=1}^{\infty}dq_{d}^{0}(t)} \right\}_{m\in\mathbb{N}}$
	\item Conditionally on $M=m$, drawing $Y \sim \textnormal{Beta}(1,m-1)$
\end{enumerate}
Step 2 is straightforward, whilst for step 1 the `alternating series trick' can be employed---this technique requires access to a pair of monotonic sequences of upper and lower bounds for terms in the numerator and denominator, both converging to their exact values. This is immediate for the numerator (Proposition 1 in \cite{JenkinsSpano}), whilst for the denominator we modify slightly the arguments present in Proposition 3 in \cite{JenkinsSpano} (see Section \ref{ExtensionOfDenominator} for further details). \newline

\noindent A similar sampling scheme can be used for drawing samples from the law of the diffusion started from $x\in(0,1]$, where once again appropriate monotonic upper and lower bounds can be constructed for both  numerator and denominator (see Section \ref{ExtensionOfDenominator} for more details). \newline

\noindent The above can be replicated and suitably tweaked to return samples from \eqref{ZeroMutDen}, where an additional scheme is needed to deal with the case $x=1$.

\subsection{Sampling from the law of the diffusion bridge}
\noindent Once again we start by considering the case when the bridge is allowed to be absorbed at the boundary within the time interval $(0,t)$, and the mutation parameter is given by $\boldsymbol{\theta}=(0,\theta)$. \newline

\noindent To sample from \eqref{TransitionDensityBridge(0,theta)}, we follow an approach similar to that illustrated above for the diffusion case. Recall that we need only focus on the case when $z = 0$, for otherwise the bridge cannot have been absorbed during the time interval $(0,t)$ and thus is equivalent to conditioning on non-absorption (for which an appropriate sampling scheme will be provided shortly). Observe that the RHS of \eqref{TransitionDensityBridge(0,theta)} can be viewed as a mixture of beta distributions, with the mixture weights
\begin{align*}
	p_{m,k,l} := \begin{cases}
		\frac{q^{\theta}_{m}(s)q^{\theta}_{k}(t-s)}{\sum_{d=1}^{\infty}q^{\theta}_{d}(t)(1-x)^{d}}\mathcal{B}_{m,x}(l)\frac{B(l,\theta+m-l+k)}{B(l,\theta+m-l)}\mathcal{D}_{l,\theta+m-l+k}(y) & m,k \in \mathbb{N}, l \in \{1,\dots,m\} \\
		\\
		\frac{q^{\theta}_{m}(s)q^{\theta}_{k}(t-s)}{\sum_{d=1}^{\infty}q^{\theta}_{d}(t)(1-x)^{d}}(1-x)^{m} & m,k \in \mathbb{N}, l = 0 \\
		\\
		0 & \textnormal{otherwise}
	\end{cases}
\end{align*}
\noindent defining a pmf on a subspace of $\mathbb{N}^{3}$ (for more details, refer to \cite[Section 3]{JenkinsSpano}). Individual monotonic upper and lower bounds can be constructed for $\{q^{\theta}_{m}(s)\}_{m\in\mathbb{N}}$, $\{q^{\theta}_{k}(t-s)\}_{k\in\mathbb{N}}$ and $\sum_{d=1}^{\infty}q^{\theta}_{d}(t)(1-x)^{d}$ (see Section \ref{ExtensionOfDenominator} for full details with regards to this last quantity), and subsequently these can be put together to obtain monotonic upper and lower bounds on the $\{p_{m,k,l} \}_{m,k,l \in \mathbb{N}}$. Thus the alternating series trick lends itself to return a draw $(M,K,L) \sim \{p_{m,k,l}\}_{m,k,l,\in\mathbb{N}}$, and we use this to draw the relevant sample diffusion bridge point: 
\begin{enumerate}
	\item Sample $(M,K,L) \sim \{p_{m,k,l}\}_{m,k,l,\in\mathbb{N}}$
	\item If $L = 0$, return $Y = 0$, else return $Y \sim \textnormal{Beta}(l,\theta+m-l)$.
\end{enumerate}

\noindent A similar scheme can be derived for the case $\boldsymbol{\theta} = (\theta,0)$ by symmetric arguments, whereas for $\boldsymbol{\theta} = \boldsymbol{0}$ the above can be replicated with the only significant difference being that if $L = m$, then the routine returns $Y = 1$ in step 2. \newline

\noindent We now turn to the case when the diffusion bridge is conditioned on \emph{not} being absorbed within the time interval $(0,t)$. Corollary 2 in \cite{GriffithsJenkinsSpano} gives us that Wright--Fisher diffusion bridges with mutation parameters either $\boldsymbol{\theta} = \mathbf{0}$ or $\boldsymbol{\theta}=(0,\theta)$ are equal (in distribution) to Wright--Fisher bridges with mutation parameters $\boldsymbol{\theta}=(2,2)$ or $\boldsymbol{\theta}=(2,\theta)$ respectively. Thus from now on we shall focus our attention solely on the case when $\theta_1, \theta_2 > 0$. \newline

\noindent The strategy will be very close to the one developed above and based on the method found in \cite[Section 3]{JenkinsSpano}. As in the unconditioned bridge case, the diffusion bridge densities (PA1)--(PC3) can be viewed as mixtures of beta distributions, where the mixture weights now define a pmf on subspaces of $\mathbb{N}^{4}$ and whose exact form depends on the particular density being considered. \newline

\noindent As diffusion bridges are invariant under time reversal, a diffusion bridge that goes from $x$ to $y$ in time $s$ and then proceeds to terminate at $z$ at time $t$ has the same law as a diffusion bridge that starts at $z$, proceeds to $y$ at time $t-s$ and ends at $x$ at time $t$. This, coupled with symmetric arguments allows us to sample from the various transition densities (PA1)--(PC3) using just four different schemes, which we group as follows:

\begin{enumerate}
	\item Start and endpoints are the same (i.e.\ equations (PA1) and (PB3)). 
	\item Start and endpoints are opposite boundary points (i.e.\ equations (PA3) and (PB1)). 
	\item $z$ is in the interior of $[0,1]$, and the starting point is at one of the boundary points (i.e.\ equations (PA2), (PB2), (PC1) and (PC3) --- note that for (PC1) and (PC3) we make use of time reversal). 
	\item Start and endpoints are both inside the interior of $[0,1]$ (i.e.\ equation (PC2)).
\end{enumerate}
Using the above groupings, it remains to show that the resulting four different transition densities consist of mixture weights $\{p_{m,k,l,j}\}_{m,k,l,j\in\mathbb{N}}$ for which one can obtain monotonic sequences of upper and lower bounds. Again constructing these quantities for the numerator is straightforward, whereas the denominator is tackled in Section \ref{ExtensionOfDenominator} (by suitably modifying Proposition 4 from \cite{JenkinsSpano}). \newline

\noindent We point out that for both the diffusion and diffusion bridge case, numerical instabilities present when computing contributions to the infinite series representation of the probabilities $\{q_{m}^{\theta}(t)\}_{m\in\mathbb{N}}$ for small time increments prompt the use of approximations for these quantities. For more details, please refer to Section \ref{SmallTimes}.

\section{Simulation of non-neutral paths}\label{NonneutralPaths}
As observed in \cite[Section 5]{JenkinsSpano}, the neutral Wright--Fisher process can be used as a proposal distribution in an appropriate rejection sampler to returns exact draws from a non-neutral process. We give a brief overview for completeness. \newline

\noindent Denote by $\mathbb{WF}_{\sigma,\boldsymbol{\theta}}^{x_0}$ the law induced by the solution $X^T := (X_t)_{t=0}^{T}$ to the SDE given by equation (1) in the main paper on the space of continuous functions mapping $[0,T]$ into $[0,1]$ for some fixed time $T$, and by $\mathbb{WF}_{0,\boldsymbol{\theta}}^{x_0}$ the corresponding neutral law. The Radon--Nikodym derivative between these two laws is given by
\begin{align}\label{RN}
	\frac{d\mathbb{WF}_{\sigma,\boldsymbol{\theta}}^{x_0}}{d\mathbb{WF}_{0,\boldsymbol{\theta}}^{x_0}}(X^T) \propto \exp\left\{\tilde{A}(X_T)-\tilde{A}^{+}\right\}\exp\left\{-\int_{0}^{T}\left(\varphi(X_s)-\varphi^{-}\right)dt\right\}
\end{align}
where $\tilde{A}(x) := (\sigma / 2)\int_{0}^{x}\eta(z)dz$ with $\tilde{A}(x)\leq\tilde{A}^{+}$ for any $x\in[0,1]$, and 
\begin{align}
	\varphi(x) := \frac{\sigma}{4}\left[\left(-\theta_2 x + \theta_1(1-x)\right)\eta(x) + x(1-x)\left(\frac{\sigma}{2}\eta^{2}(x) + \eta'(x)\right)\right].
\end{align}
Observe that $\varphi(x)$ is a polynomial in $x$ (in view of $\eta(x)$ being a polynomial), and thus we can always find $\varphi^{-}$ and $\varphi^{+}$ such that $\varphi^{-}\leq \varphi(x) \leq\varphi^{+}$ on [0,1], and similarly for $\tilde{A}(x) \leq \tilde{A}^{+}$. The first term on the RHS of \eqref{RN} can be viewed as a simple $e^{\tilde{A}(X_T)-\tilde{A}^{+}}$-coin flip, whilst the second term is precisely the probability that all points in a unit rate Poisson point process $\Phi=\{(t_i,\omega_i)\}_{i\in\mathbb{N}}$ on $[0,T]\times[0,\infty)$ lie in the epigraph of the map $t \mapsto {\varphi(x)-\varphi^{-}}$. Furthermore, because $\varphi(x) \leq \varphi^{+}$, we can thin $\Phi$ to a Poisson point process on $[0,T]\times[0,\varphi^{+}-\varphi^{-}]$ and hence simulate an event with probability given by the RHS of \eqref{RN}. \newline

\noindent This allows for exact paths from the non-neutral Wright--Fisher process to be returned by first simulating the appropriate Poisson point process, subsequently generating draws from the neutral Wright--Fisher process at the time-stamps returned by the Poisson point process, checking whether the generated points all lie in the appropriate region, and and finally running a simple $e^{\tilde{A}(X_T)-\tilde{A}^{+}}$-coin flip. \newline

\noindent In order to calculate $\tilde{A}^{+}$, $\varphi^{-}$ and $\varphi^{+}$, a \texttt{Polynomial} class (with associated root finding algorithm implementing the Jenkins--Traub algorithm, developed by Bill Hallahan\footnote{https://www.codeproject.com/Articles/674149/A-Real-Polynomial-Class-with-Root-Finder}) was used. Whilst the implementation of this routine should work for polynomials of any degree, only polynomials $\eta(x)$ of degree at most 25 were allowed to ensure that the code returns reliable output within a reasonable amount of time.   

\section{Monotonic upper and lower bounds for the new denominators} \label{ExtensionOfDenominator}
\noindent In this section we show that the denominators in the transition densities for both the diffusion (equations \eqref{OneSidedMutDen} and \eqref{ZeroMutDen}) and the diffusion bridge (equations \eqref{TransitionDensityBridge(0,theta)}, \eqref{TransitionDensityBridge(0,0)z0} and \eqref{TransitionDensityBridge(0,0)z1}, as well as equations (PA1) through to (PC3)) allow for monotonic sequences of upper and lower bounds. By comparing \eqref{OneSidedMutDen} and \eqref{ZeroMutDen}, as well as \eqref{TransitionDensityBridge(0,theta)}, \eqref{TransitionDensityBridge(0,0)z0} and \eqref{TransitionDensityBridge(0,0)z1}, it becomes clear that we can consider solely the denominator $\sum_{d=2}^{\infty}q_{d}^{0}(t)(1-x^{d}-(1-x)^{d})$ as the proofs for the other quantities follow using near identical arguments. \newline

\noindent We further emphasise once more (as done in Section \ref{SamplingSchemes}), that for the bridge case we need only need consider the cases (PA1), (PA2), (PA3), and (PC2) in order to be able to simulate draws from any Wright--Fisher diffusion bridge process. Additionally, observe that the denominator for (PA3) is given by $q_{0}^{\theta}(t)$ for which monotonic upper and lower bounds are immediate, whereas (PC2) is precisely the case covered by Proposition 3 in \cite{JenkinsSpano}. It therefore remains to find monotonically converging sequences of upper and lower bounds for each of: \newline
\begin{tabularx}{\textwidth}{XX}
	\begin{equation}\label{ExtensionLine1a}
		\sum_{d=0}^{\infty}q_{d}^{\theta}(t)d,
	\end{equation}
	&
	\begin{equation}\label{ExtensionLine1b}
		\sum_{d=0}^{\infty}q_{d}^{\theta}(t)(1-x^{d}-(1-x)^{d}),
	\end{equation}
	\\
	
	\begin{equation}\label{ExtensionLine2a}
		\sum_{d=0}^{\infty}q_{d}^{\theta}(t)\frac{1}{B(\theta_1+d,\theta_2)},
	\end{equation}
	&
	\begin{equation}\label{ExtensionLine2b}
		\sum_{d=0}^{\infty}q_{d}^{0}(t)\frac{z^{\theta_1+d-1}(1-z)^{\theta_2-1}}{B(\theta_1+d,\theta_2)}.
	\end{equation}
\end{tabularx}\par

\noindent Further, by equation (5) in \cite{GriffithsJenkinsSpano}, \eqref{ExtensionLine1a} admits the required monotonic bounds through analytic expressions for the falling factorial moments of the ancestral process (see Theorem 5 in \cite{GriffithsJenkinsSpano} and the preceding paragraphs for full details). 

\subsection{Calculations for \eqref{ExtensionLine1b}} \label{DiffDenom}
\noindent Dealing with \eqref{ExtensionLine1b} requires some more work; we start by observing that $(1-x^m-(1-x)^m) = \sum_{l=1}^{m-1}\binom{m}{l}x^{l}(1-x)^{m-l}$. We can modify the arguments in Lemma 1 in \cite{JenkinsSpano} to deduce that for $L_m \sim \textnormal{Bin}(m,x)$ we have that
\begin{align}\label{Lemma1}
	\sum_{l=1}^{m}\mathbb{P}\left[L_{m+1} = l\right] \leq (x+2)\sum_{l=1}^{m-1}\mathbb{P}\left[L_{m} = l\right].
\end{align}
To see this, observe that for $l \leq \lfloor mx \rfloor$
\begin{align}\label{Lemma1P1}
	\mathbb{P}\left[L_{m+1}=l\right] = \frac{m+1}{m+1-l}(1-x)\mathbb{P}\left[L_{m}=l\right] \leq \mathbb{P}\left[L_{m}=l\right],
\end{align}
where in the last inequality we used the fact that $l \leq \lfloor mx \rfloor \leq mx$. When $l \geq \lfloor mx \rfloor$, we have that
\begin{align}\label{Lemma1P2}
	\mathbb{P}\left[L_{m+1} = l+1\right] = \frac{m+1}{l+1}x\mathbb{P}\left[L_{m}=l\right] \leq (x+1)\mathbb{P}\left[L_{m}=l\right]
\end{align}
by observing that when $mx > 1$, $\frac{m+1}{l+1} \leq \frac{m+1}{mx} \leq 1+\frac{1}{x}$, whereas for $mx \leq 1$, $\frac{m+1}{l+1} \leq m+1 \leq 1+\frac{1}{x}$. Summing together \eqref{Lemma1P1} and \eqref{Lemma1P2} (and noting the double counting happening at $\lfloor mx \rfloor$) gives the result. With this in hand we can apply Proposition 3 in \cite{JenkinsSpano}, this time setting $c_{k,m} := b_{k}^{(t,\theta)}(m)\sum_{l=1}^{m-1}\mathbb{P}[L_{m}=l]$, and replacing $K^{(x,z)}$ with $x+2$. \newline

\subsection{Calculations for \eqref{ExtensionLine2b}} \label{NewDenom1}
\noindent Note first that
\begin{align*}
	&\sum_{m=1}^{\infty}q_{m}^{0}(t)\frac{z^{\theta_1+m-1}(1-z)^{\theta_2-1}}{B(\theta_1+m,\theta_2)} \\
	&= \sum_{m=1}^{\infty}\left(\sum_{k=m}^{\infty}(-1)^{k-m}\frac{\theta+2k-1}{m!(k-m)!}\frac{(\theta+k+m-2)!}{(\theta+m-1)!}e^{-\frac{k(\theta+k-1)t}{2}}\right)\frac{z^{\theta_1+m-1}(1-z)^{\theta_2-1}}{B(\theta_1+m,\theta_2)},
\end{align*}
and observe that the terms inside the inner summation (excluding the factor $(-1)^{k-m}$) correspond to the terms $b_{k}^{(t,\theta)}(m)$ as defined in Proposition 1 in \cite{JenkinsSpano}. Let $c_{k,m} := b_{k}^{(t,\theta)}(m)\frac{z^{\theta_1+m-1}(1-z)^{\theta_2-1}}{B(\theta_1+m,\theta_2)}$, and observe that we can re-write the above as $\sum_{i=0}^{\infty}(-1)^{i}d_{i}$ with 
\begin{align}
	d_{2m} = \sum_{j=0}^{m}c_{m+j,m-j}, & & d_{2m+1} = \sum_{j=0}^{m}c_{m+1+j,m-j}.
\end{align}
For $\varepsilon\in(0,1)$ fixed, denote by 
\begin{align}
	E^{t} &:= \inf\left\{ m \geq 0 : 2j \geq C_{m-j}^{t} \textnormal{ for all } j=0,\dots,m \right\}, \\
	D_{\varepsilon}^{t,\theta} &:= \inf\left\{ k \geq \left(\frac{1}{t} - \frac{\theta+1}{2}\right) \vee 0 : (\theta+2k+1)e^{-\frac{(\theta+2k)t}{2}} < 1-\varepsilon \right\}.
\end{align}
\noindent Proposition 3 in \cite{JenkinsSpano} can be restated for the case we consider here as follows.
\begin{prop}\label{Prop1}
	For all $m > D_{\varepsilon}^{t,\theta} \vee E^{t} \vee \lfloor\frac{\theta+2}{\varepsilon(\theta_1+1)} - 1\rfloor$,
	\begin{align}
		d_{2m+2} < d_{2m+1} < d_{2m}.
	\end{align}
\end{prop}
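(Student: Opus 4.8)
The plan is to establish the two inequalities $d_{2m+2} < d_{2m+1}$ and $d_{2m+1} < d_{2m}$ by showing that the sequence $(d_i)_{i\geq 0}$ becomes eventually decreasing, which is precisely the condition needed for the alternating series trick to apply to $\sum_{i=0}^{\infty}(-1)^{i}d_i$. Since this is explicitly a restatement of Proposition 3 in \cite{JenkinsSpano} adapted to the new summand $c_{k,m} = b_{k}^{(t,\theta)}(m)\frac{z^{\theta_1+m-1}(1-z)^{\theta_2-1}}{B(\theta_1+m,\theta_2)}$, the strategy is to mirror the structure of that proof while controlling the extra $m$-dependent factor $g(m) := \frac{z^{\theta_1+m-1}(1-z)^{\theta_2-1}}{B(\theta_1+m,\theta_2)}$ that multiplies the coalescent term $b_k^{(t,\theta)}(m)$.

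First I would recall that the threshold $D_{\varepsilon}^{t,\theta}$ guarantees that the \emph{time-dependent} factor $e^{-k(\theta+k-1)t/2}$ inside $b_k^{(t,\theta)}(m)$ decays fast enough that, for $k$ large, consecutive terms in $k$ shrink by a factor below $1-\varepsilon$; this is exactly the mechanism by which $b_{k+1}^{(t,\theta)}(m) < b_k^{(t,\theta)}(m)$ in the relevant regime. The role of $E^{t}$ is to ensure that in the diagonal sums defining $d_{2m}$ and $d_{2m+1}$, every index $j$ in the range contributes a genuinely decreasing term, so that the diagonal sums inherit monotonicity from their constituent $c_{k,m}$. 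These two ingredients are taken directly from \cite{JenkinsSpano}; the genuinely new work is handling $g(m)$.

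The main obstacle, and the reason for the third threshold $\lfloor\frac{\theta+2}{\varepsilon(\theta_1+1)} - 1\rfloor$, is that $g(m)$ is \emph{not} decreasing in $m$ in general: the ratio $g(m+1)/g(m) = z \cdot \frac{B(\theta_1+m,\theta_2)}{B(\theta_1+m+1,\theta_2)} = z\cdot\frac{\theta_1+m+\theta_2-1}{\theta_1+m-1}$ exceeds $1$ when $z$ is close to $1$. Thus the extra factor can partially counteract the decay coming from $b_k^{(t,\theta)}$, and one must verify that for $m$ past the stated threshold the combined term still decreases. I would bound this ratio uniformly: since $\frac{\theta_1+m+\theta_2-1}{\theta_1+m-1} = 1 + \frac{\theta_2}{\theta_1+m-1} \leq 1 + \frac{\theta}{\theta_1+1}$ once $m \geq 2$, and more sharply $\leq 1+\varepsilon$ once $m$ exceeds roughly $\frac{\theta}{\varepsilon} $, the growth of $g(m)$ is controlled at rate $(1+\varepsilon)$, which the threshold $\lfloor\frac{\theta+2}{\varepsilon(\theta_1+1)}-1\rfloor$ is calibrated to enforce. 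This growth is then dominated by the $(1-\varepsilon)$-contraction from $D_{\varepsilon}^{t,\theta}$, so their product stays strictly below $1$.

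Finally I would assemble the pieces: for $m$ exceeding all three thresholds, combine the diagonal-sum monotonicity (from $E^t$) with the term-by-term ratio estimate $\frac{c_{k+1,m}}{c_{k,m}} \cdot \frac{g(m+1)}{g(m)} < 1$ to conclude $d_{2m+2} < d_{2m+1}$ and $d_{2m+1} < d_{2m}$, exactly as in the original proposition but with the augmented constant absorbing $g(m)$. Because the argument is structurally identical to \cite{JenkinsSpano} once $g(m)$ is tamed, I would state the adaptation explicitly and defer the routine index bookkeeping in the diagonal sums to that reference, highlighting only the new ratio bound as the essential modification.
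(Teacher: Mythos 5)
Your proposal correctly identifies the overall skeleton (the diagonal sums $d_{2m}=\sum_{j=0}^{m}c_{m+j,m-j}$ and $d_{2m+1}=\sum_{j=0}^{m}c_{m+1+j,m-j}$, with along-diagonal decrease supplied by $E^{t}$ via Proposition 1 of \cite{JenkinsSpano}), but it misses the one step that carries the real content and mischaracterises the role of the third threshold. Note first that in every along-diagonal comparison the two terms being compared share the \emph{same} second index, so your factor $g(m)=z^{\theta_1+m-1}(1-z)^{\theta_2-1}/B(\theta_1+m,\theta_2)$ multiplies both sides and cancels: there is no cumulative competition between the ``growth of $g$'' and the $(1-\varepsilon)$ time-decay to be resolved, and your central mechanism --- bounding $\frac{c_{k+1,m}}{c_{k,m}}\cdot\frac{g(m+1)}{g(m)}$ by $(1-\varepsilon)(1+\varepsilon)<1$ --- addresses a comparison that never arises. (Incidentally, your ratio is off by one: $g(m+1)/g(m)=z\,B(\theta_1+m,\theta_2)/B(\theta_1+m+1,\theta_2)=z(\theta+m)/(\theta_1+m)$, not $z(\theta_1+m+\theta_2-1)/(\theta_1+m-1)$.) The second index changes in exactly one place: $d_{2m+2}$ has one more summand than $d_{2m+1}$, since $d_{2m+2}=c_{m+1,m+1}+\sum_{j=0}^{m}c_{m+2+j,m-j}$, and after the termwise comparison for $j\geq 1$ what remains to be shown is the \emph{sum} inequality $c_{m+1,m+1}+c_{m+2,m}<c_{m+1,m}$ --- two terms jointly dominated by a single one. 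Pairwise ratios below $1$, which is all your argument can deliver, do not give this: $c_{m+1,m}$ cannot both dominate $c_{m+2,m}$ with no room to spare and also absorb the strictly positive corner term $c_{m+1,m+1}$. This is exactly the ``routine index bookkeeping'' you propose to defer to the reference; it is the non-routine step.

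The paper's proof splits the budget as $\varepsilon+(1-\varepsilon)$: the threshold $D_{\varepsilon}^{t,\theta}$ is invoked exactly once, at $k=m+1$, to obtain $c_{m+2,m}<(1-\varepsilon)c_{m+1,m}$, while the third threshold $m>\lfloor\frac{\theta+2}{\varepsilon(\theta_1+1)}-1\rfloor$ is calibrated not to tame $g(m+1)/g(m)\leq 1+\varepsilon$ but to make the full corner ratio \emph{small}: $\frac{c_{m+1,m+1}}{c_{m+1,m}}=\frac{\theta+2m}{(m+1)(\theta+m)}\,z\,\frac{B(\theta_1+m,\theta_2)}{B(\theta_1+m+1,\theta_2)}\leq\frac{\theta+2}{(m+1)(\theta_1+1)}<\varepsilon$. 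The essential ingredient is the $O(1/m)$ factor contributed by $b_{m+1}^{(t,\theta)}(m+1)/b_{m+1}^{(t,\theta)}(m)$; the $g$-ratio on its own is bounded but bounded away from $0$, so your $(1+\varepsilon)$ control of it could never yield the required $<\varepsilon$ smallness. As written, your outline would establish $d_{2m+1}<d_{2m}$ but fails to prove $d_{2m+2}<d_{2m+1}$.
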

\begin{proof}
	The proof proceeds as in \cite{JenkinsSpano}. As $m > E^{t}$, $2j \geq C_{m-j}^{t}$, and thus by Proposition 1 in \cite{JenkinsSpano} $b_{m+j+1}(m-j) < b_{m+j}(m-j)$. Multiplying both sides of the inequality by $\frac{z^{\theta_1+m-1}(1-z)^{\theta_2-1}}{B(\theta_1+m,\theta_2)}$ and summing over $j$ gives
	\begin{align*}
		d_{2m+1} = \sum_{j=0}^{m} c_{m+j+1,m-j} < \sum_{j=0}^{m} c_{m+j,m-j} = d_{2m}.
	\end{align*}
	The above reasoning also leads to
	\begin{align*}
		\sum_{j=1}^{m} c_{m+j+2,m-j} < \sum_{j=1}^{m}c_{m+j+1,m-j},
	\end{align*}
	which coupled with $c_{m+1,m+1} + c_{m+2,m} < c_{m+1,m}$ (which still needs to be proved) gives the required $d_{2m+2} < d_{2m+1}$. Now observe that
	\begin{align*}
		\frac{c_{k+1,m}}{c_{k,m}} = \frac{b_{k+1}^{(t,\theta)}(m)}{b_{k}^{(t,\theta)}(m)} = \frac{\theta+m+k-1}{k-m+1}\frac{\theta+2k+1}{\theta+2k-1}e^{-\frac{(\theta+2k)t}{2}} \leq (\theta+2k+1)e^{-\frac{(\theta+2k)t}{2}},
	\end{align*}
	\noindent setting $k=m+1$ and observing that $m > D_{\varepsilon}^{t}$, we get that $c_{m+2,m} < (1-\varepsilon)c_{m+1,m}$. Similarly
	\begin{align*}
		\frac{c_{m+1,m+1}}{c_{m+1,m}} = \frac{\theta+2m}{(m+1)(\theta+m)}z\frac{B(\theta_1+m,\theta_2)}{B(\theta_1+m+1,\theta_2)} \leq \frac{\theta+2}{(m+1)(\theta_1+1)} < \varepsilon
	\end{align*}
	\noindent if $m > \lfloor \frac{\theta+2}{\varepsilon(\theta_1+1)}-1\rfloor$. The result follows. 
\end{proof}

\subsection{Calculation for \eqref{ExtensionLine2a}} \label{NewDenom6}
\noindent The same arguments used above apply (omitting the presence of the $z^{\theta_1+d-1}(1-z)^{\theta_2-1}$, which simplifies the proof slightly).

\section{Approximations for small times and implementation}\label{SmallTimes}
\noindent Whenever the simulation time increments become too small, numerical instabilities crop up when computing contributions to the quantities $q_{m}^{\theta}(t)$. Thus (as done in \cite{JenkinsSpano}), adequate approximations are necessary which make use of the small time asymptotics of $q_{m}^{\theta}(t)$. Theorem 4 in \cite{Griffiths84} gives that as $t\rightarrow 0$, the ancestral block counting process of the coalescent is well approximated by a Gaussian random variable with mean
\begin{align*}
	\mu = \frac{2\eta}{t}, & & \textnormal{ where }\eta=\begin{cases} 1 & \beta = 0 \\ \frac{\beta}{e^{\beta}-1} & \beta\neq0 \end{cases}, & & \textnormal{ and } \beta = \frac{(\theta-1)t}{2},
\end{align*}
and variance
\begin{align*}
	\sigma^{2} = \begin{cases} \frac{2}{3t} & \beta=0 \\ \frac{2\eta}{t}(\frac{\eta+\beta}{\beta})^{2}\left(1+\frac{\eta}{\eta+\beta}-2\eta\right) & \beta\neq0 \end{cases}
\end{align*}
(note that Theorem 4 in \cite{Griffiths84} is missing a factor of $\beta^{-2}$). In light of this, whenever the time increment $t$ falls below a specific threshold $\varepsilon_{G}$, EWF makes use of the above Gaussian approximation, such that the probabilities $q_{m}^{\theta}(t)$ are replaced by their (suitably rounded) Gaussian counterparts. In the current implementation of EWF, the threshold $\varepsilon_{G}$ was set to 0.08 after extensive testing as it was found that such a cutoff ensured a suitable trade-off between retaining precision by employing the approximation only when necessary, and having a robust and efficient implementation. \newline

\noindent For the diffusion bridge case we apply similar approximations to both $q_{m}^{\theta}(s)$ and $q_{k}^{\theta}(t-s)$, but we also introduce an additional threshold $\varepsilon_{D} < \varepsilon_{G}$ below which we approximate draws for the law of a diffusion \emph{bridge} through draws from the law of a \emph{diffusion}. This is necessary due to the fact that the mean $\mu$ given above for the Gaussian approximation is inversely proportional to the time increment $t$. Thus if either of the time increments $s$ or $t-s$ is small, the pmf $\{p_{m,k,l,j}\}_{m,k,l,j\in\mathbb{N}}$ spreads out very thinly over $\mathbb{N}^{4}$ leading to a loss of precision due to the small quantities involved coupled with infeasible run times, even when the above illustrated Gaussian approximations are used. \newline

\noindent In such cases (i.e.\ $s < \varepsilon_{D}$ or $t-s < \varepsilon_{D}$), EWF first simulates a draw from the corresponding Wright--Fisher diffusion started at $x$ and sampled at time $s$, computes the increment between the generated draw $Y'$ and the start point $x$, and superimposes it onto a linear interpolation between the left and right end-points $x$ and $z$ to generate the required draw $Y$. The linear interpolation employed explicitly make use of time increments $s$ and $t-s$ to account for the fact that the returned draw $Y$ should come from a diffusion bridge starting at $x$ and ending at $z$, with appropriate mechanisms in place to ensure that the output remains within the interval $[0,1]$. When either $s \in [\varepsilon_{D},\varepsilon_{G})$ or $t-s \in [\varepsilon_{D},\varepsilon_{G})$, the above detailed (rounded) Gaussian approximations are used for the corresponding $\{q_{i}^{\theta}\}_{i\in\mathbb{N}}$ within the appropriate time interval, whilst the standard sampling scheme is used for time increments which exceed $\varepsilon_{G}$. A threshold of 0.008 was chosen for $\varepsilon_{D}$ following extensive testing, such that the resulting implementation of EWF retained robustness and efficiency and refrained from using such approximations unless their absence led to unfeasible run times. We mention that both thresholds can be altered if desired through the fields \texttt{g1984} (for $\varepsilon_{G}$) and \texttt{bridgethreshold} (for $\varepsilon_{D}$) of the \texttt{Options} class found in the \texttt{myHelpers.h} header file (although we would advise against this). \newline

\section{Output validation}
Output was validated by generating 10,000 samples for a wide variety of cases and subsequently comparing this to a truncation of the transition density by means of Kolmogorov--Smirnov test as well as QQ-plots. We point out that we present only neutral output here as the non-neutral output is generated using the same rejection procedure as used in \cite{JenkinsSpano}. \newline

\noindent To illustrate how the transition density was truncated, consider the case (PC2), which involves a sum over four indices, two of which are infinite. By using an iterative scheme, the mode over these four indices was found and its contribution to the density for a given point $y \in [0,1]$ was calculated. Subsequently the denominator of (PC2) was evaluated up to machine precision, and an appropriate truncation level was chosen by multiplying together the resulting denominator, the mode's contribution to the density and a tolerance parameter. Similar truncations were employed for all the other diffusion and diffusion bridge cases.

\subsection{Diffusions conditioned on non-absorption}
Samples were generated using 9 different parameters setups featuring starting points $x\in\{0,0.5,1\}$, sampling times $t \in \{0.01,0.05,0.5\}$, and mutation parameter $\boldsymbol{\theta} = (0,1)$. The output is plotted below, starting with the case when $x = 0$, with the sampling time increment $t$ increasing when going left to right across plots. All of the Kolmogorov--Smirnov tests and QQ-plots below confirm that the output is indeed coming from the correct distribution.

\begin{figure}[H]
	\centering
	\subfloat
	\centering{{\includegraphics[width=.28\linewidth]{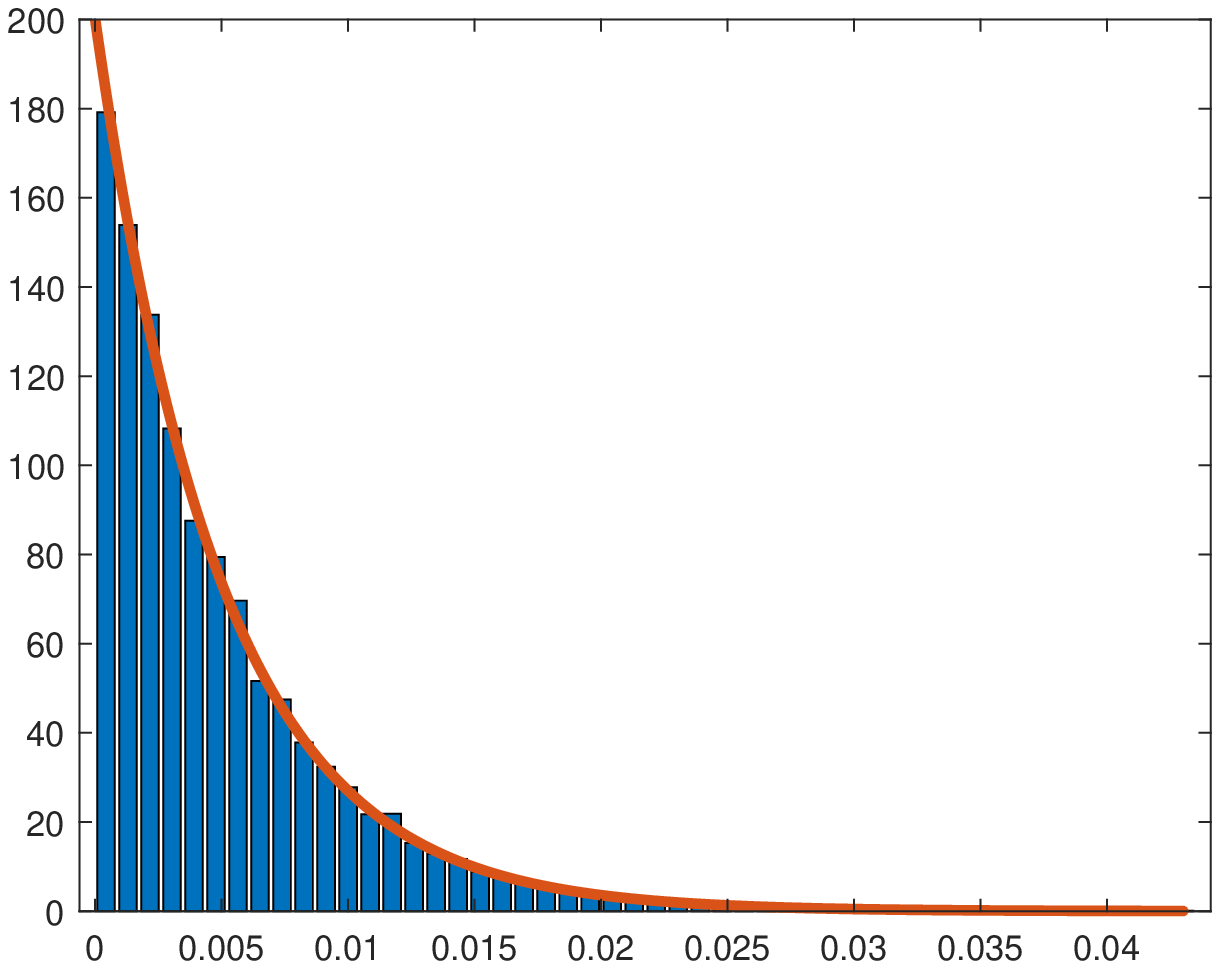} }} \hspace*{5mm}
	\subfloat
	\centering{{\includegraphics[width=.28\linewidth]{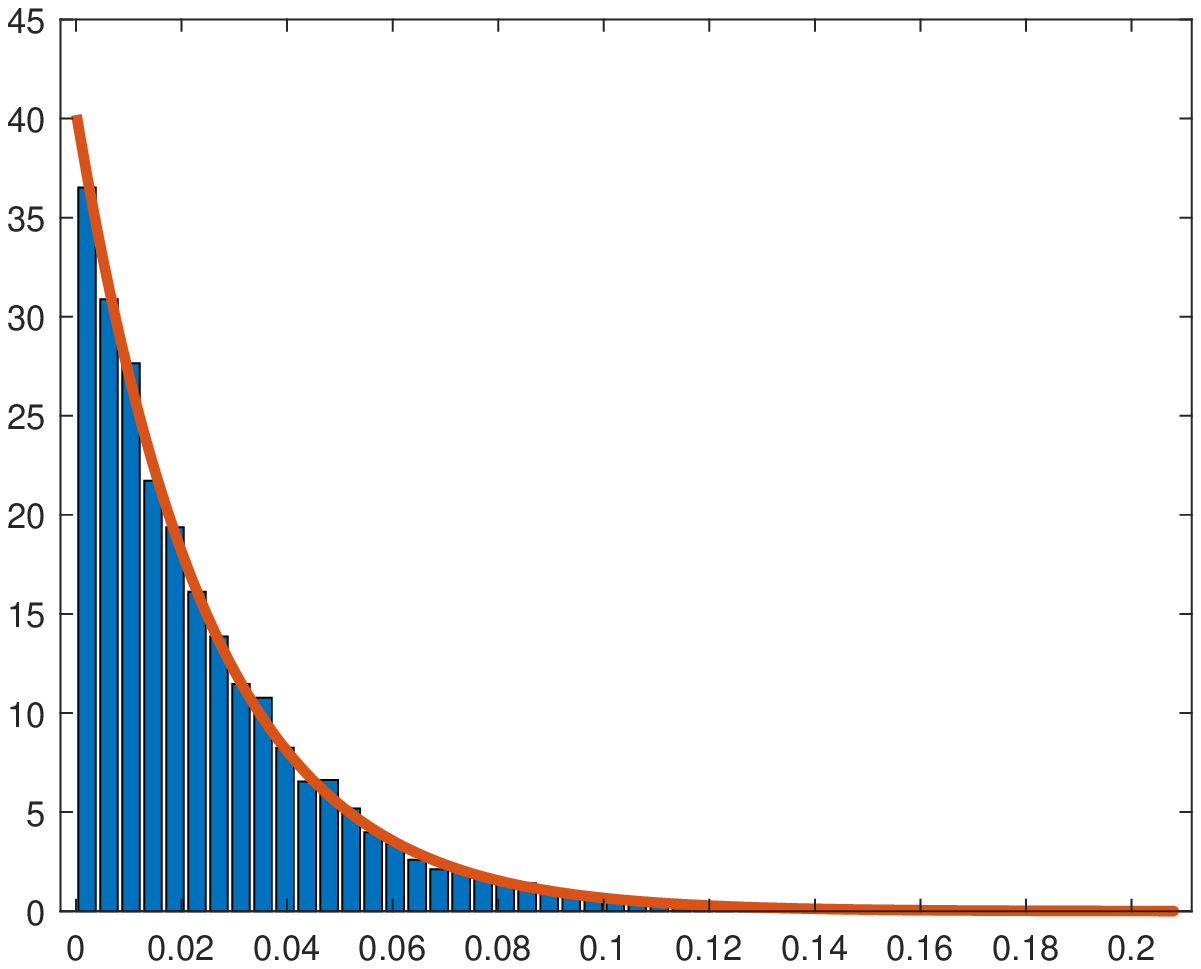} }} \hspace*{5mm}
	\subfloat
	\centering{{\includegraphics[width=.28\linewidth]{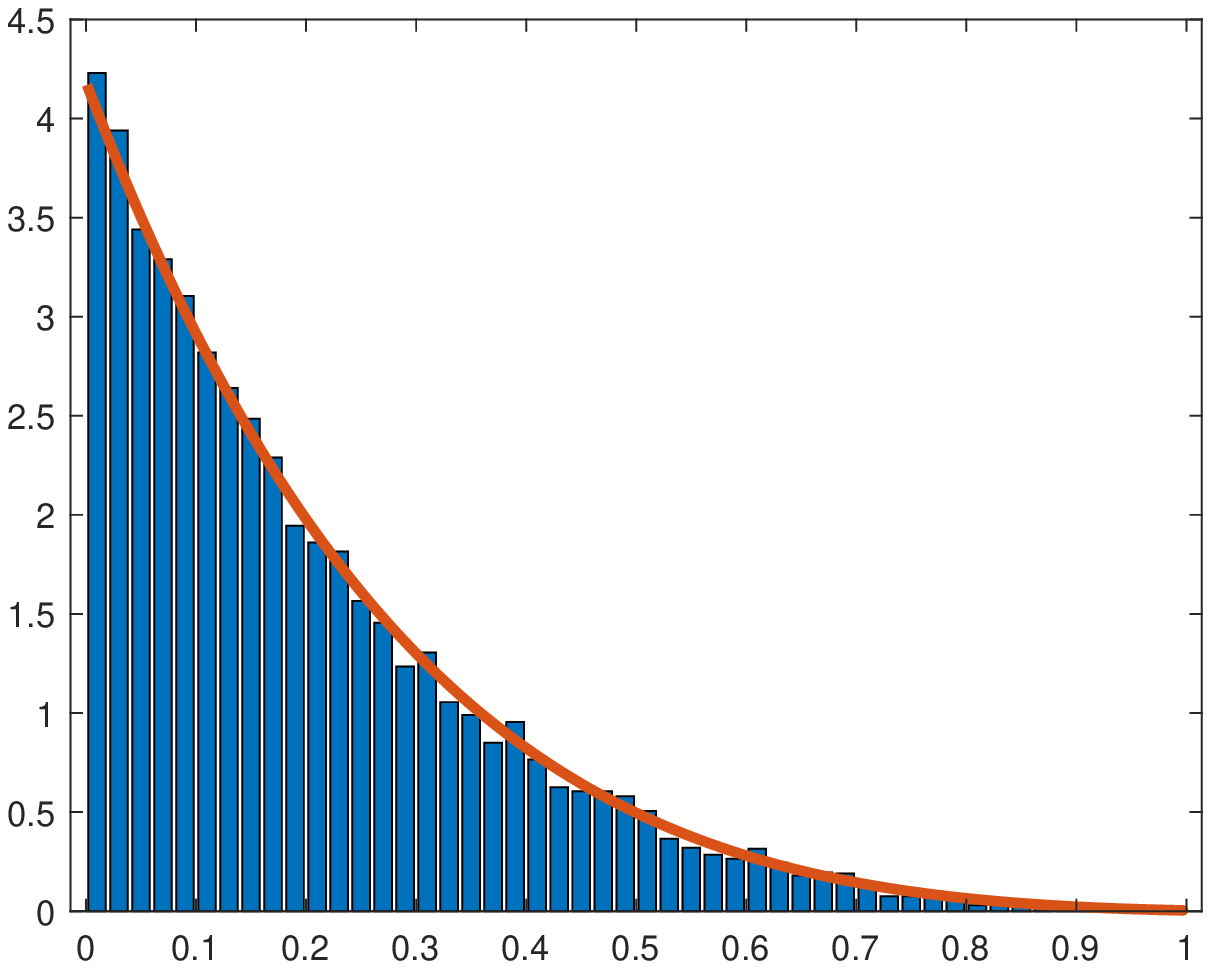}}}
	\\ \vspace*{5mm}
	\centering
	\subfloat
	\centering{{\includegraphics[width=.28\linewidth]{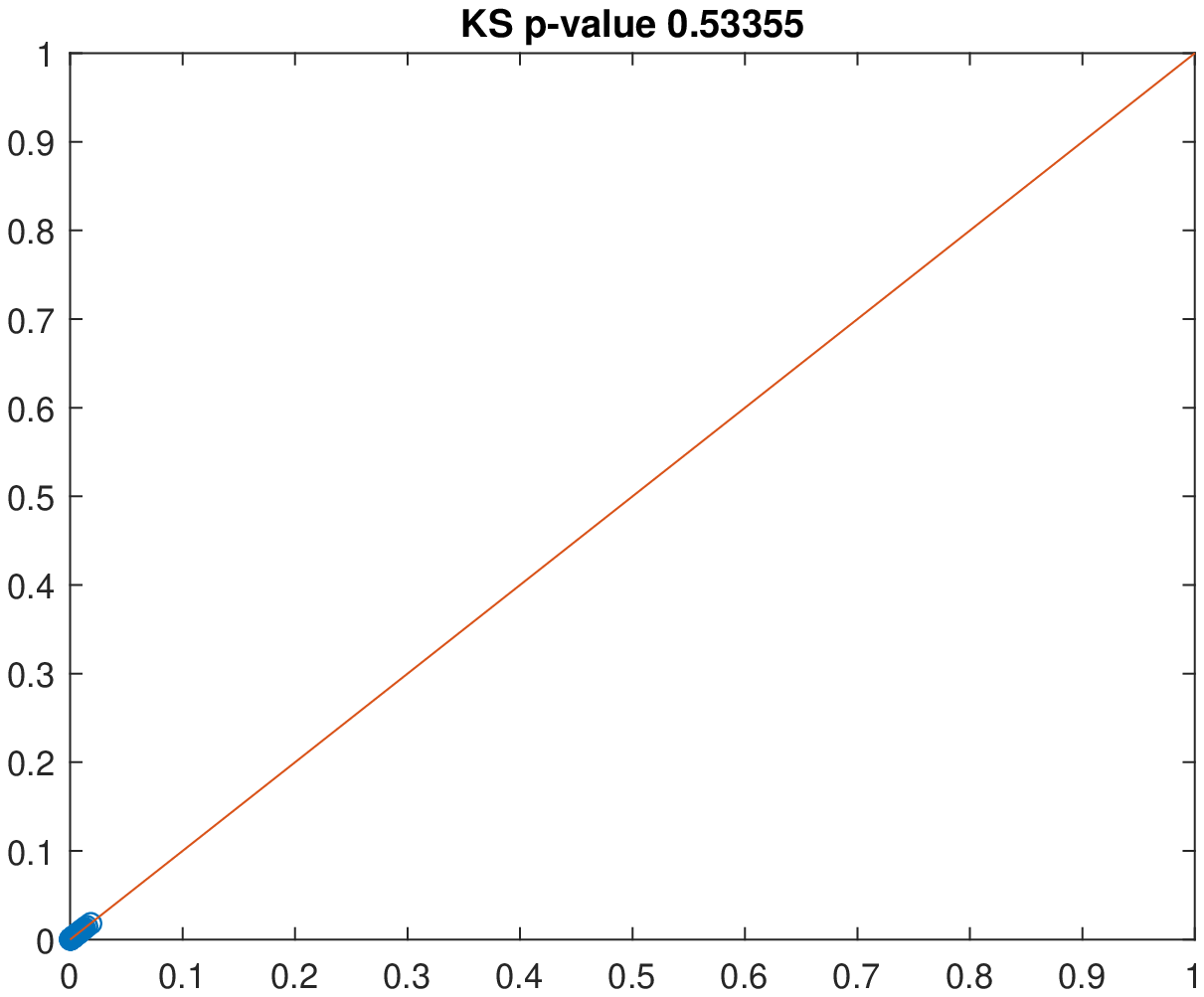} }} \hspace*{5mm}
	\subfloat
	\centering{{\includegraphics[width=.28\linewidth]{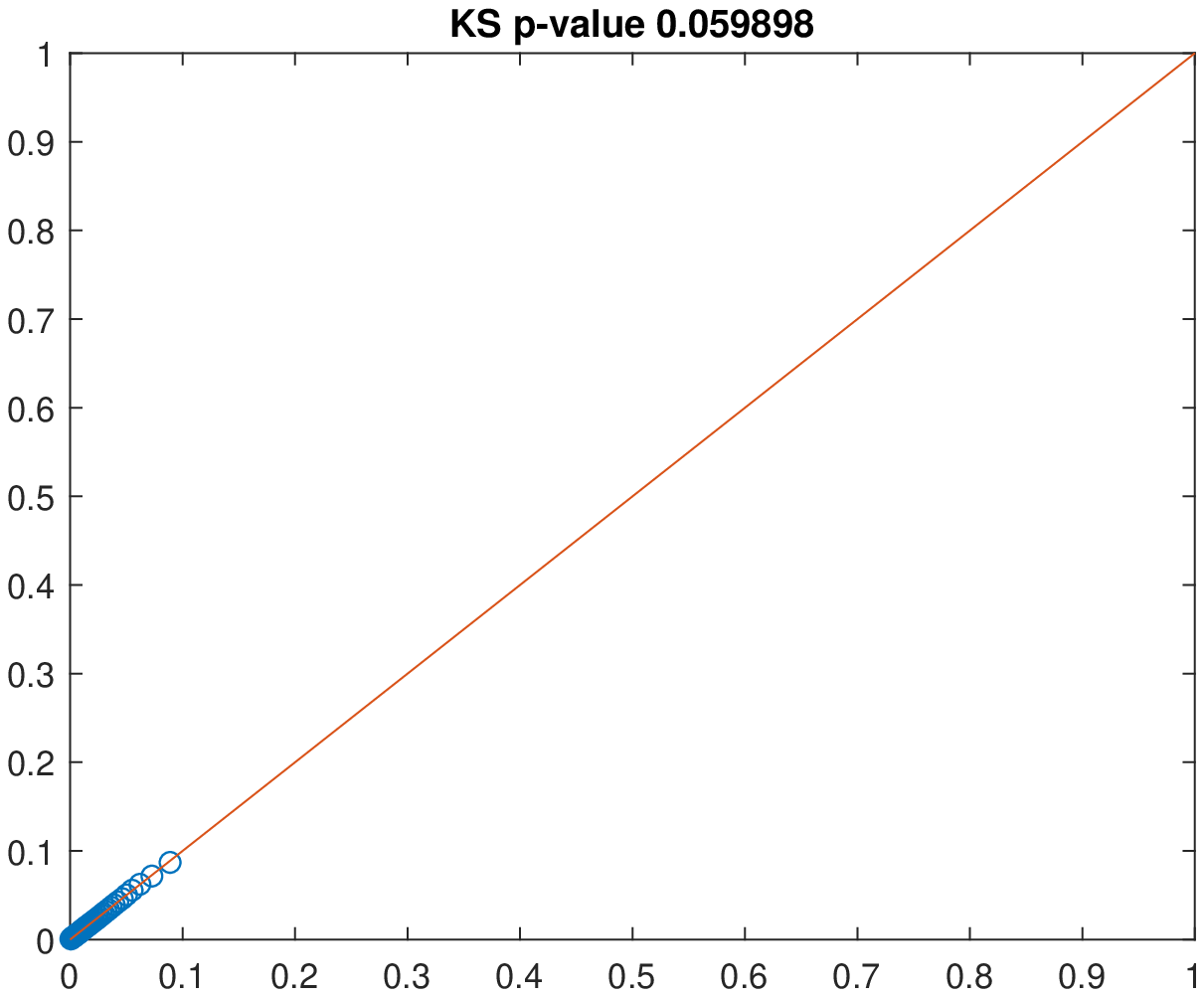} }} \hspace*{5mm}
	\subfloat
	\centering{{\includegraphics[width=.28\linewidth]{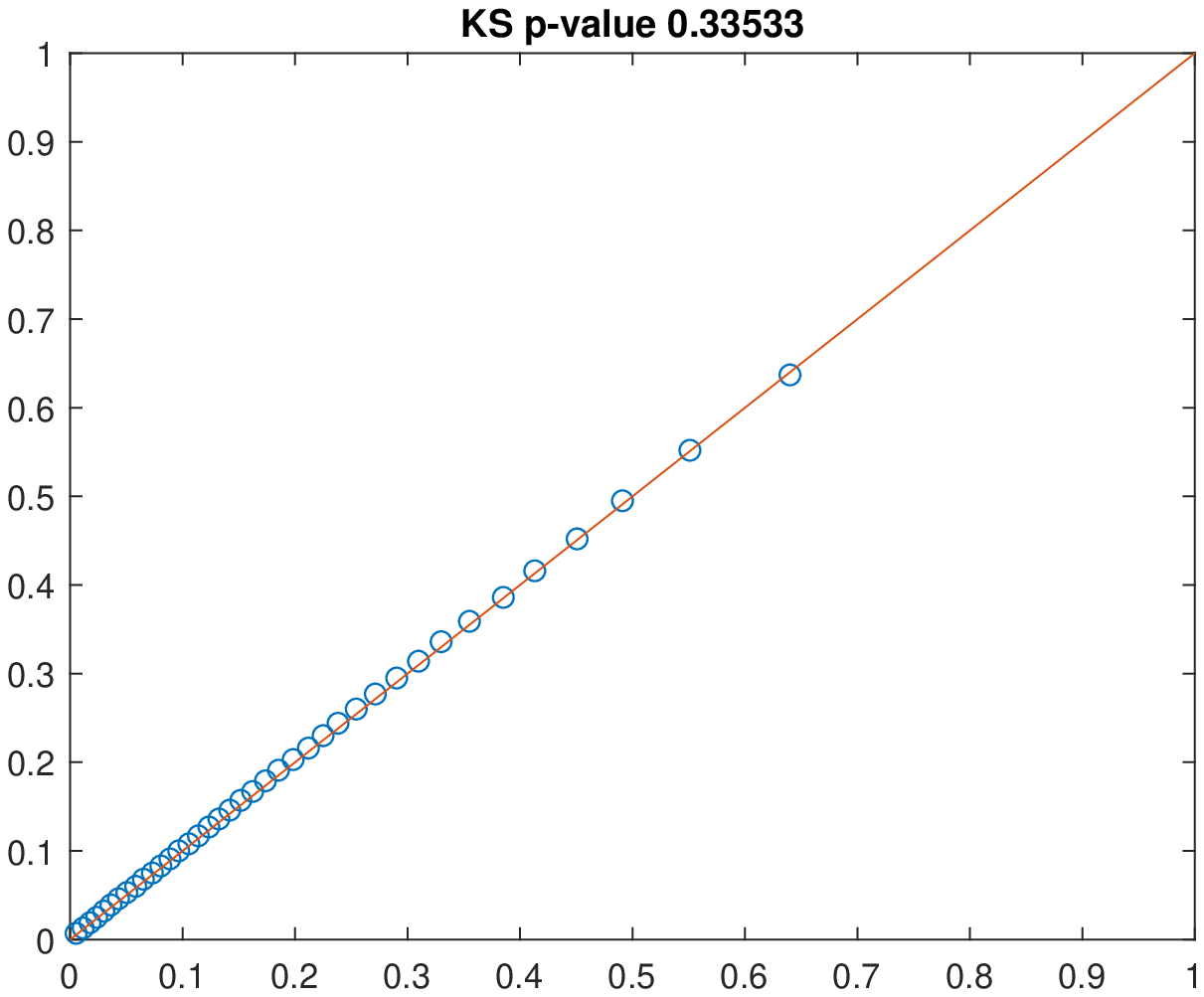}}}
	\caption{(Top row): Histograms for 10,000 samples generated from the law of a Wright--Fisher diffusion conditioned on non-absorption, started at $x=0$ at time 0, sampled at times $t=0.01,0.05,0.5$ respectively. The truncated transition density is plotted in red. (Bottom row): QQ-plots for the corresponding samples with the $p$-value returned from the Kolmogorov--Smirnov test reported above the plot.}
\end{figure}

\begin{figure}[H]
	\centering
	\subfloat
	\centering{{\includegraphics[width=.28\linewidth]{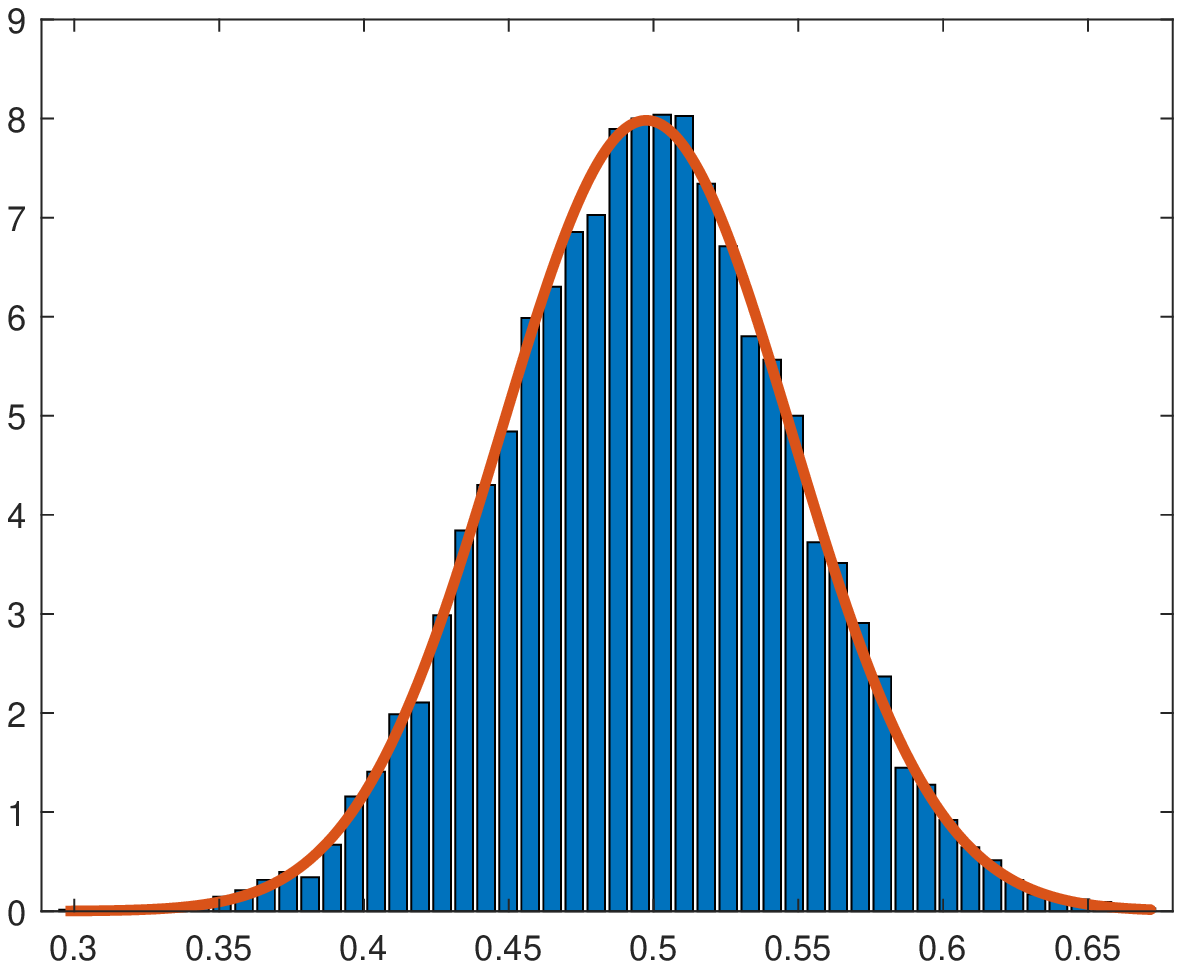} }} \hspace*{5mm}
	\subfloat
	\centering{{\includegraphics[width=.28\linewidth]{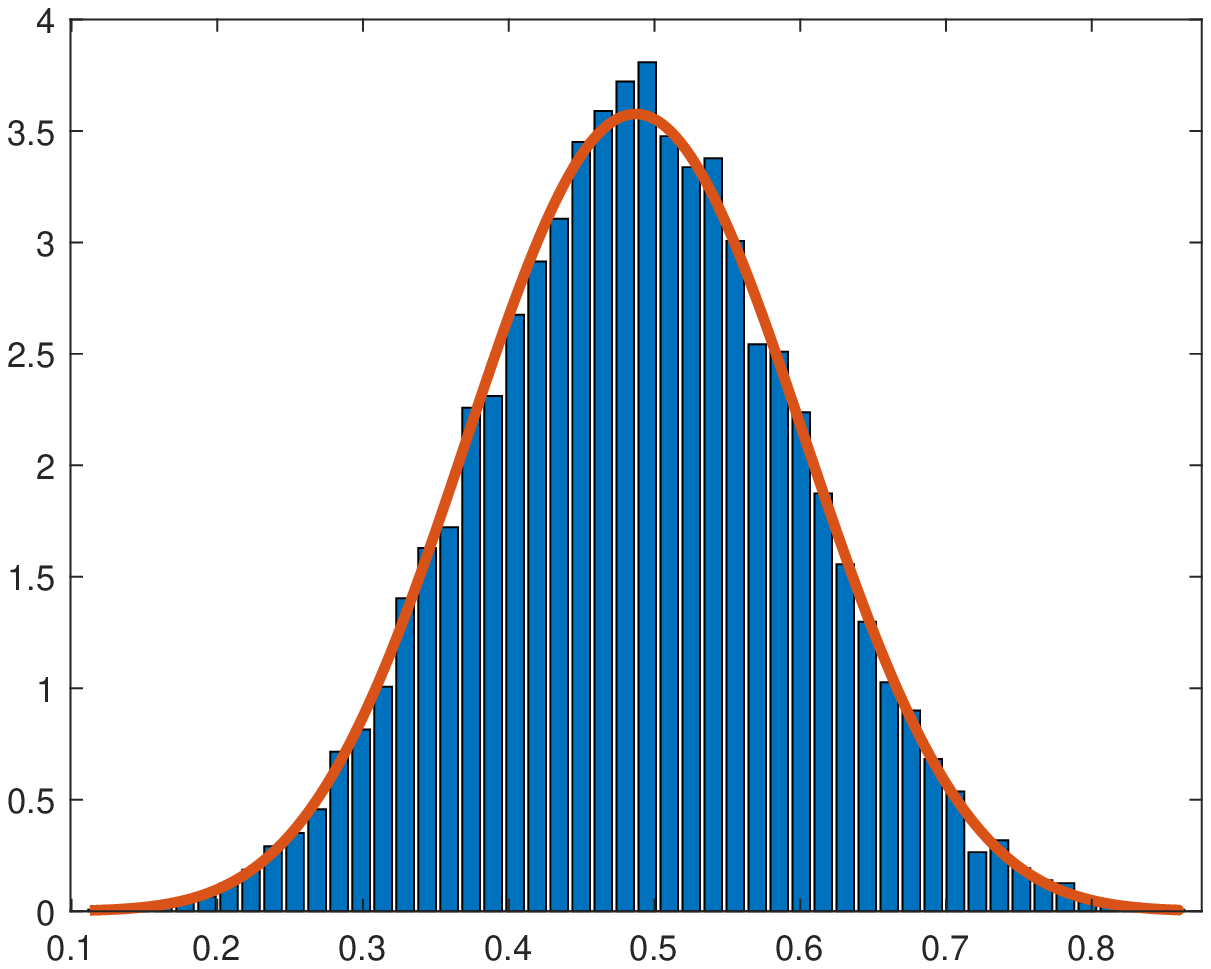} }} \hspace*{5mm}
	\subfloat
	\centering{{\includegraphics[width=.28\linewidth]{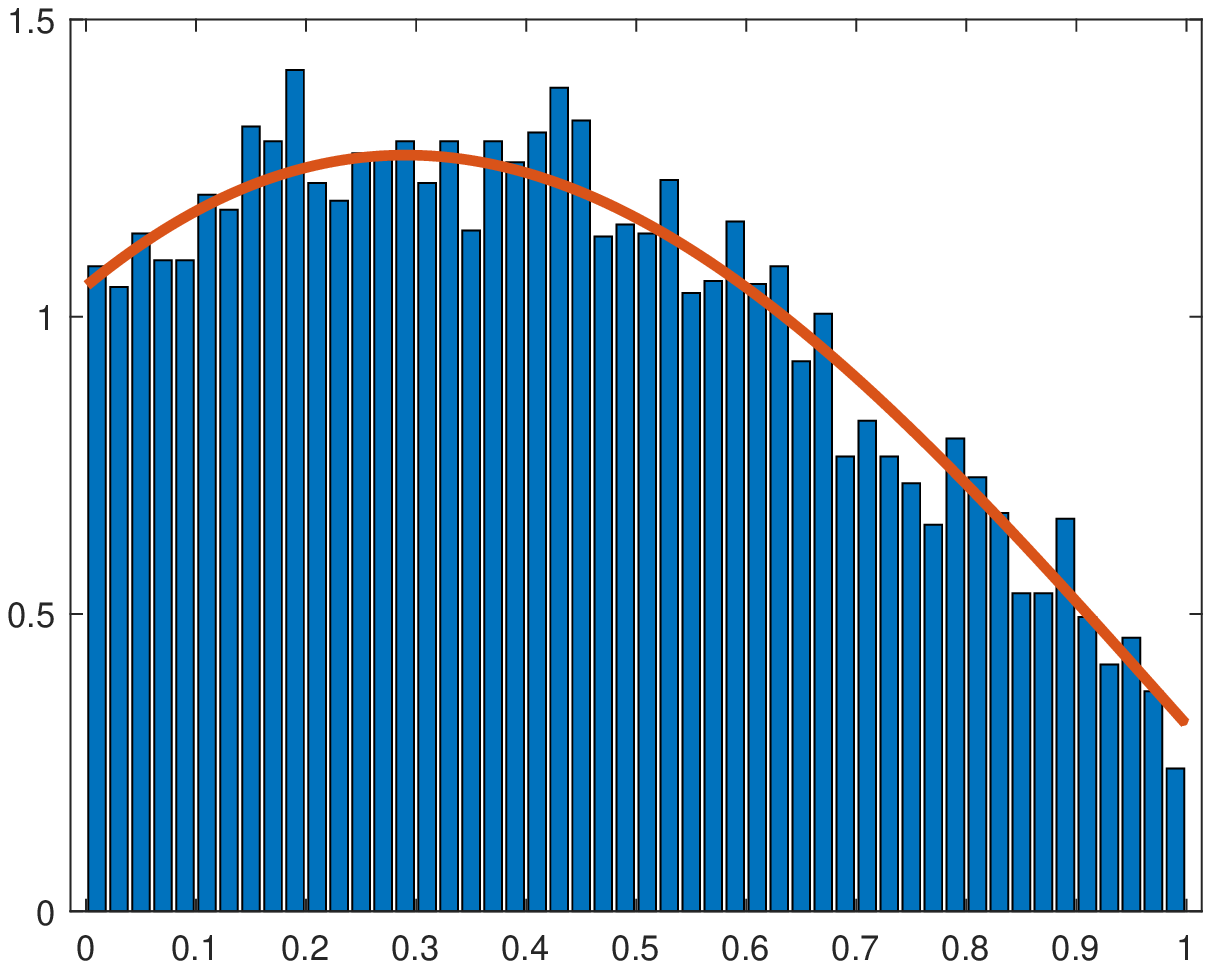}}}
	\\ \vspace*{5mm}
	\centering
	\subfloat
	\centering{{\includegraphics[width=.28\linewidth]{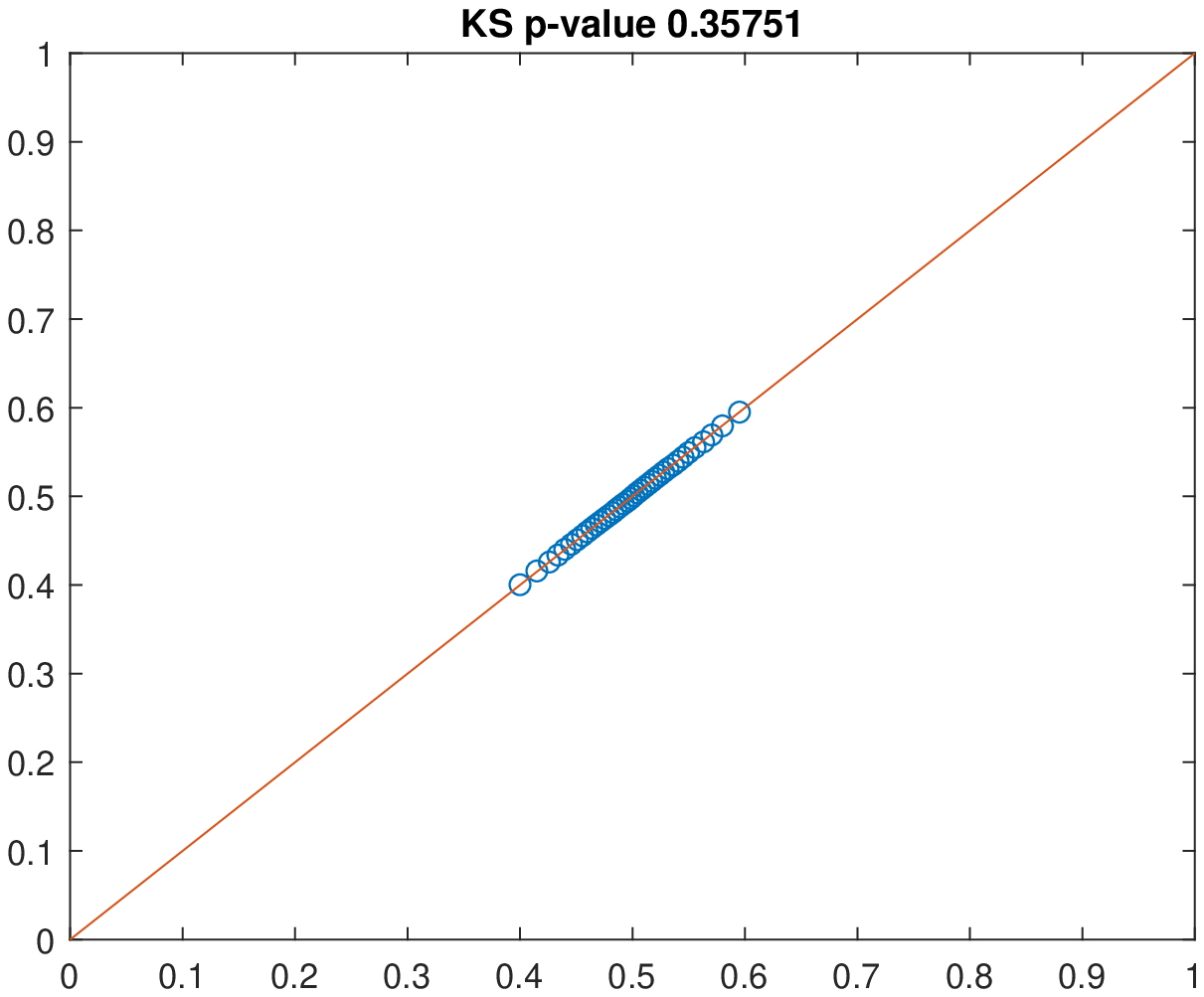} }} \hspace*{5mm}
	\subfloat
	\centering{{\includegraphics[width=.28\linewidth]{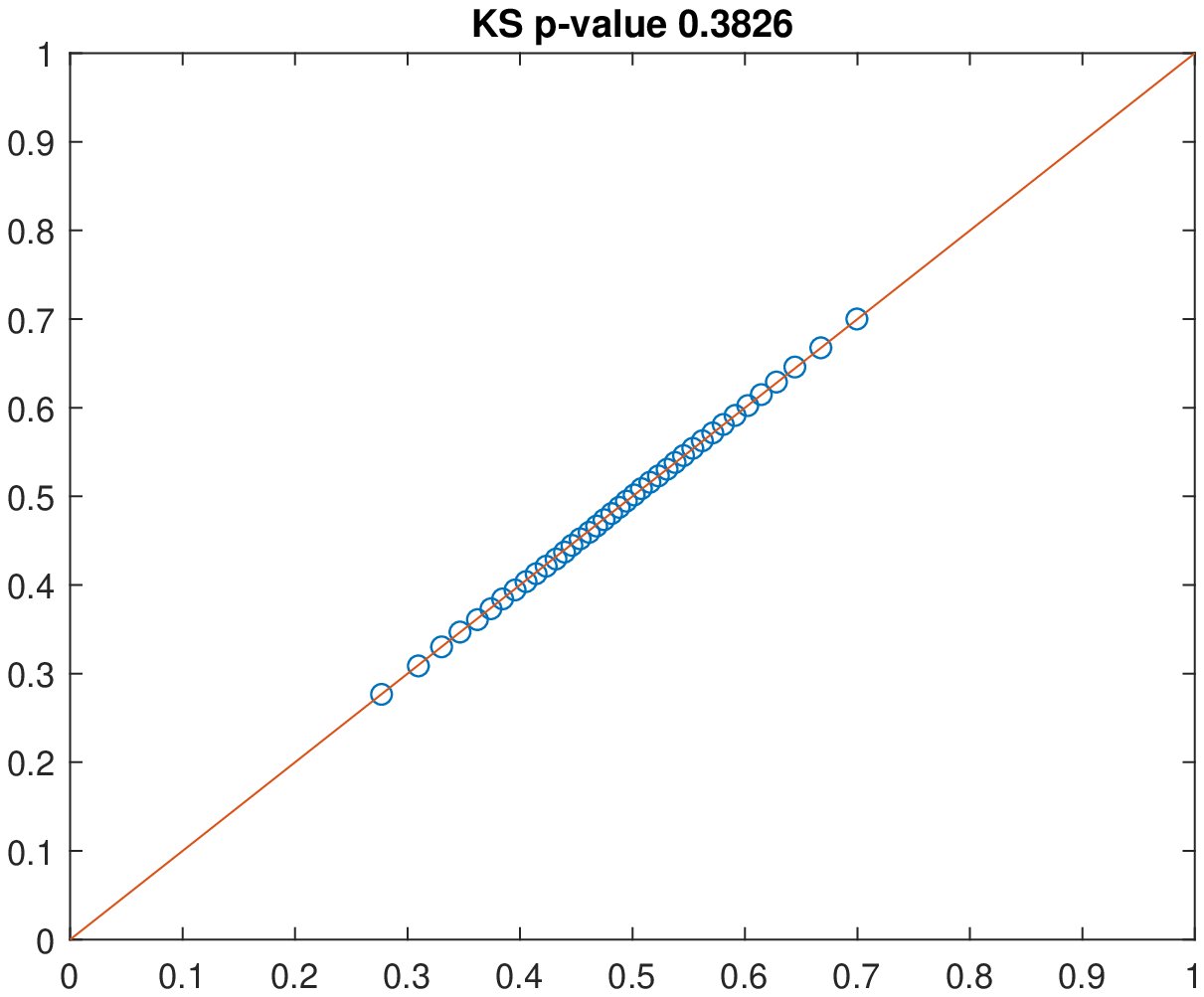} }} \hspace*{5mm}
	\subfloat
	\centering{{\includegraphics[width=.28\linewidth]{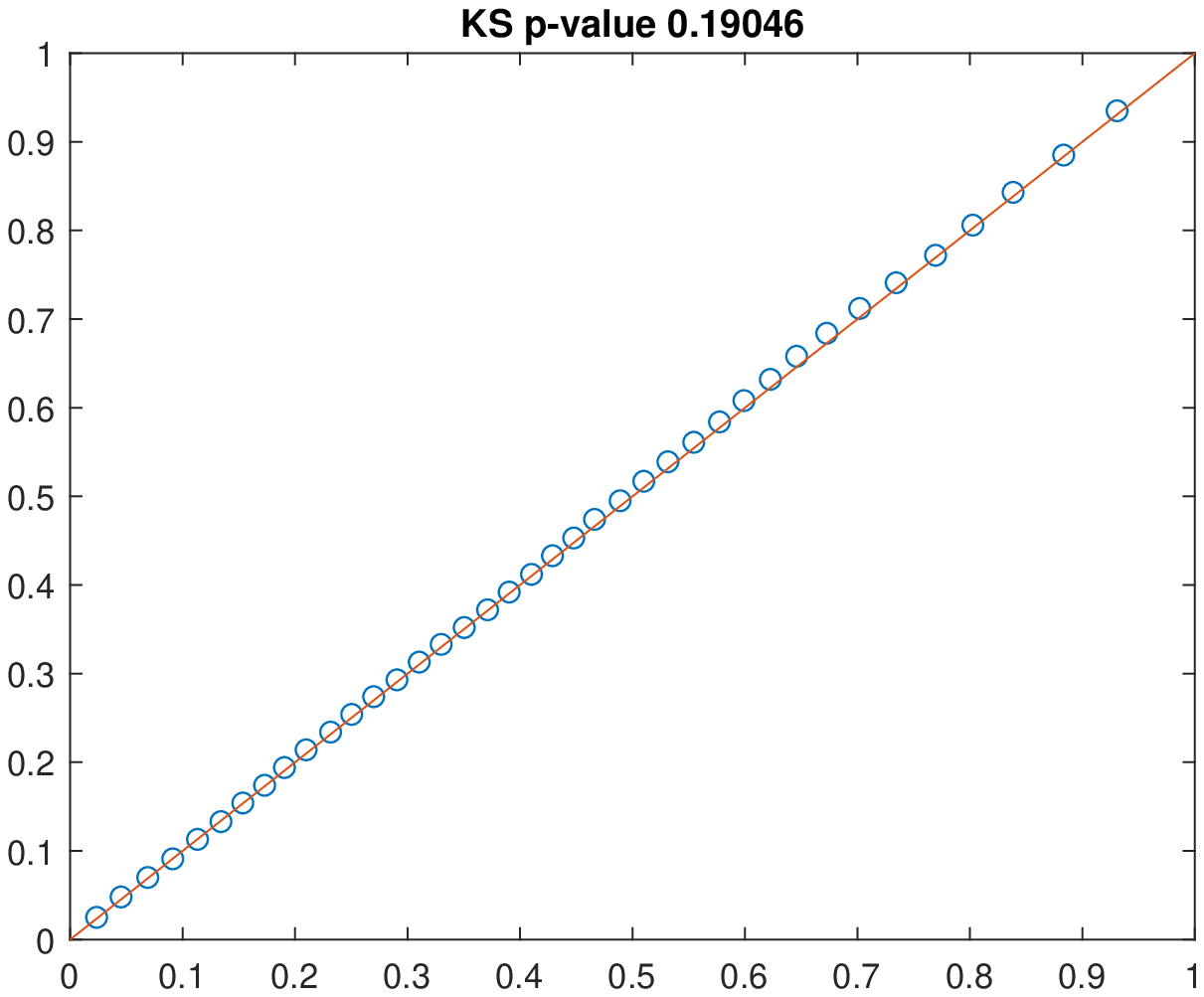}}}
	\caption{(Top row): Histograms for 10,000 samples generated from the law of a Wright--Fisher diffusion conditioned on non-absorption, started at $x=0.5$ at time 0, sampled at times $t=0.01,0.05,0.5$ respectively. The truncated transition density is plotted in red. (Bottom row): QQ-plots for the corresponding samples with the $p$-value returned from the Kolmogorov--Smirnov test reported above the plot.}
\end{figure}

\begin{figure}[H]
	\centering
	\subfloat
	\centering{{\includegraphics[width=.28\linewidth]{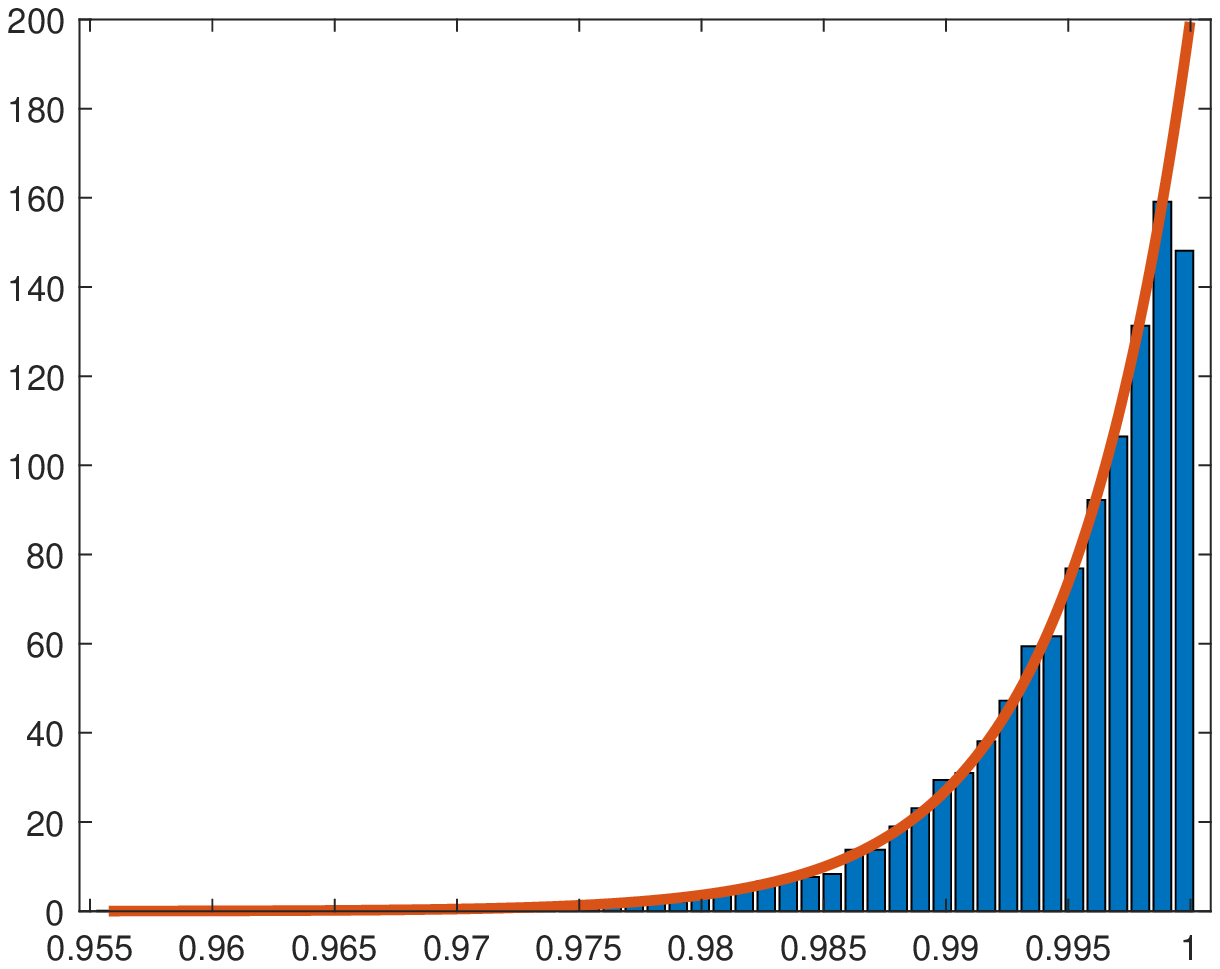} }} \hspace*{5mm}
	\subfloat
	\centering{{\includegraphics[width=.28\linewidth]{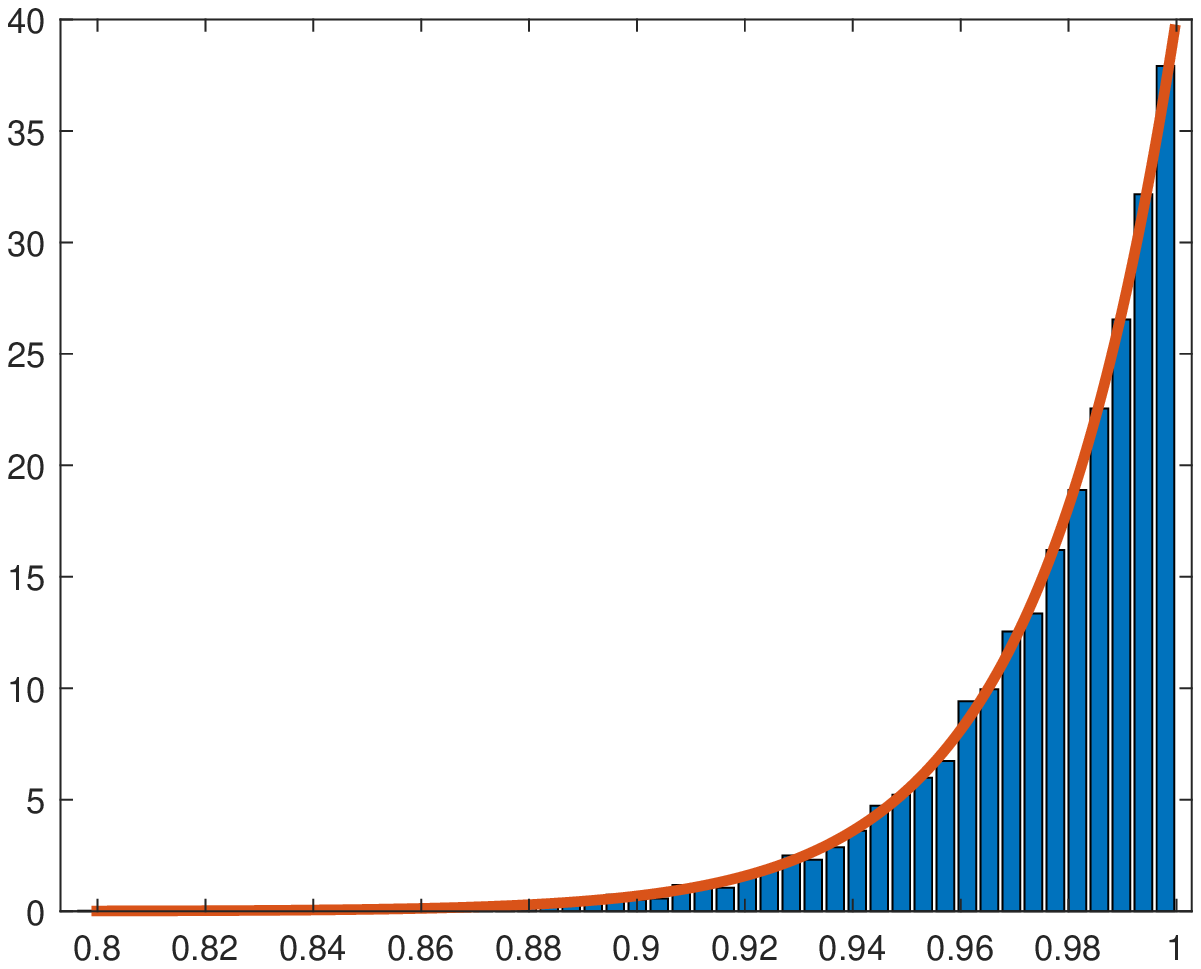} }} \hspace*{5mm}
	\subfloat
	\centering{{\includegraphics[width=.28\linewidth]{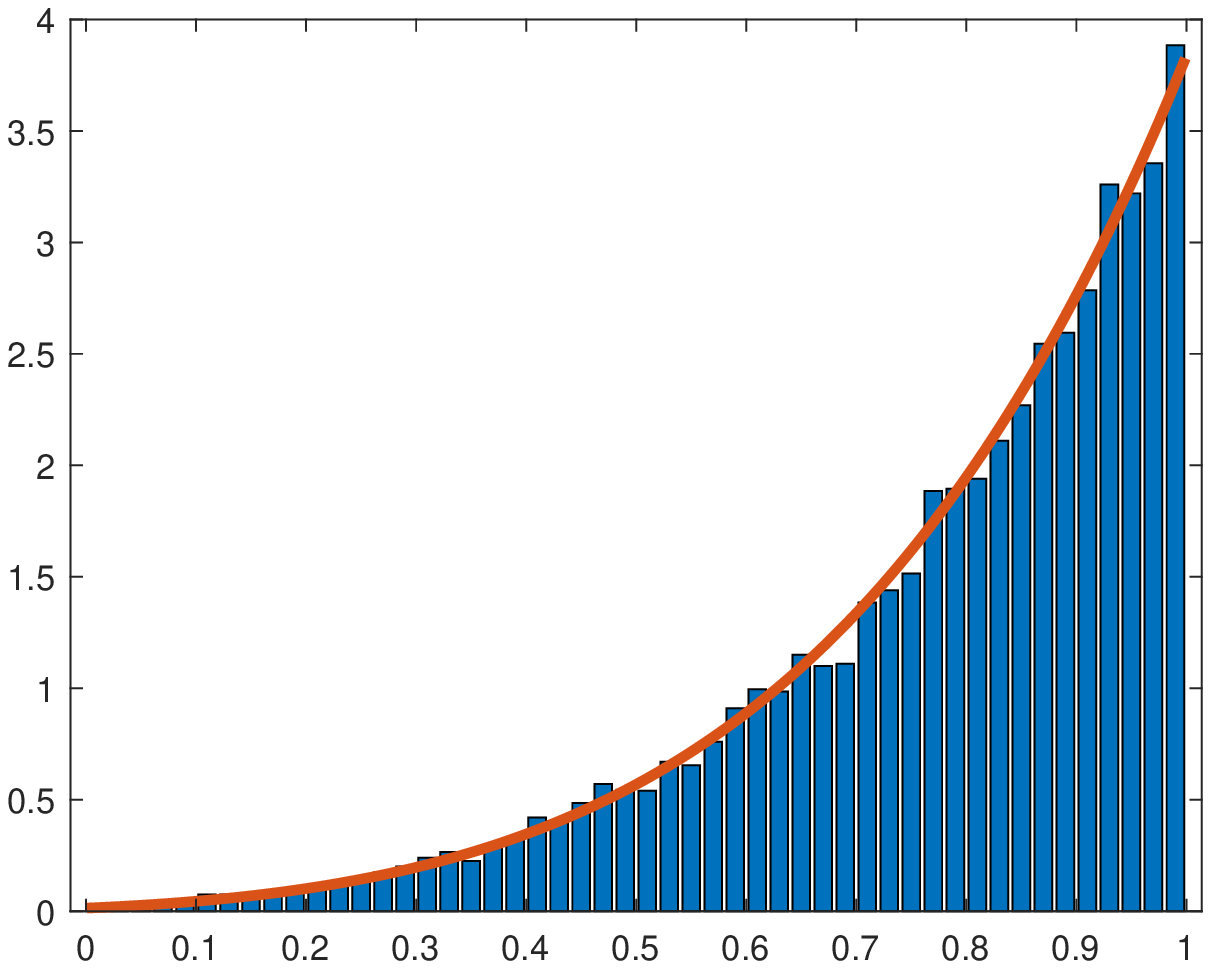}}}
	\\ \vspace*{5mm}
	\centering
	\subfloat
	\centering{{\includegraphics[width=.28\linewidth]{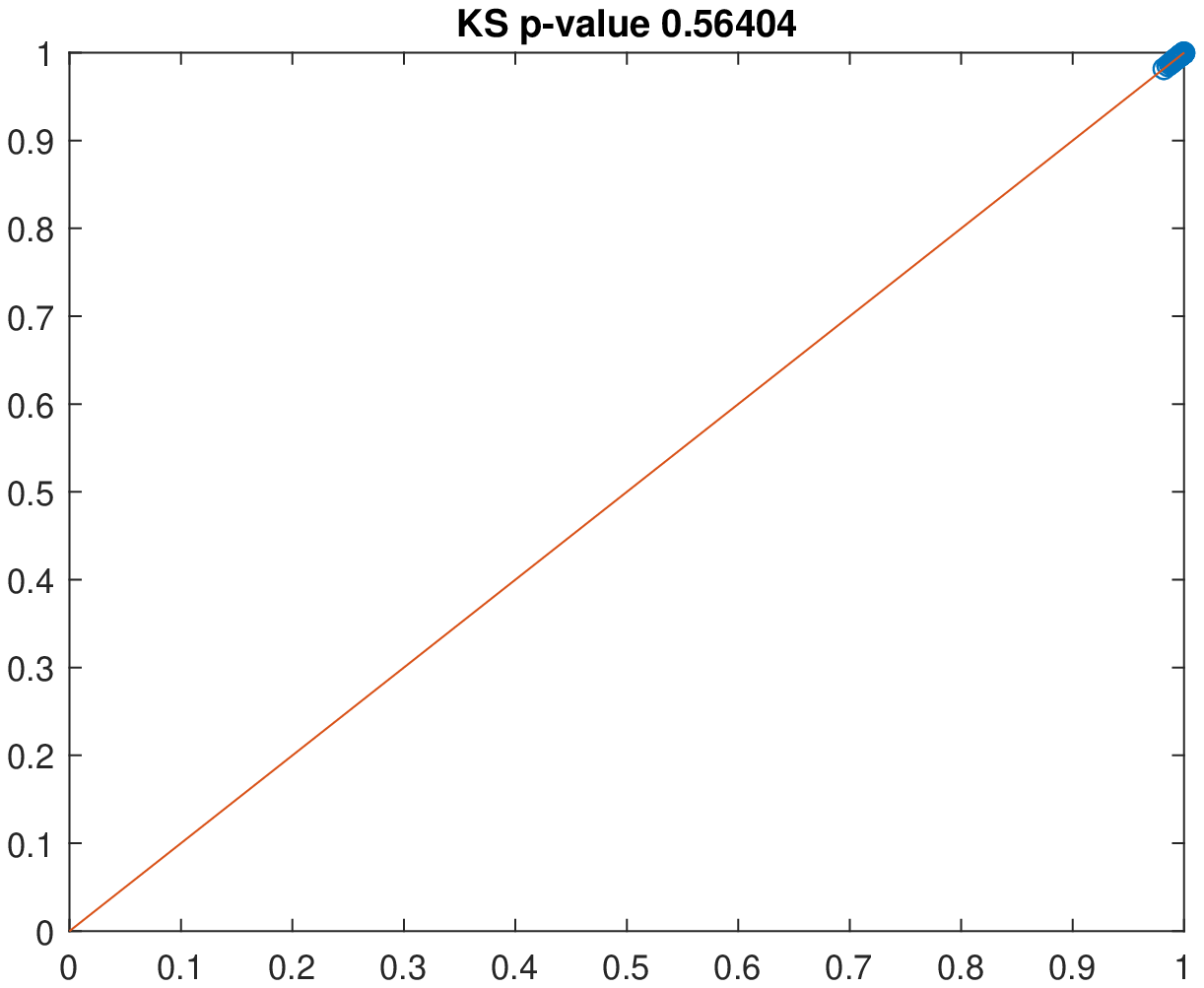} }} \hspace*{5mm}
	\subfloat
	\centering{{\includegraphics[width=.28\linewidth]{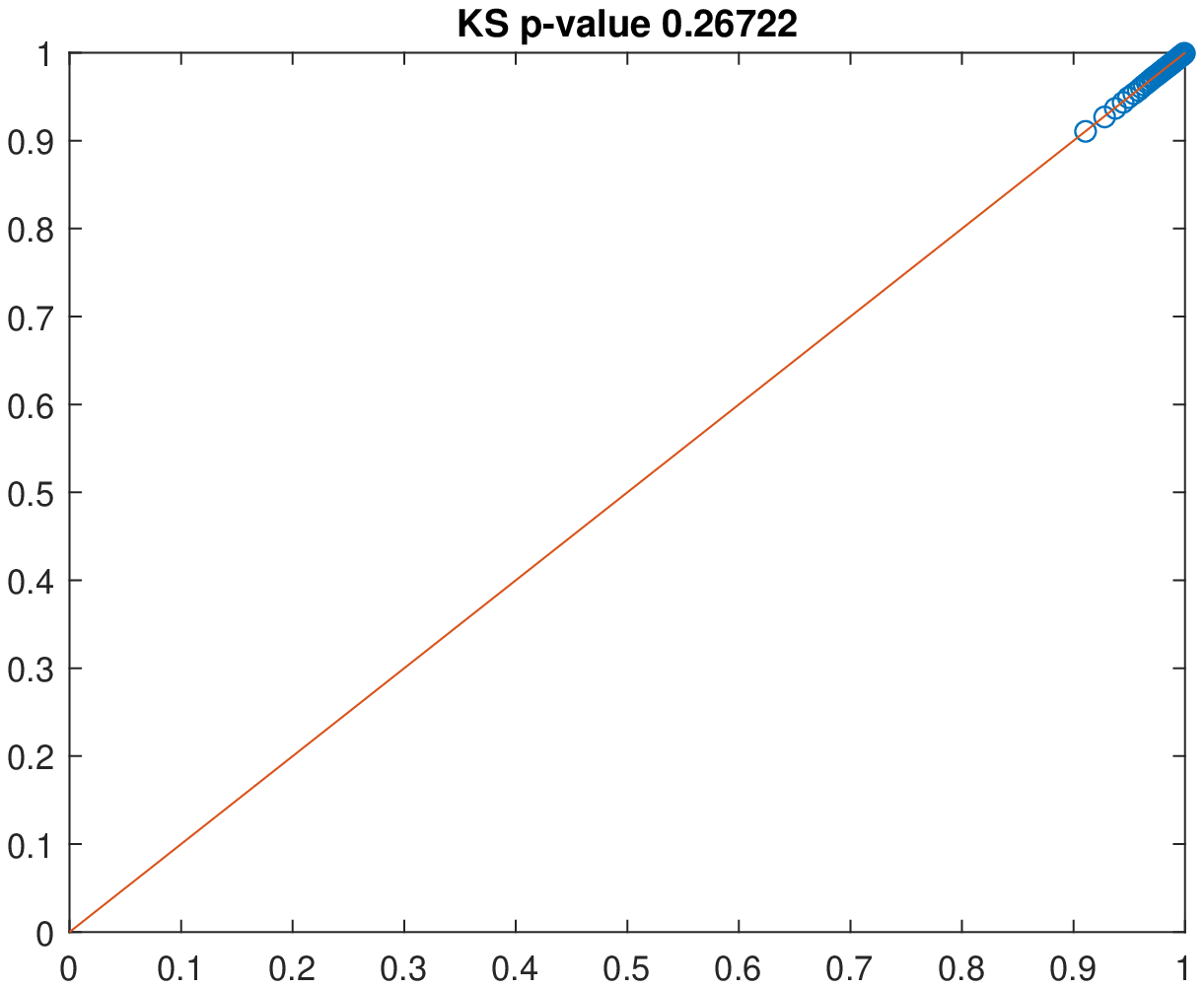} }} \hspace*{5mm}
	\subfloat
	\centering{{\includegraphics[width=.28\linewidth]{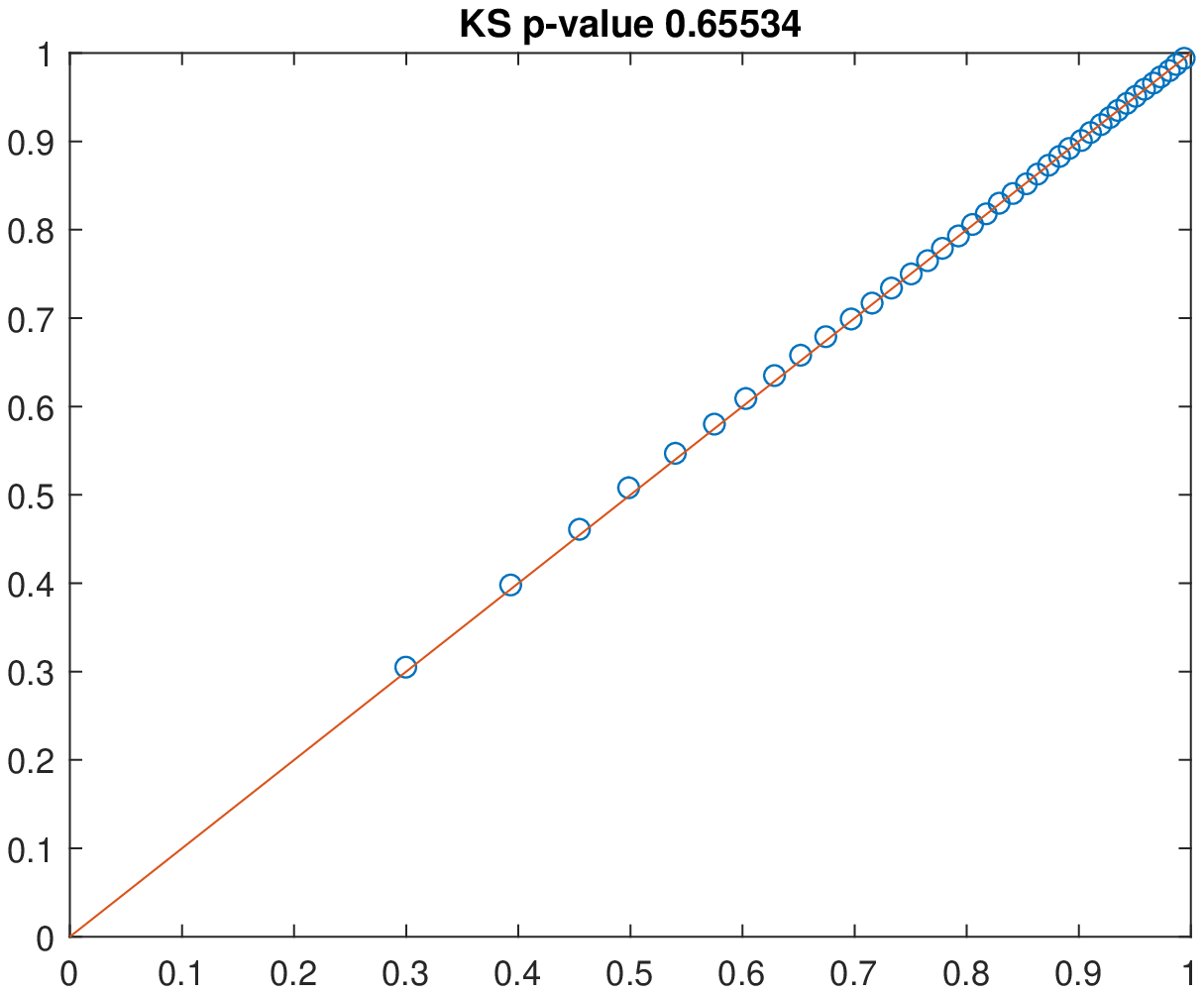}}}
	\caption{(Top row): Histograms for 10,000 samples generated from the law of a Wright--Fisher diffusion conditioned on non-absorption, started at $x=1$ at time 0, sampled at times $t=0.01,0.05,0.5$ respectively. The truncated transition density is plotted in red. (Bottom row): QQ-plots for the corresponding samples with the $p$-value returned from the Kolmogorov--Smirnov test reported above the plot.}
\end{figure}

\subsection{Unconditioned diffusions}
\noindent In the case when the diffusion was allowed to be absorbed at the boundaries, simulations for the following cases were obtained: start points $x\in\{0.25,0.5,0.75\}$, sampling times $t \in \{0.05,0.25, 0.5\}$, and mutation parameter $\boldsymbol{\theta} = \boldsymbol{0}$.  We report the probability of being absorbed at either boundary in the table below, where $\widehat{\mathbb{P}}$ denotes the empirical estimate for this quantity whereas $\mathbb{P}$ is the theoretical value obtained by evaluating the truncation to the transition density at the boundary. All of the estimated probabilities match their theoretical counterparts, and further both the QQ-plots and Kolmogorov--Smirnov tests confirm that the generated draws are coming from the correct distribution.

\begin{table}[H]
	\begin{center}
		\begin{tabular}{c|c c|c c}
			$x=0.25$ & $\widehat{\mathbb{P}}[\textnormal{Absorbed at 0}]$ & $\mathbb{P}[\textnormal{Absorbed at 0}]$ & $\widehat{\mathbb{P}}[\textnormal{Absorbed at 1}]$ & $\mathbb{P}[\textnormal{Absorbed at 1}]$ \\ \hline
			$t = 0.05$ & 0 & 1.51641e-5 & 0 & 8.92526e-26 \\
			$t = 0.25$ & 0.1025 & 0.101181 & 1e-4 & 9.79038e-5 \\
			$t = 0.5$ & 0.2923 & 0.302098 & 0.0074 & 0.0077254
		\end{tabular}
	\end{center}
	\caption{Empirical ($\widehat{\mathbb{P}}$) and theoretical ($\mathbb{P}$) absorption probabilities for the diffusion started at $x=0.25$.}
	\label{AbsorptionProbabilitiesD1}
\end{table}

\begin{table}[H]
	\begin{center}
		\begin{tabular}{c|c c|c c}
			$x=0.5$ & $\widehat{\mathbb{P}}[\textnormal{Absorbed at 0}]$ & $\mathbb{P}[\textnormal{Absorbed at 0}]$ & $\widehat{\mathbb{P}}[\textnormal{Absorbed at 1}]$ & $\mathbb{P}[\textnormal{Absorbed at 1}]$ \\ \hline
			$t = 0.05$ & 0 & 9.81343e-9 & 0 & 9.81343e-9 \\
			$t = 0.25$ & 0.0066 & 0.00569842 & 0.0064 & 0.00569842 \\
			$t = 0.5$ & 0.0687 & 0.066694 & 0.065 & 0.066694
		\end{tabular}
	\end{center}
	\caption{Empirical ($\widehat{\mathbb{P}}$) and theoretical ($\mathbb{P}$) absorption probabilities for the diffusion started at $x=0.5$.}
	\label{AbsorptionProbabilitiesD2}
\end{table}

\begin{table}[H]
	\begin{center}
		\begin{tabular}{c|c c|c c}
			$x=0.75$ & $\widehat{\mathbb{P}}[\textnormal{Absorbed at 0}]$ & $\mathbb{P}[\textnormal{Absorbed at 0}]$ & $\widehat{\mathbb{P}}[\textnormal{Absorbed at 1}]$ & $\mathbb{P}[\textnormal{Absorbed at 1}]$ \\ \hline
			$t = 0.05$ & 0 & 8.92526e-26 & 0 & 1.51641e-5 \\
			$t = 0.25$ & 1e-4 & 9.79038e-5 & 0.0986 & 0.101181 \\
			$t = 0.5$ & 0.0099 & 0.0077254 & 0.2979 & 0.302098
		\end{tabular}
	\end{center}
	\caption{Empirical ($\widehat{\mathbb{P}}$) and theoretical ($\mathbb{P}$) absorption probabilities for the diffusion started at $x=0.75$.}
	\label{AbsorptionProbabilitiesD3}
\end{table}

\begin{figure}[H]
	\centering
	\subfloat
	\centering{{\includegraphics[width=.28\linewidth]{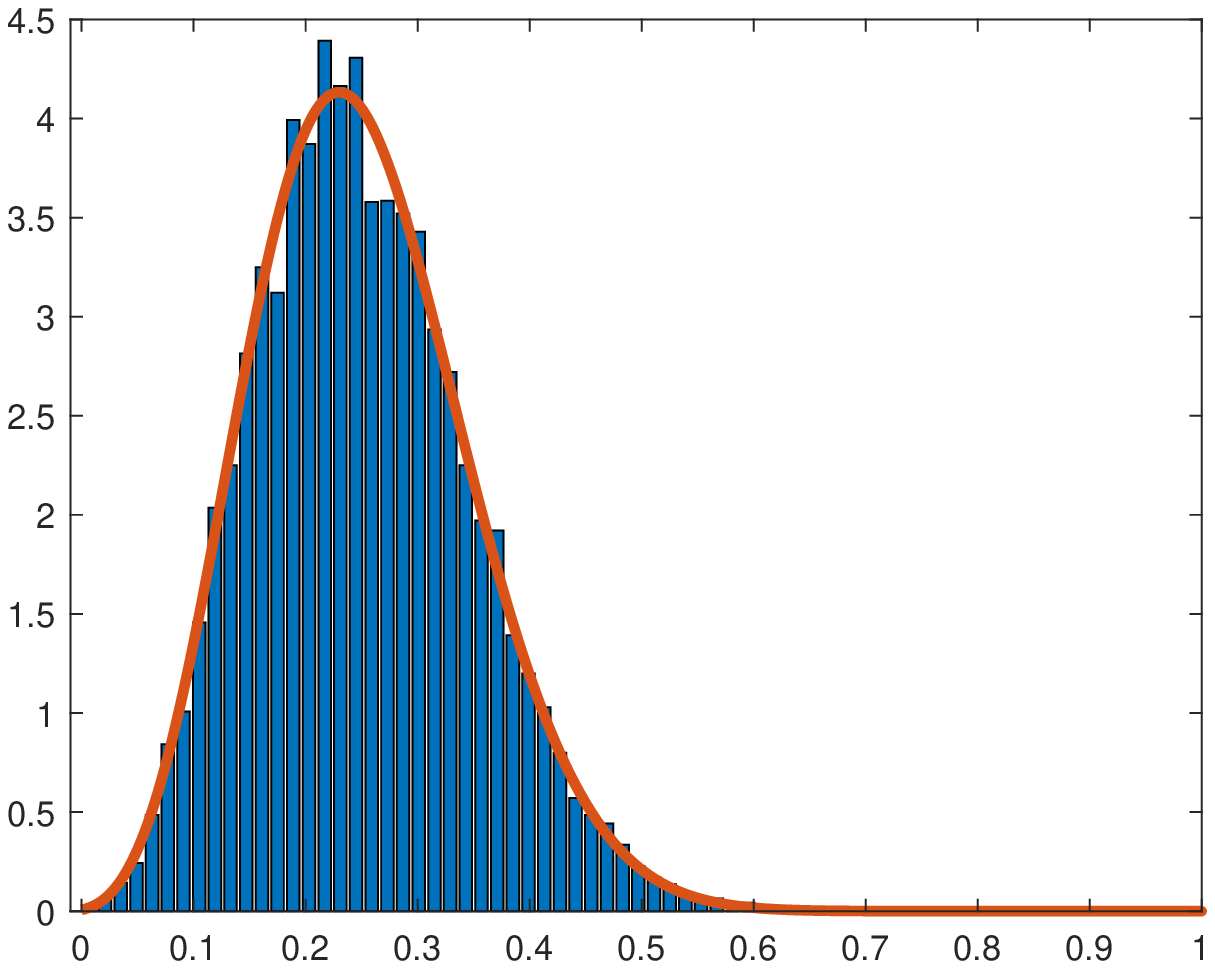} }} \hspace*{5mm}
	\subfloat
	\centering{{\includegraphics[width=.28\linewidth]{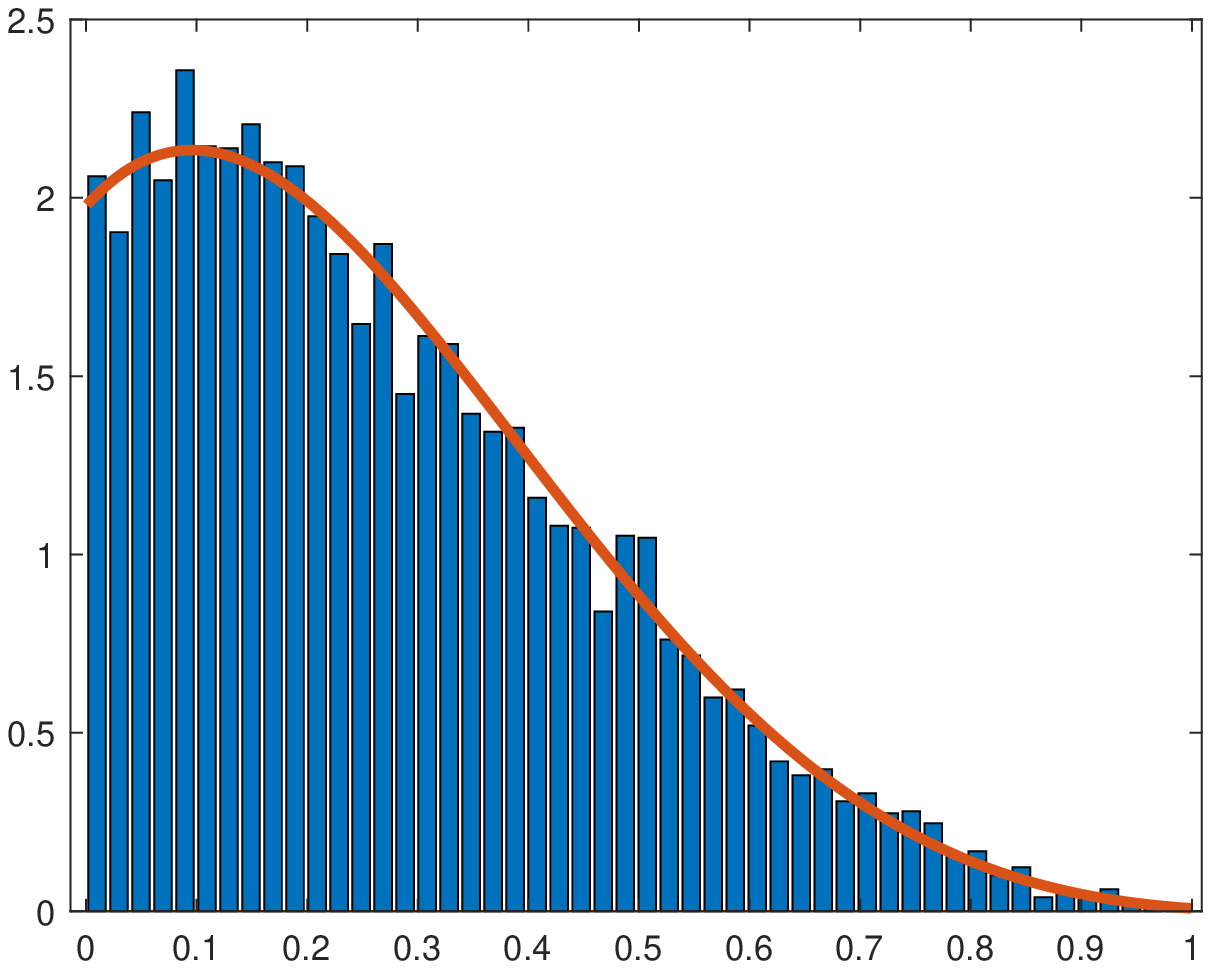} }} \hspace*{5mm}
	\subfloat
	\centering{{\includegraphics[width=.28\linewidth]{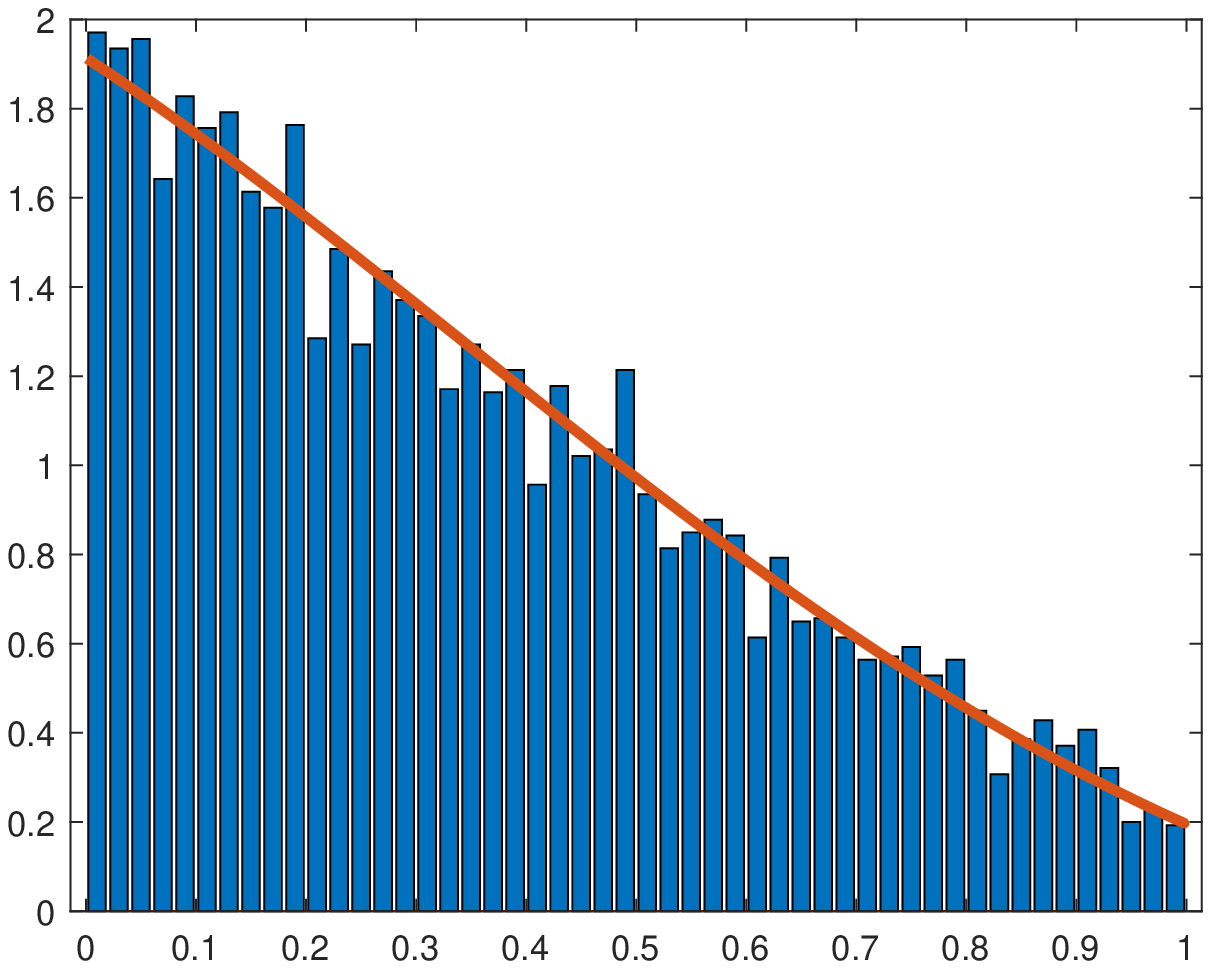}}}
	\\ \vspace*{5mm}
	\centering
	\subfloat
	\centering{{\includegraphics[width=.28\linewidth]{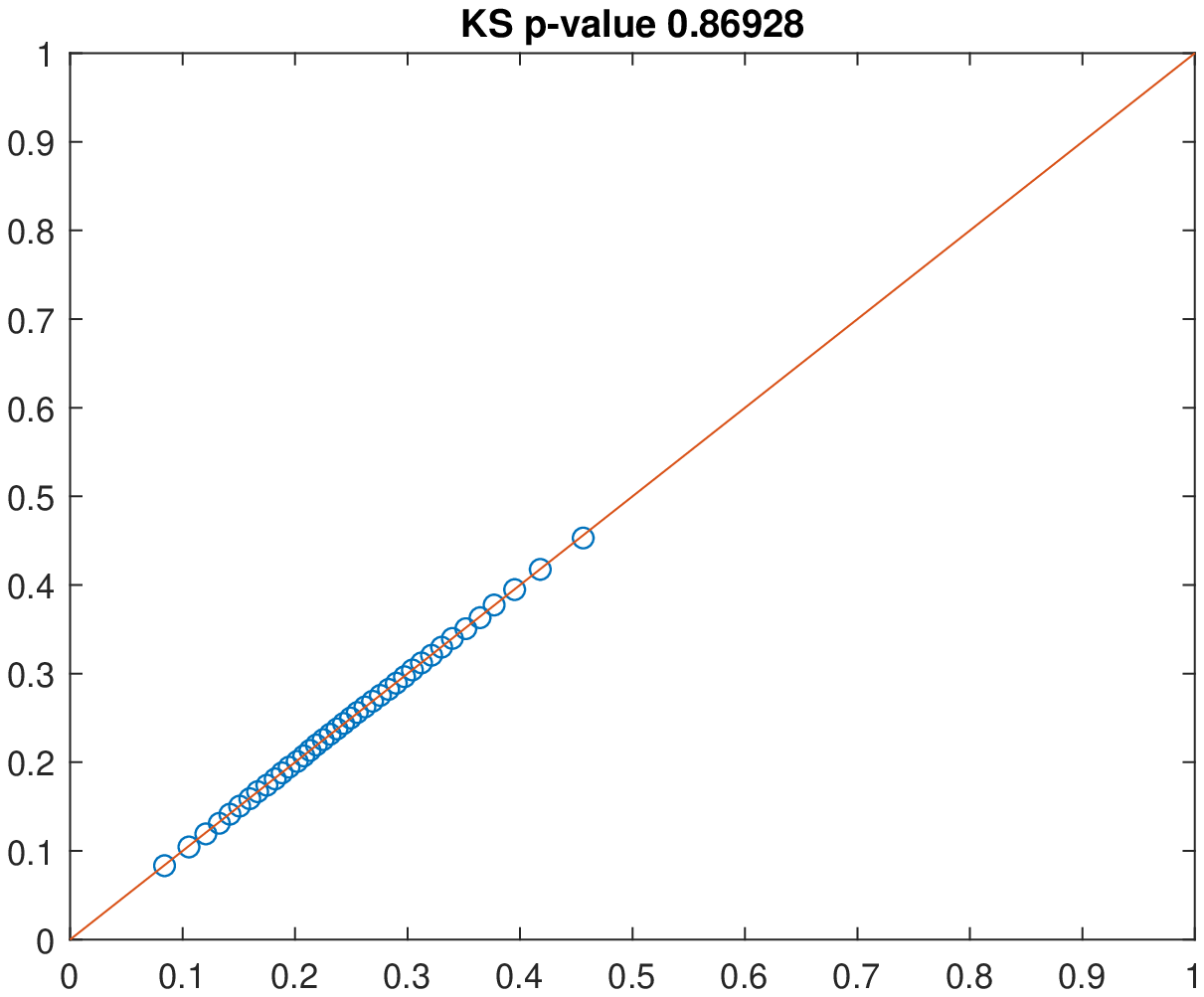} }} \hspace*{5mm}
	\subfloat
	\centering{{\includegraphics[width=.28\linewidth]{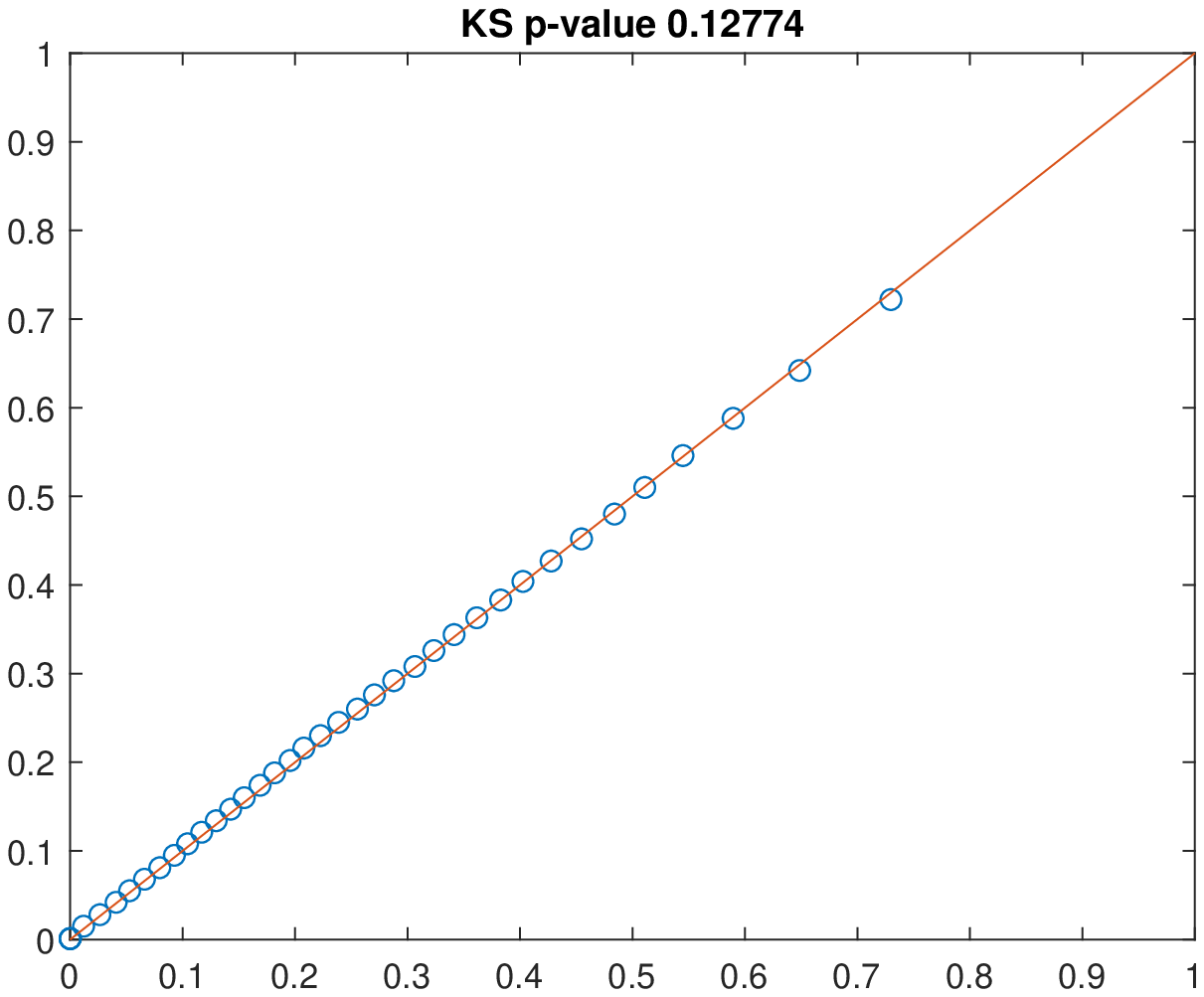} }} \hspace*{5mm}
	\subfloat
	\centering{{\includegraphics[width=.28\linewidth]{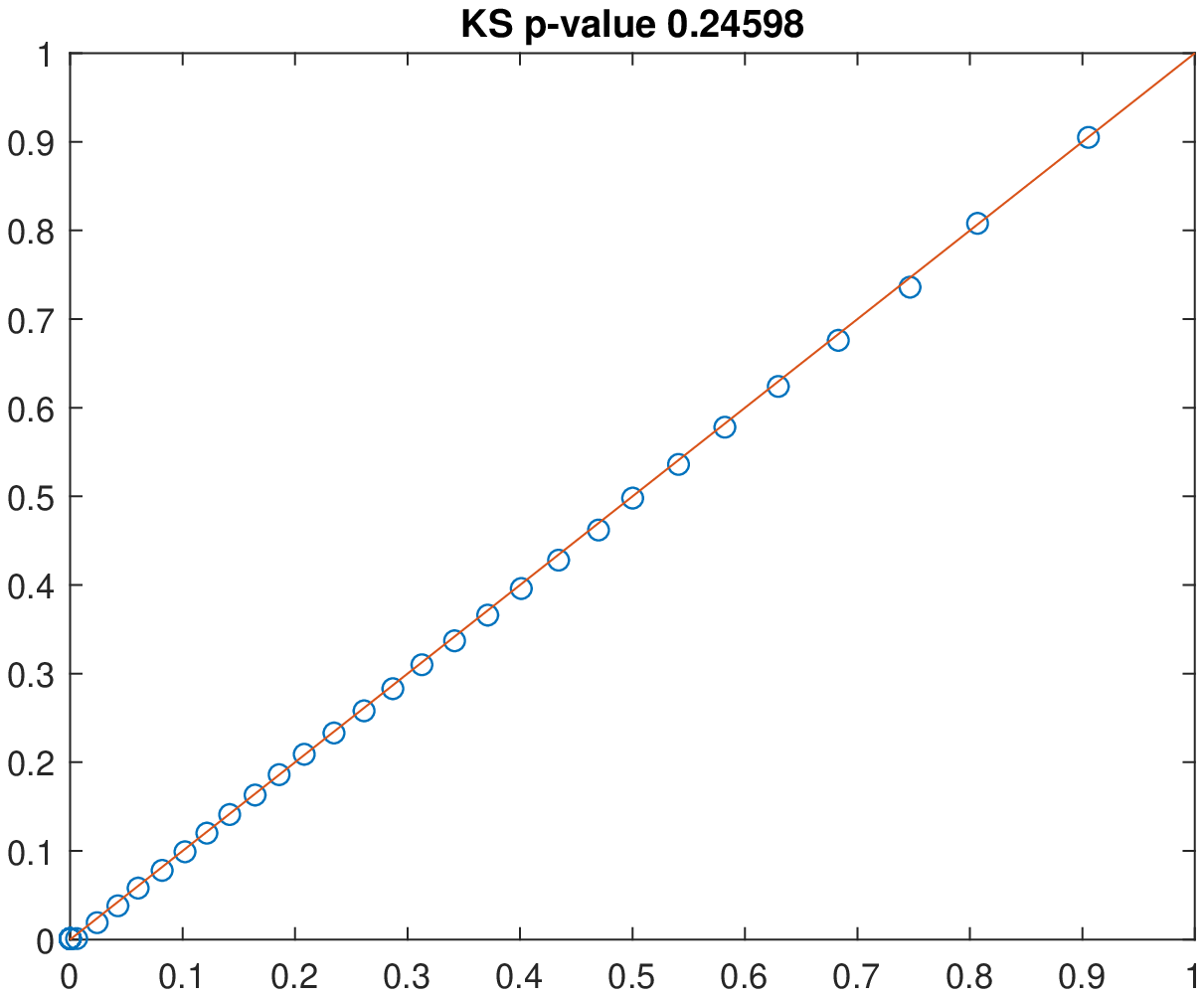}}}
	\caption{(Top row): Histograms for 10,000 samples generated from the law of a Wright--Fisher diffusion started at $x=0.25$ at time 0, sampled at times $t=0.05,0.25,0.5$ respectively, with the process allowed to be absorbed at the boundaries. Note that samples equal to 0 or 1 are not included in the above histograms, but their relative frequency can be found from the empirical probabilities found in Table \ref{AbsorptionProbabilitiesD1}. The truncated transition density is plotted in red. (Bottom row): QQ-plots for the corresponding samples with the $p$-value returned from the Kolmogorov--Smirnov test reported above the plot.}
\end{figure}

\begin{figure}[H]
	\centering
	\subfloat
	\centering{{\includegraphics[width=.28\linewidth]{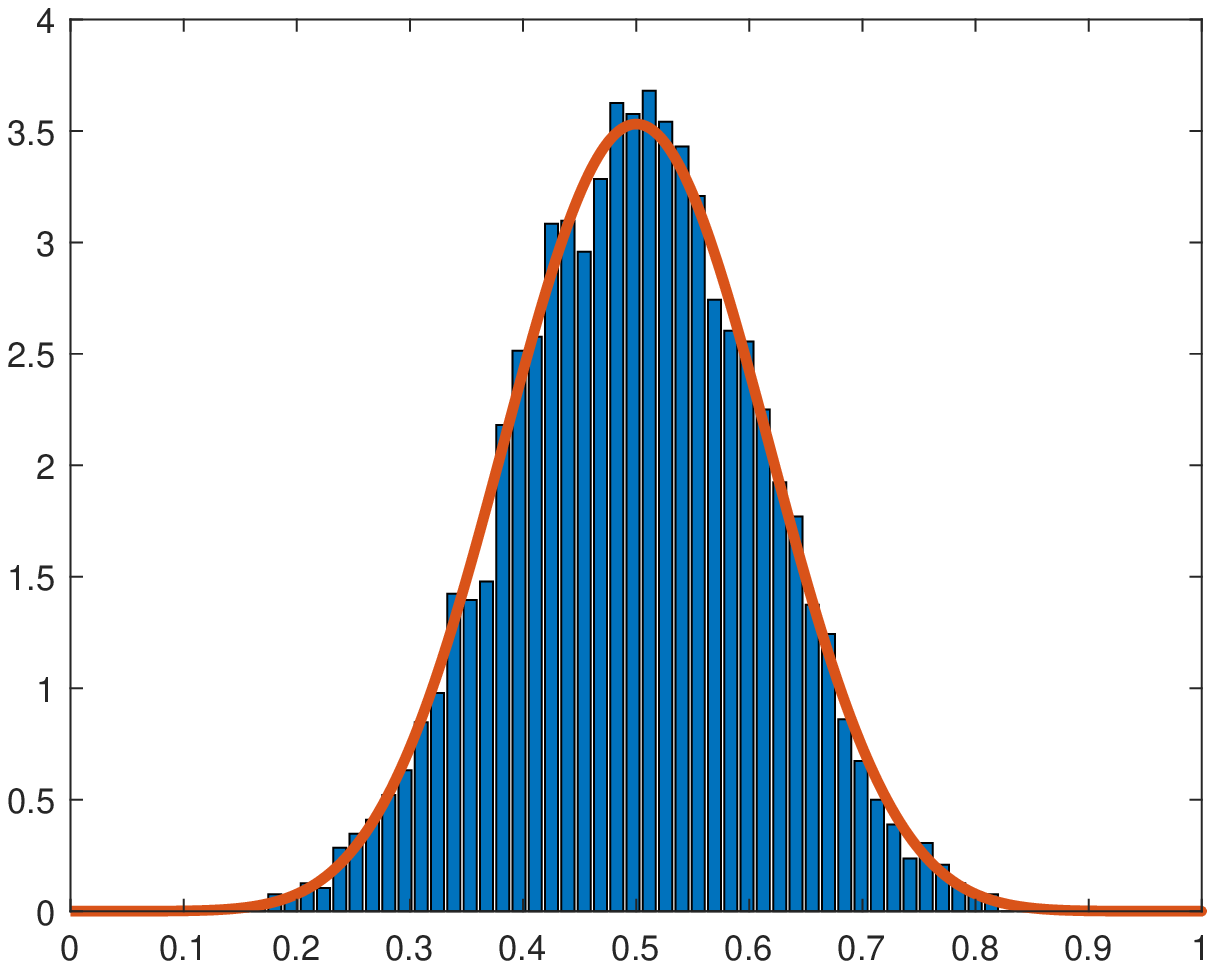} }} \hspace*{5mm}
	\subfloat
	\centering{{\includegraphics[width=.28\linewidth]{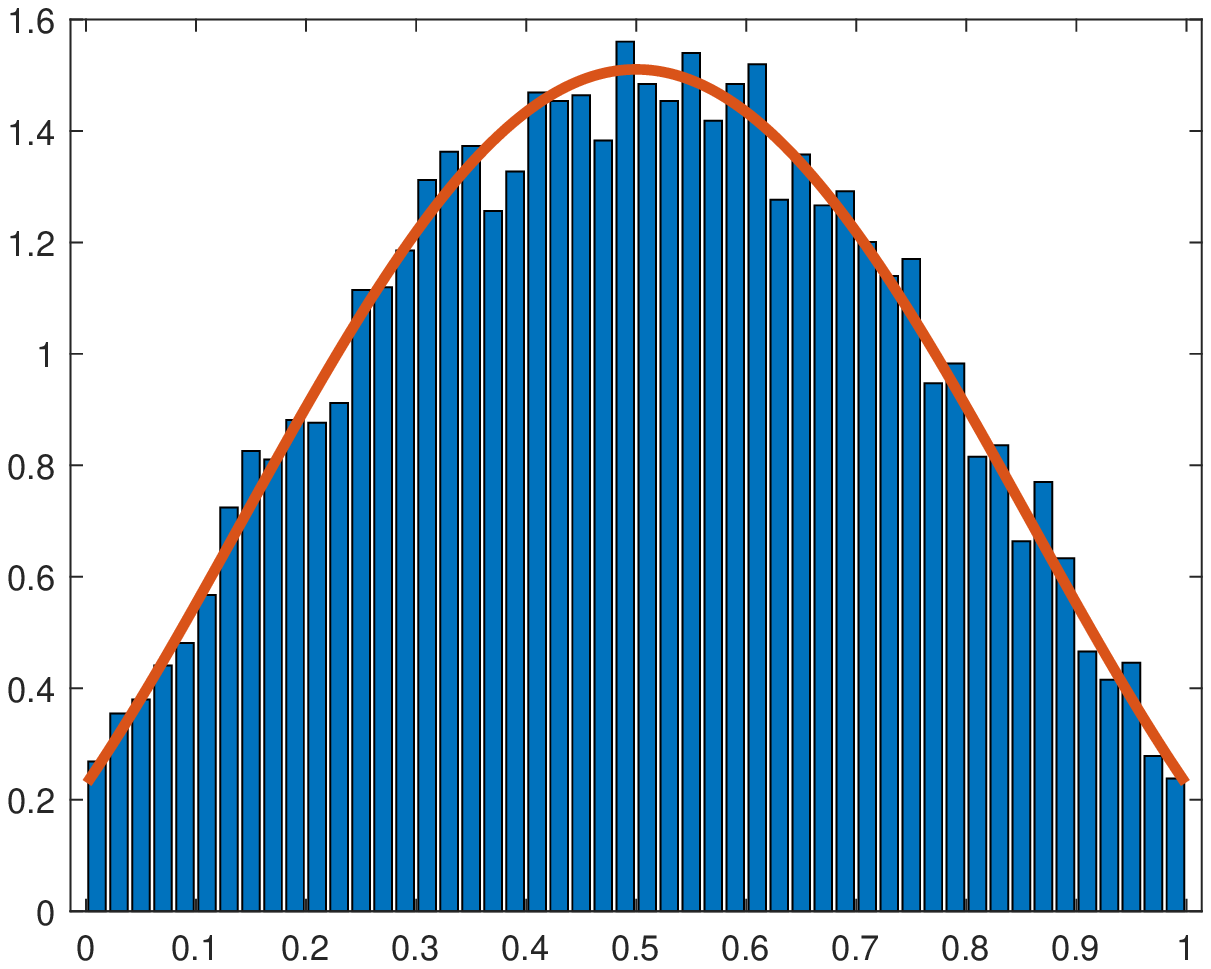} }} \hspace*{5mm}
	\subfloat
	\centering{{\includegraphics[width=.28\linewidth]{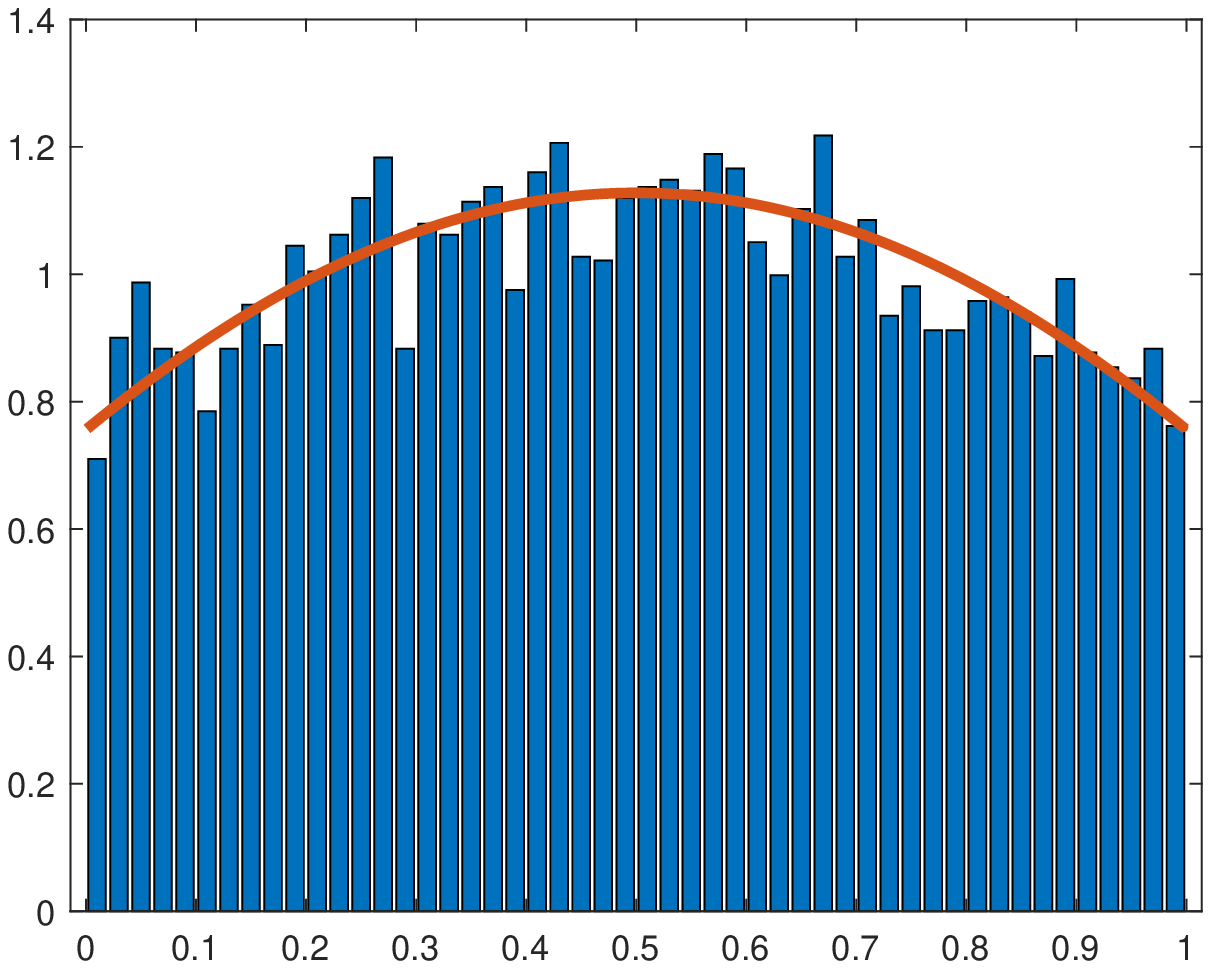}}}
	\\ \vspace*{5mm}
	\centering
	\subfloat
	\centering{{\includegraphics[width=.28\linewidth]{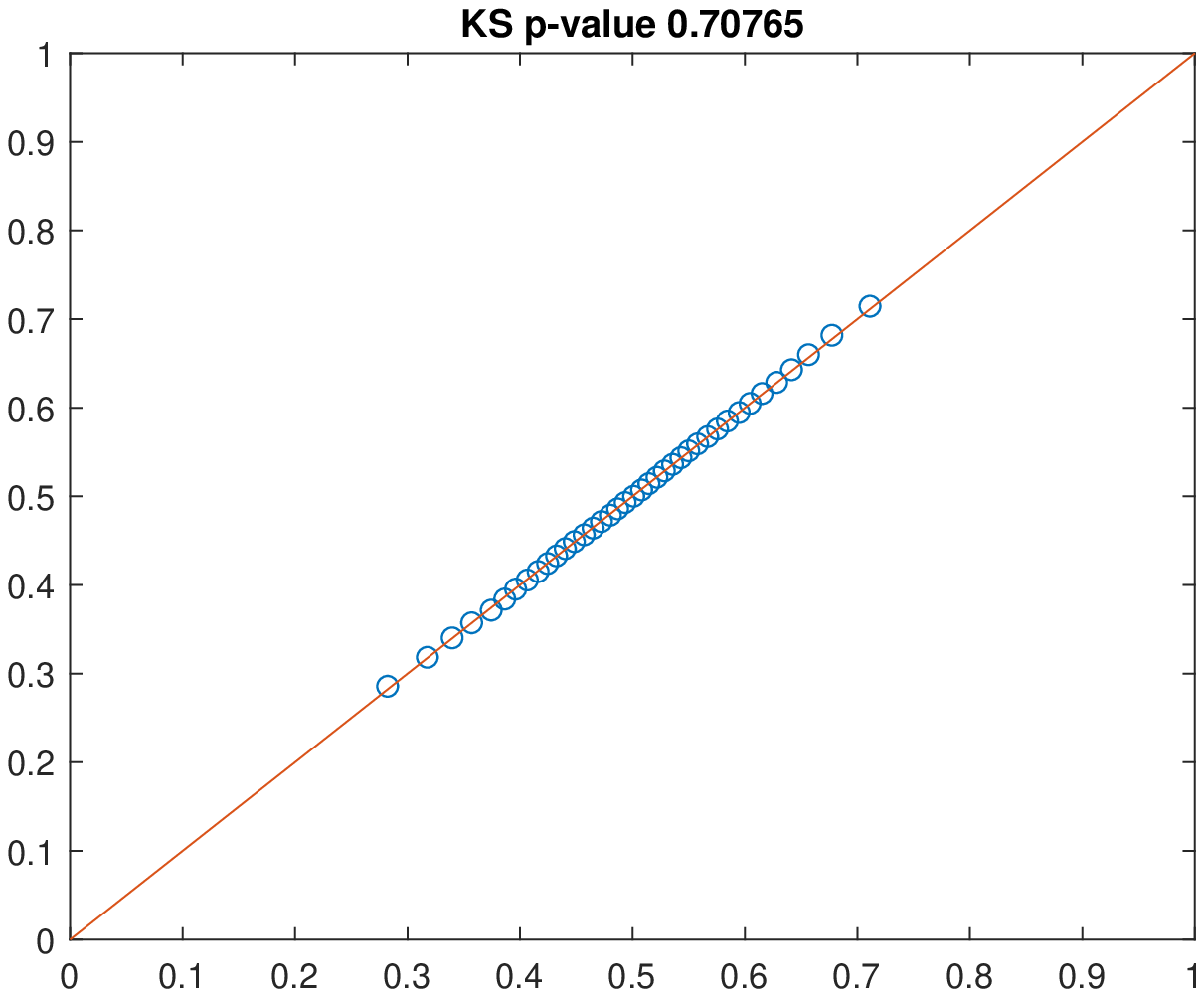} }} \hspace*{5mm}
	\subfloat
	\centering{{\includegraphics[width=.28\linewidth]{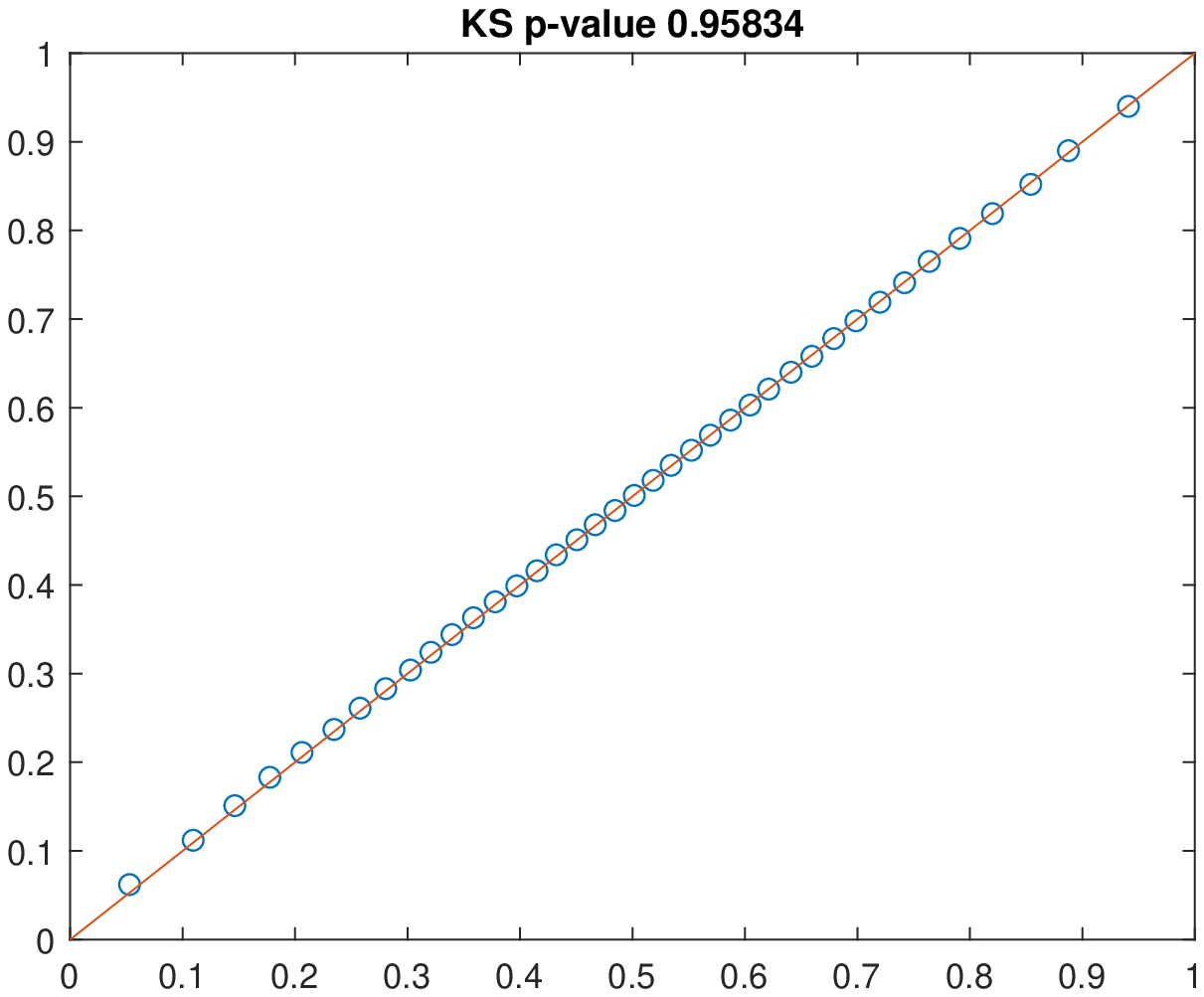} }} \hspace*{5mm}
	\subfloat
	\centering{{\includegraphics[width=.28\linewidth]{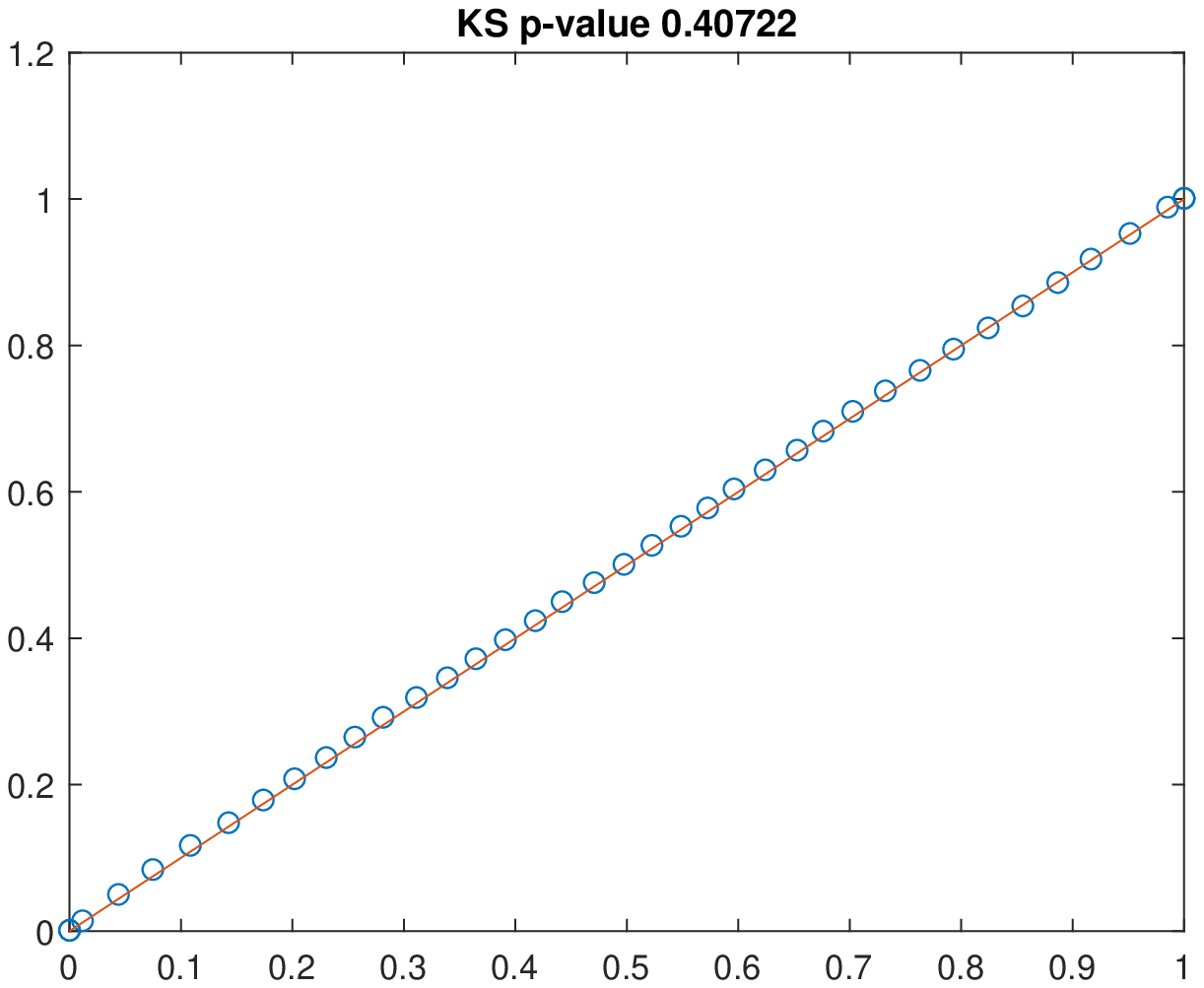}}}
	\caption{(Top row): Histograms for 10,000 samples generated from the law of a Wright--Fisher diffusion started at $x=0.5$ at time 0, sampled at times $t=0.05,0.25,0.5$ respectively, with the process allowed to be absorbed at the boundaries. Note that samples equal to 0 or 1 are not included in the above histograms, but their relative frequency can be found from the empirical probabilities found in Table \ref{AbsorptionProbabilitiesD2}. The truncated transition density is plotted in red. (Bottom row): QQ-plots for the corresponding samples with the $p$-value returned from the Kolmogorov--Smirnov test reported above the plot.}
\end{figure}

\begin{figure}[H]
	\centering
	\subfloat
	\centering{{\includegraphics[width=.28\linewidth]{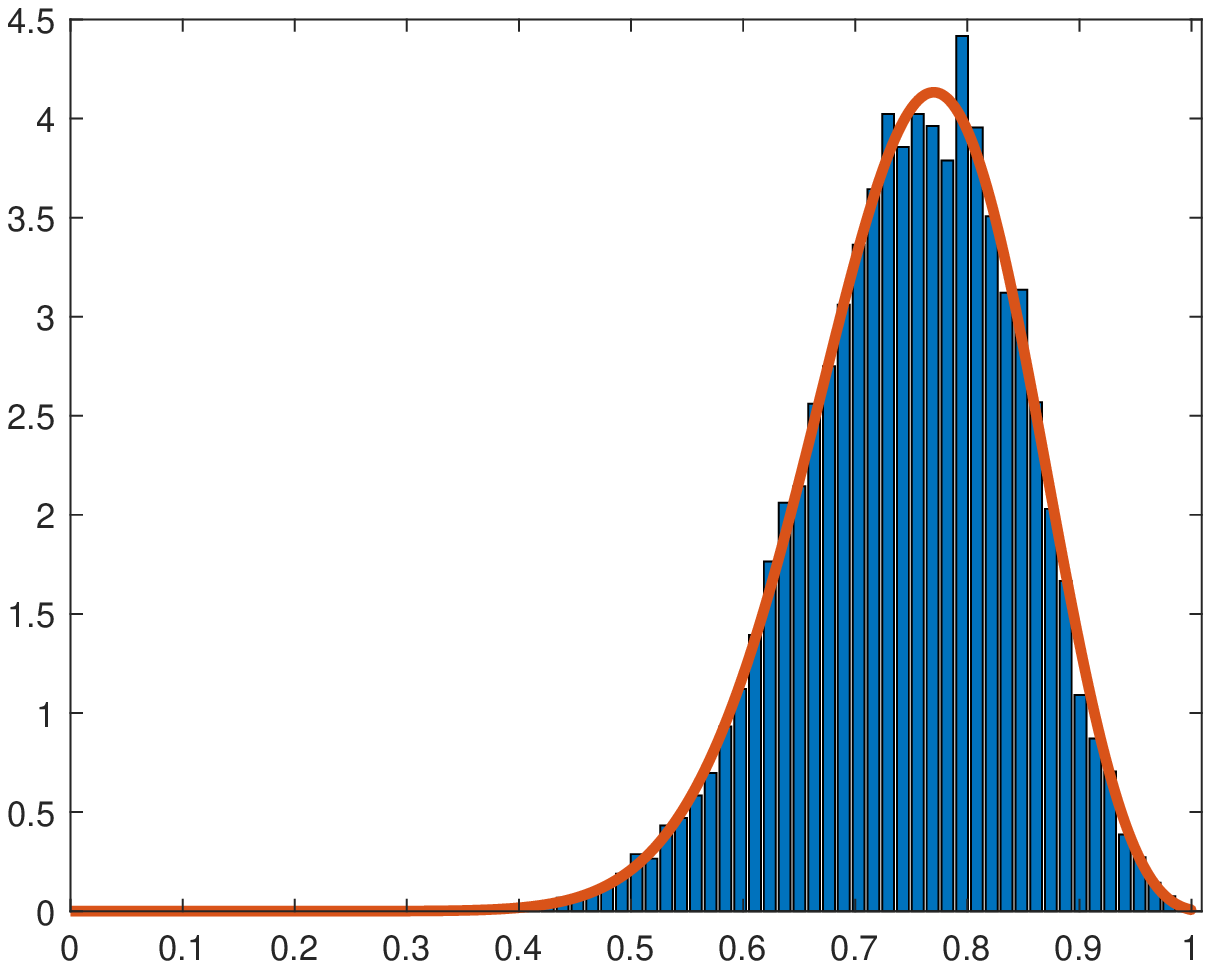} }} \hspace*{5mm}
	\subfloat
	\centering{{\includegraphics[width=.28\linewidth]{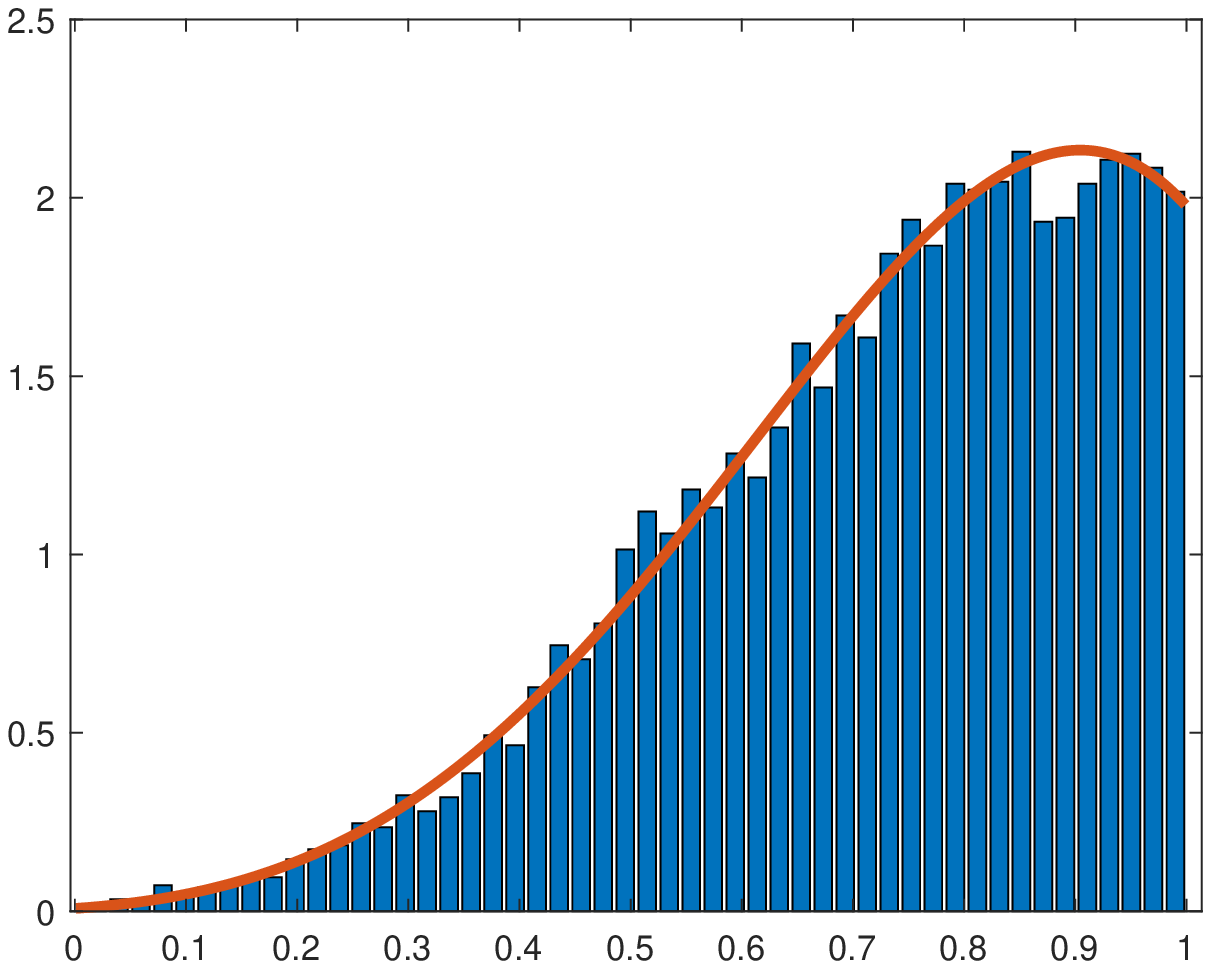} }} \hspace*{5mm}
	\subfloat
	\centering{{\includegraphics[width=.28\linewidth]{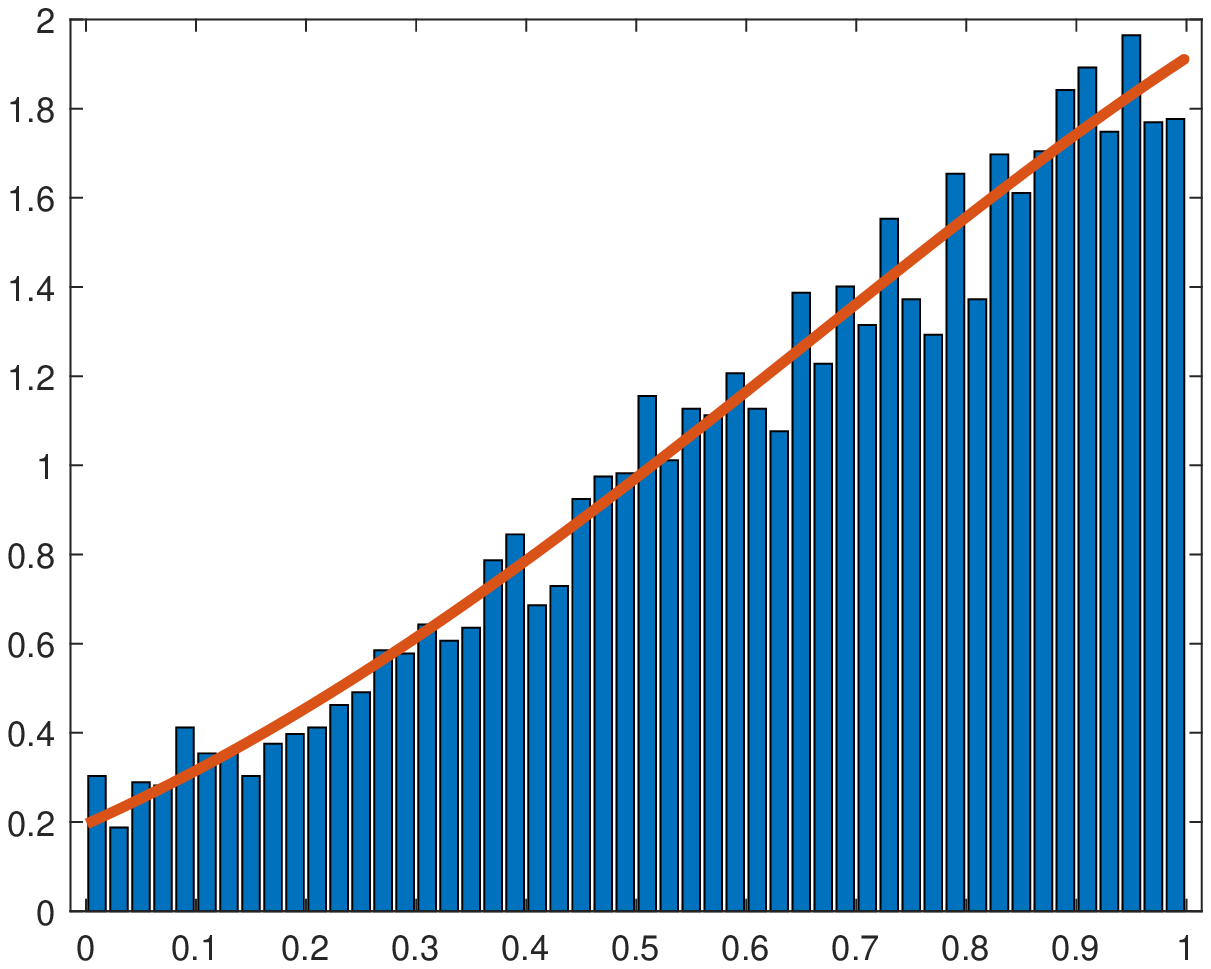}}}
	\\ \vspace*{5mm}
	\centering
	\subfloat
	\centering{{\includegraphics[width=.28\linewidth]{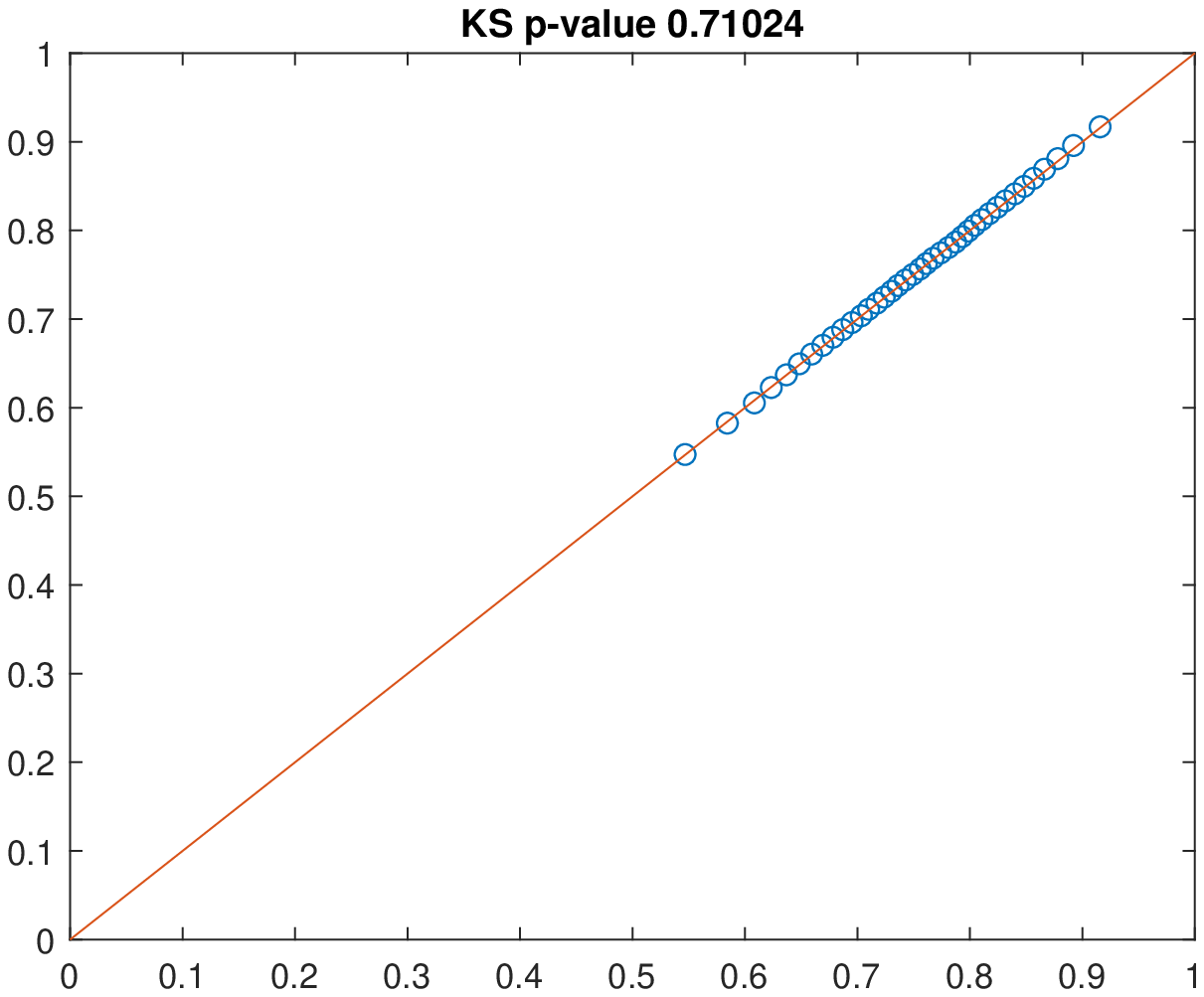} }} \hspace*{5mm}
	\subfloat
	\centering{{\includegraphics[width=.28\linewidth]{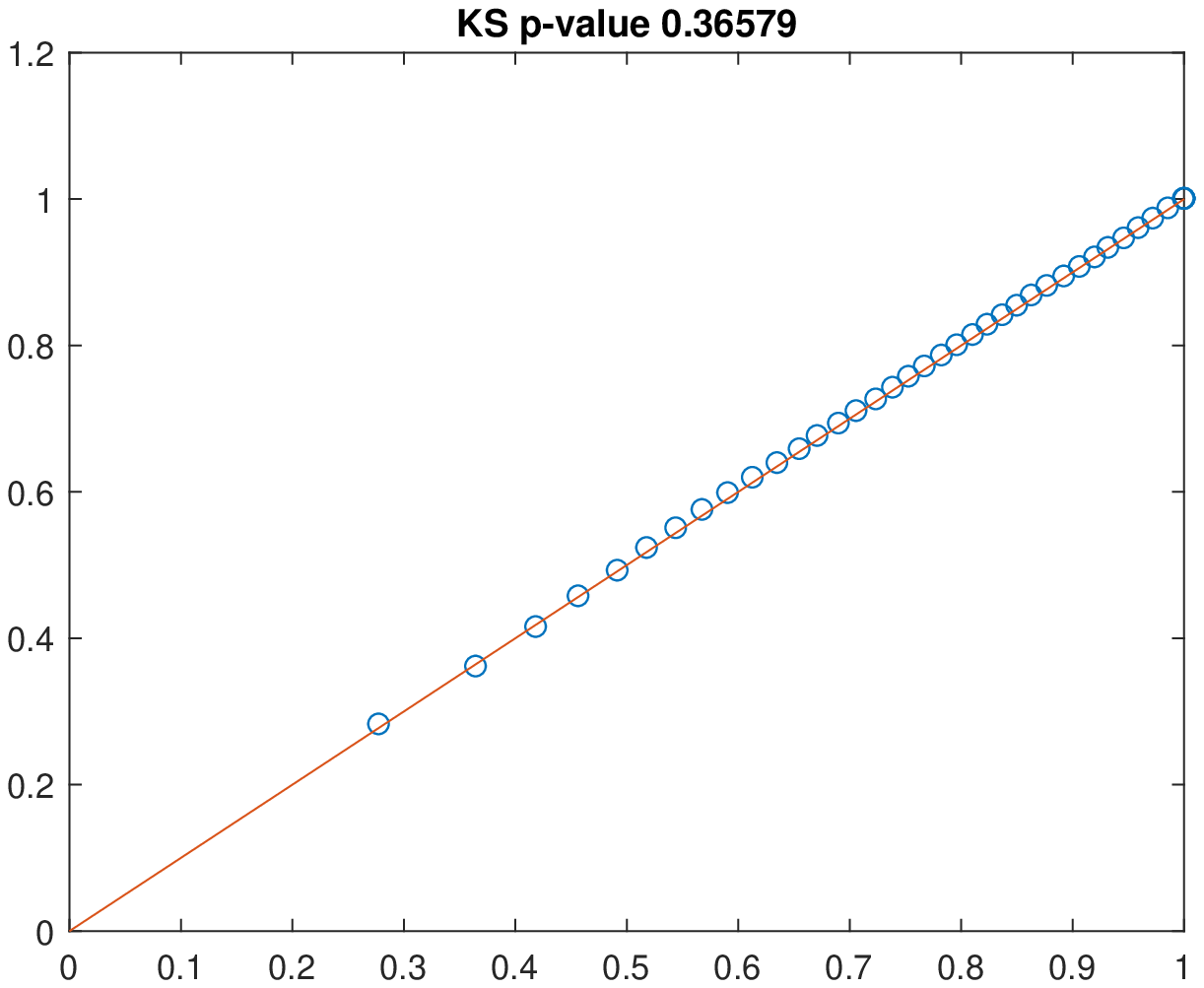} }} \hspace*{5mm}
	\subfloat
	\centering{{\includegraphics[width=.28\linewidth]{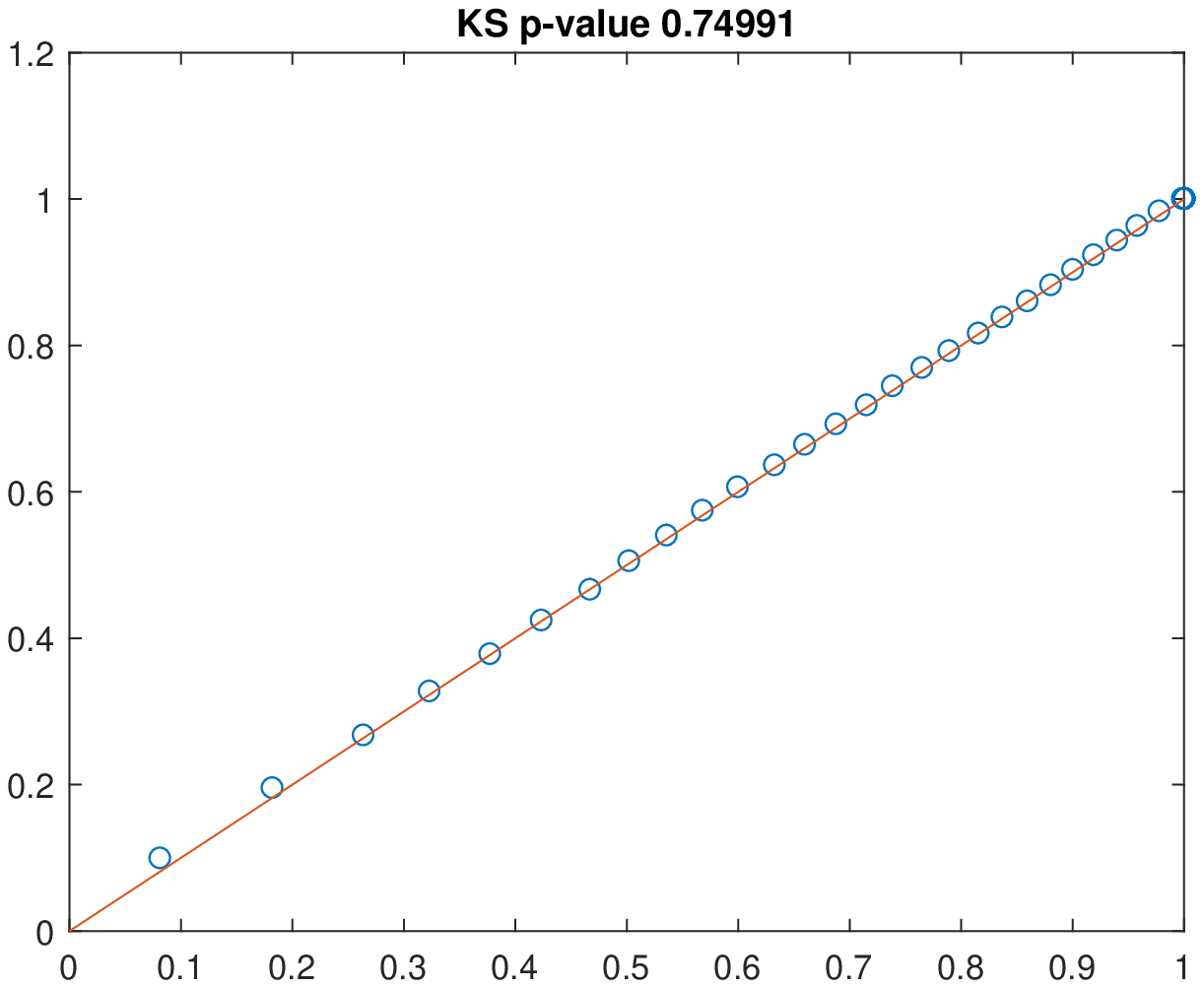}}}
	\caption{(Top row): Histograms for 10,000 samples generated from the law of a Wright--Fisher diffusion started at $x=0.75$ at time 0, sampled at times $t=0.05,0.25,0.5$ respectively, with the process allowed to be absorbed at the boundaries. Note that samples equal to 0 or 1 are not included in the above histograms, but their relative frequency can be found from the empirical probabilities found in Table \ref{AbsorptionProbabilitiesD3}. The truncated transition density is plotted in red. (Bottom row): QQ-plots for the corresponding samples with the $p$-value returned from the Kolmogorov--Smirnov test reported above the plot.}
\end{figure}

\subsection{Diffusion bridges conditioned on non-absorption}

\noindent To validate the diffusion bridge simulation, we chose to simulate draws from the following three diffusion bridges:

\begin{table}[H]
	\begin{center}
		\begin{tabular}{c|c c c c}
			& $(t_0,x_0)$ & $(t_1,x_1)$ & $(t_2,x_2)$ & $(t_3,x_3)$ \\ \hline
			Bridge 1 & (0,0) & (0.05,0.1) & (0.1,0.25) & \\ \hline 
			Bridge 2 & (0.2,0.1) & (0.3,0.3) & (0.4,0.4) & (0.5,0.5) \\ \hline
			Bridge 3 & (0,1) & (0.5,0.95) & &  
		\end{tabular}
		\caption{The left and right endpoints for the three different bridges simulated, where $(t_0,x_0)$ denotes the bridge's start time $t_0$ and start point $x_0$, $(t_1,x_1)$ denotes the second observation time and point for the diffusion bridge and so on.}
		\label{ConditionedBridges}
	\end{center}
\end{table}

\noindent We further considered the following sampling times for each bridge:

\begin{table}[H]
	\begin{center}
		\begin{tabular}{c|c c c}
			& $s_1$ & $s_2$ & $s_3$ \\ \hline
			Bridge 1 & 0.025 & 0.065 & 0.085 \\ \hline
			Bridge 2 & 0.25 & 0.35 & 0.45 \\ \hline
			Bridge 3 & 0.1 & 0.25 & 0.3 
		\end{tabular}
	\end{center}
	\caption{Sampling times for the three different diffusion bridges considered.}
	\label{ConditionedBridgesSamplingTimes}
\end{table}

\noindent The output generated is plotted below, starting with bridge 1, and the sampling times $s_i$ increasing from left to right. Again all the output strongly indicates that the method is returning draws from the desired target distribution.

\begin{figure}[H]
	\centering
	\subfloat
	\centering{{\includegraphics[width=.28\linewidth]{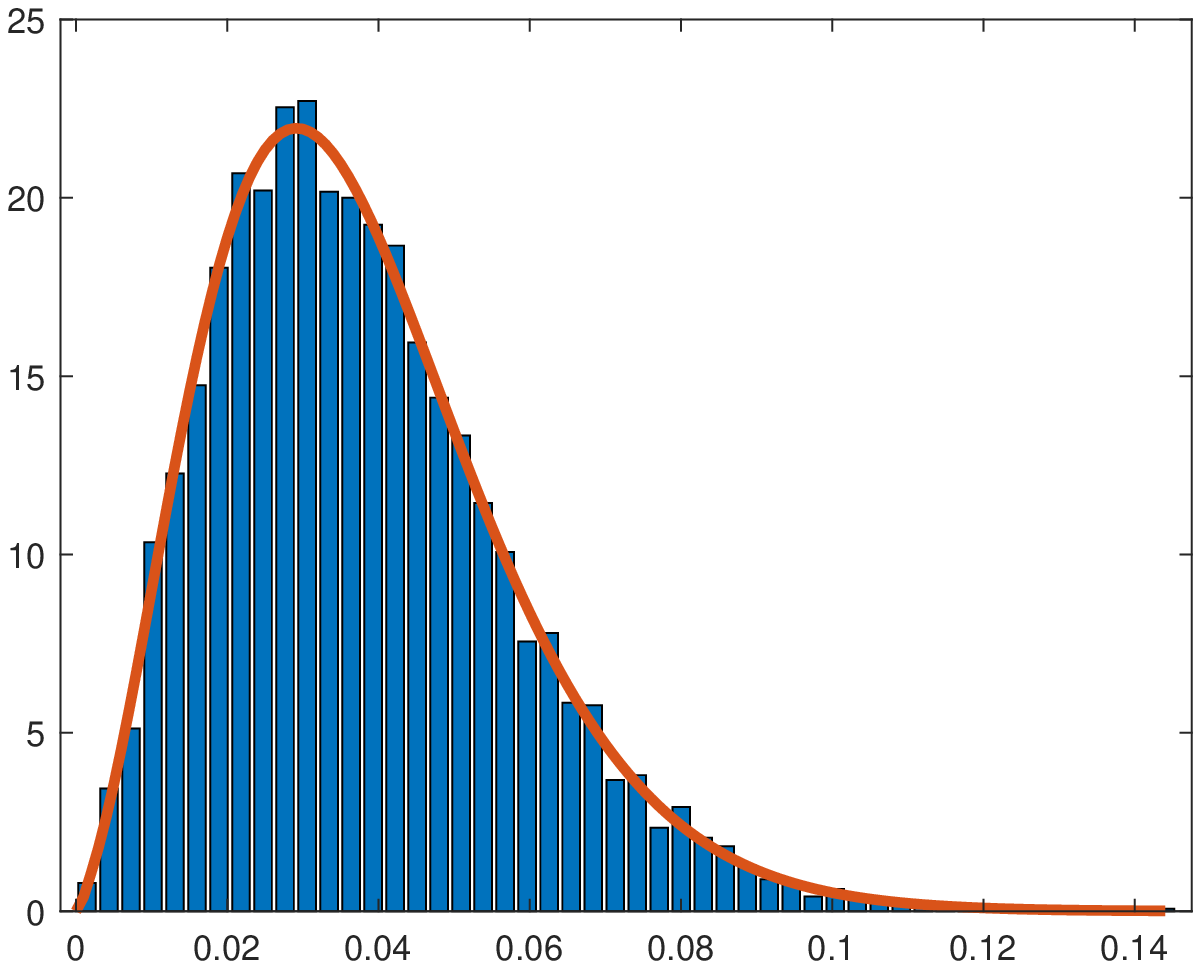} }} \hspace*{5mm}
	\subfloat
	\centering{{\includegraphics[width=.28\linewidth]{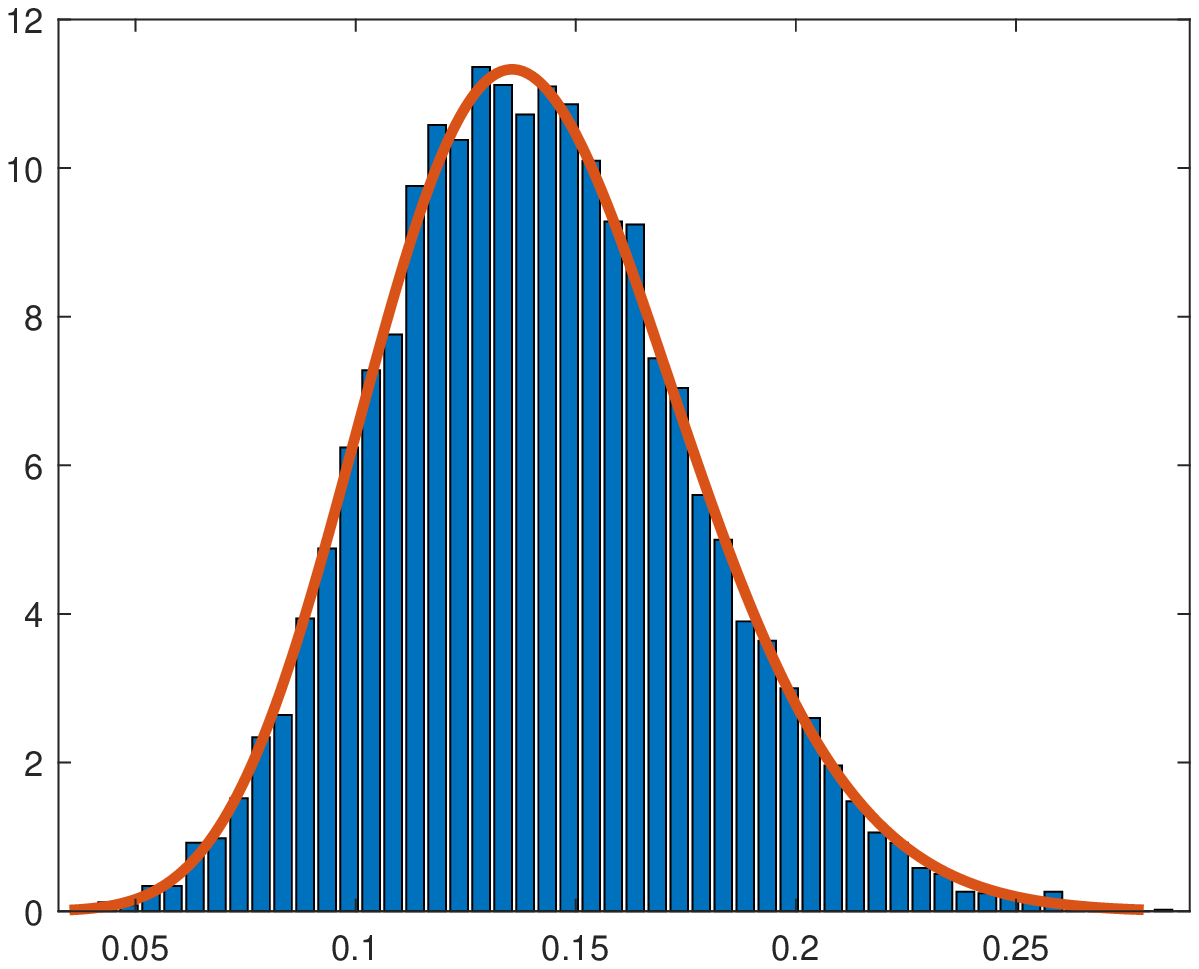} }} \hspace*{5mm}
	\subfloat
	\centering{{\includegraphics[width=.28\linewidth]{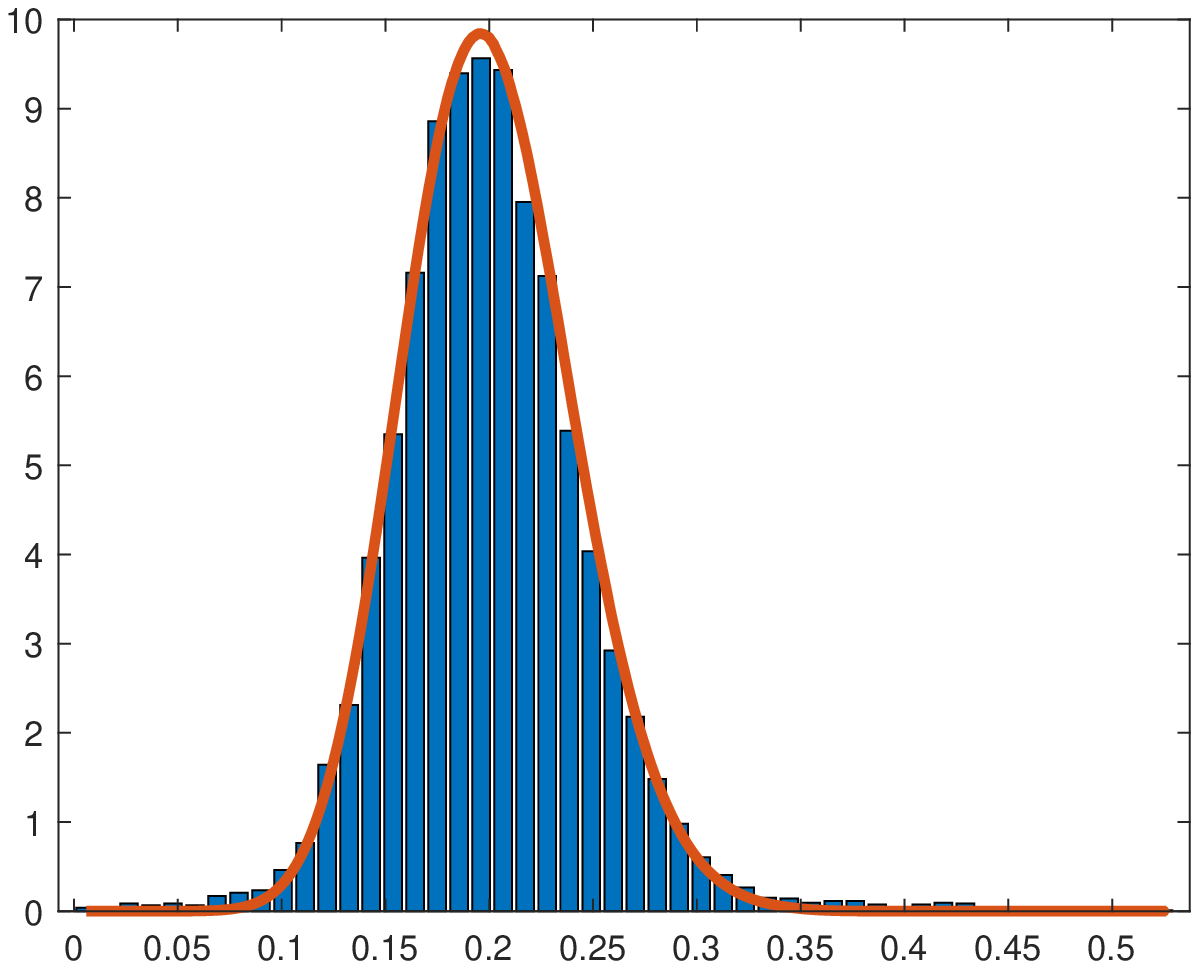}}}
	\\ \vspace*{5mm}
	\centering
	\subfloat
	\centering{{\includegraphics[width=.28\linewidth]{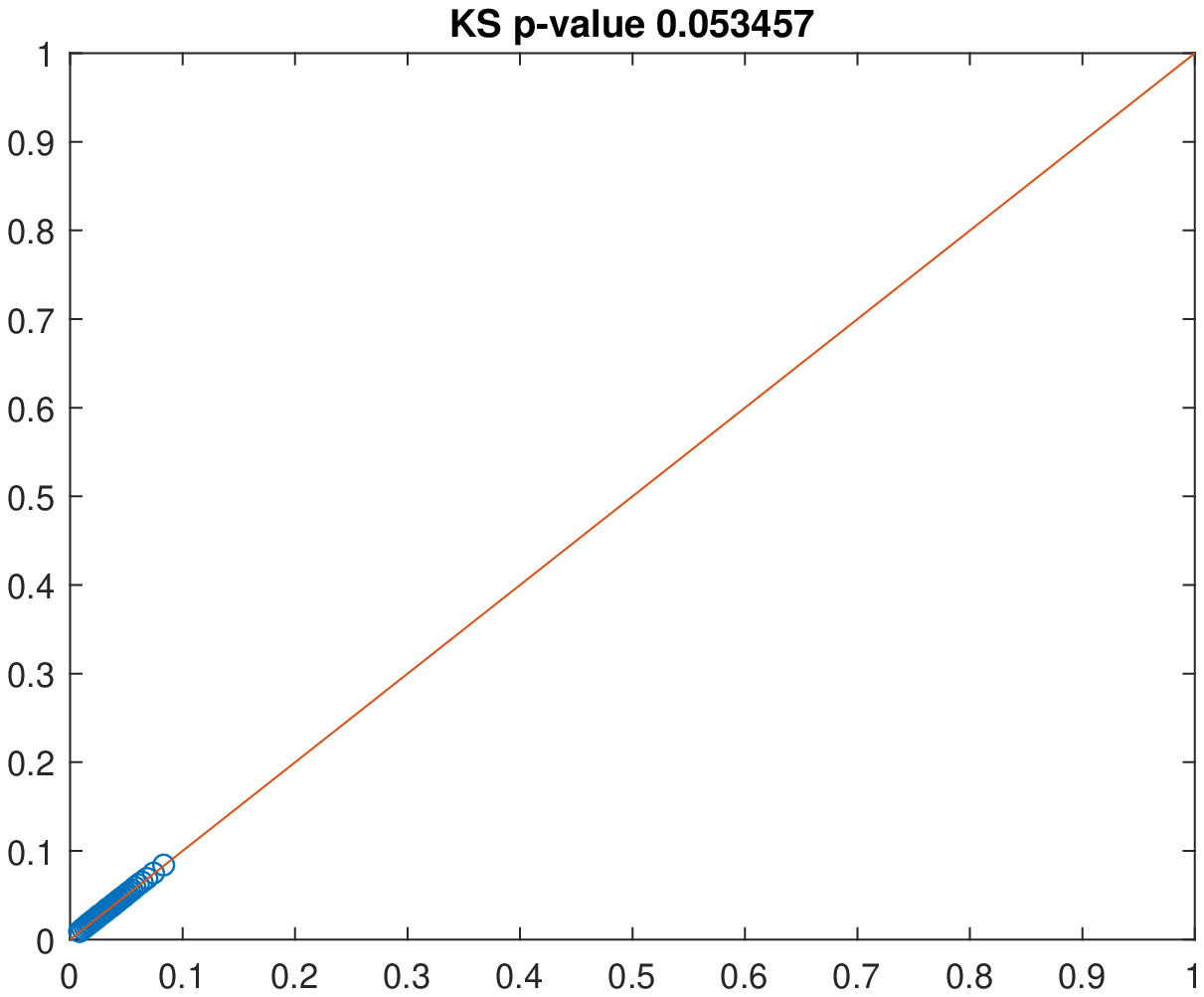} }} \hspace*{5mm}
	\subfloat
	\centering{{\includegraphics[width=.28\linewidth]{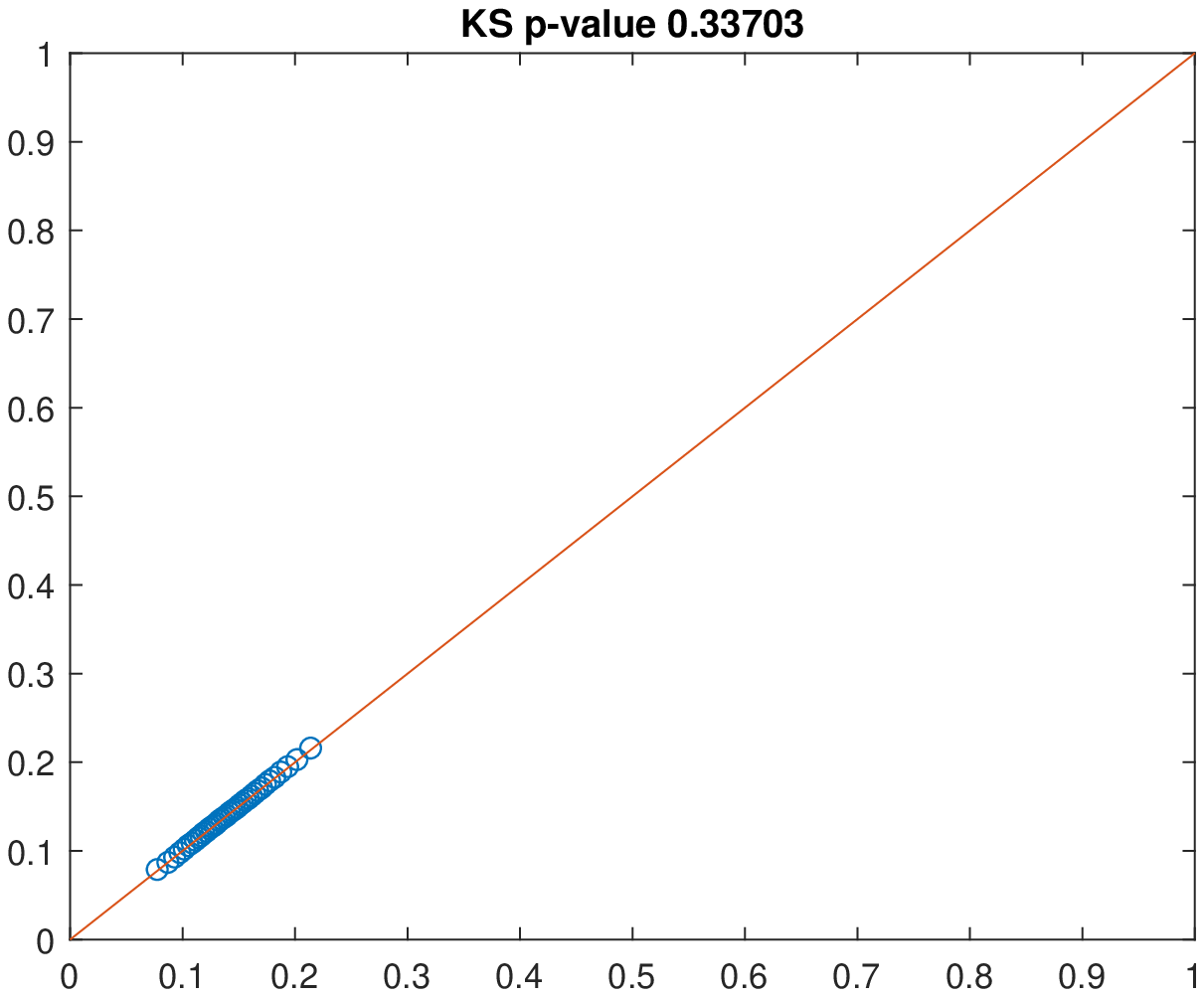} }} \hspace*{5mm}
	\subfloat
	\centering{{\includegraphics[width=.28\linewidth]{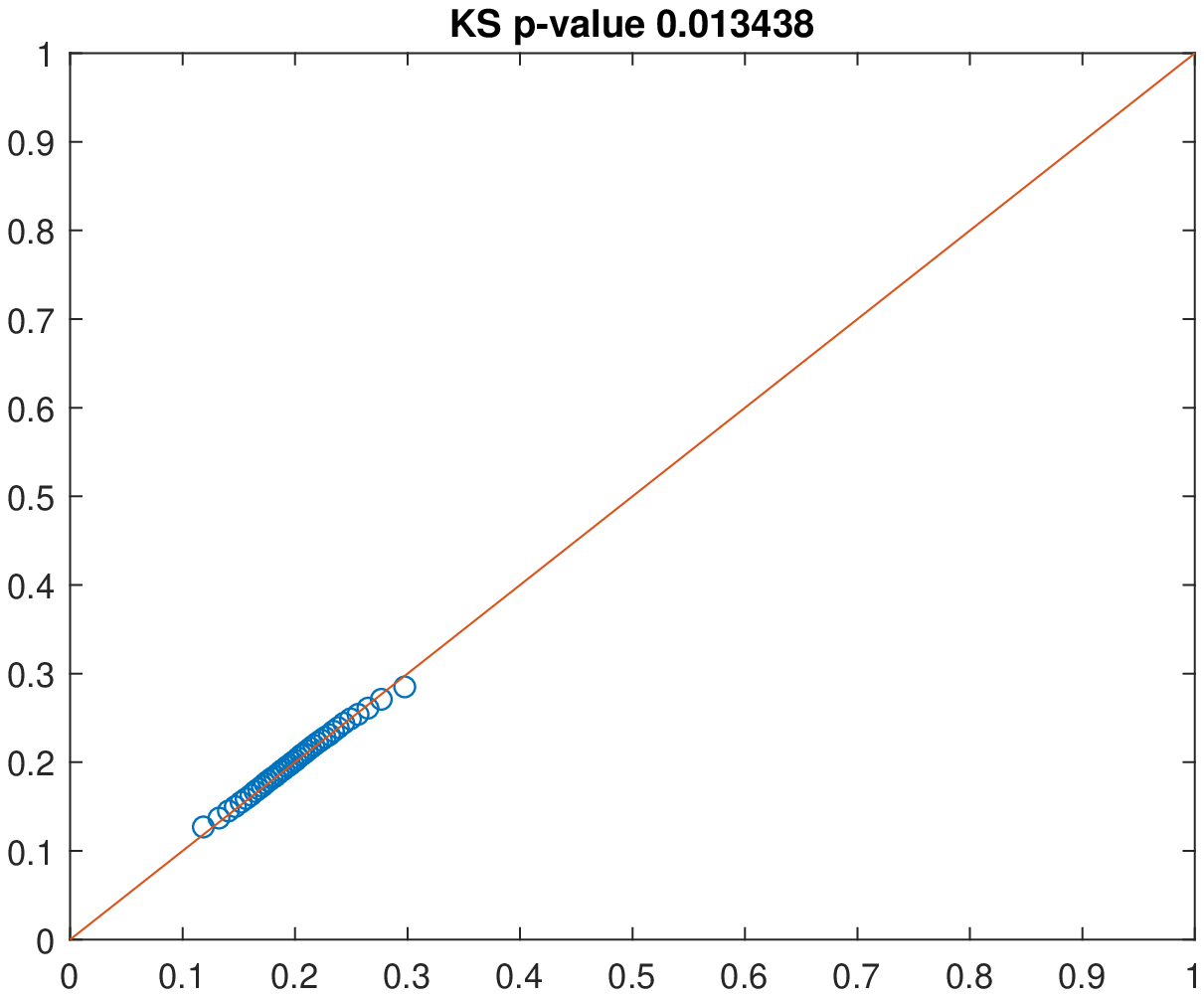}}}
	\caption{(Top row): Histograms for 10,000 samples generated from the law of the Wright--Fisher diffusion bridge `Bridge 1' in Table \ref{ConditionedBridges} above, sampled at the times given by the corresponding row in Table \ref{ConditionedBridgesSamplingTimes}. The truncated transition density is plotted in red. (Bottom row): QQ-plots for the corresponding samples with the $p$-value returned from the Kolmogorov--Smirnov test reported above the plot.}
\end{figure}

\begin{figure}[H]
	\centering
	\subfloat
	\centering{{\includegraphics[width=.28\linewidth]{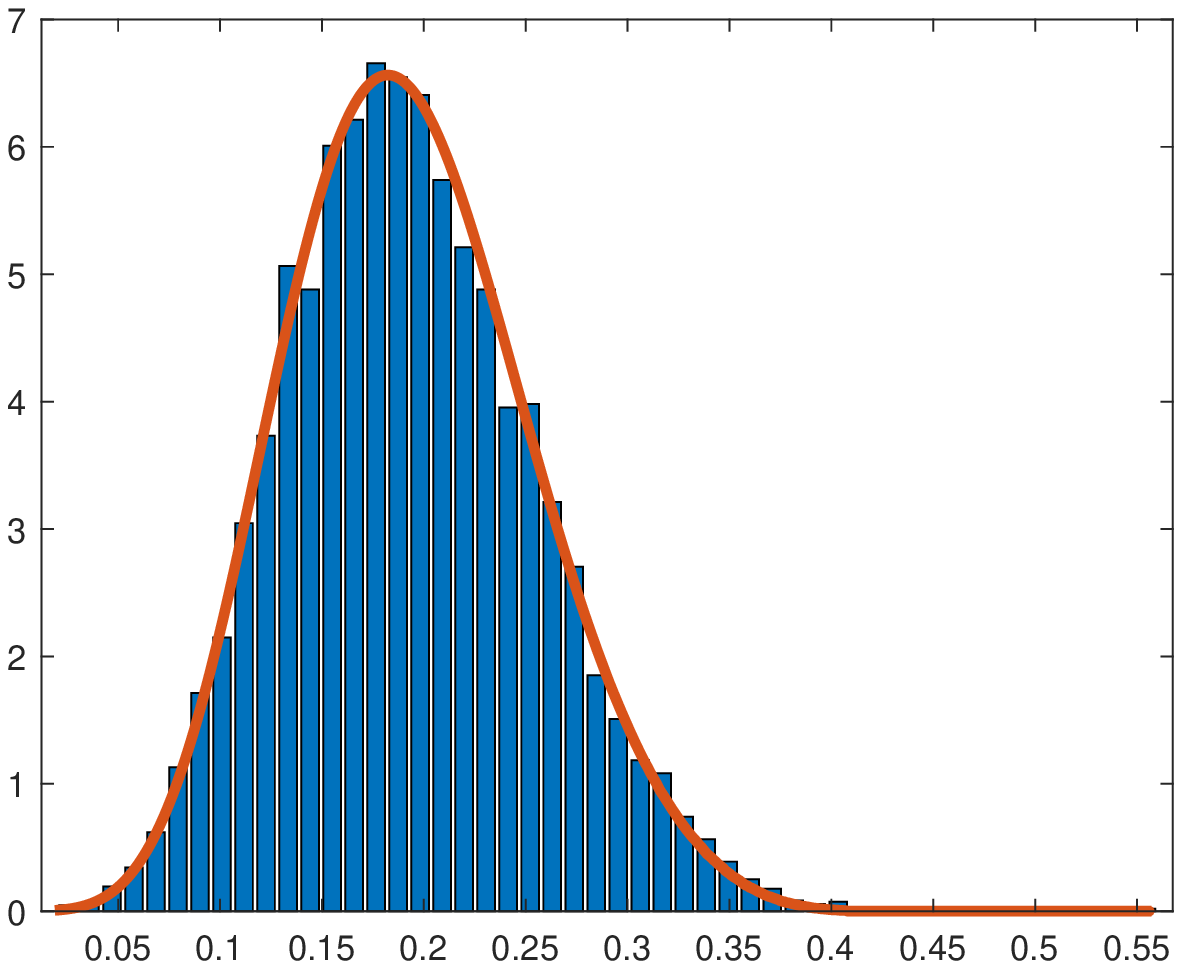} }} \hspace*{5mm}
	\subfloat
	\centering{{\includegraphics[width=.28\linewidth]{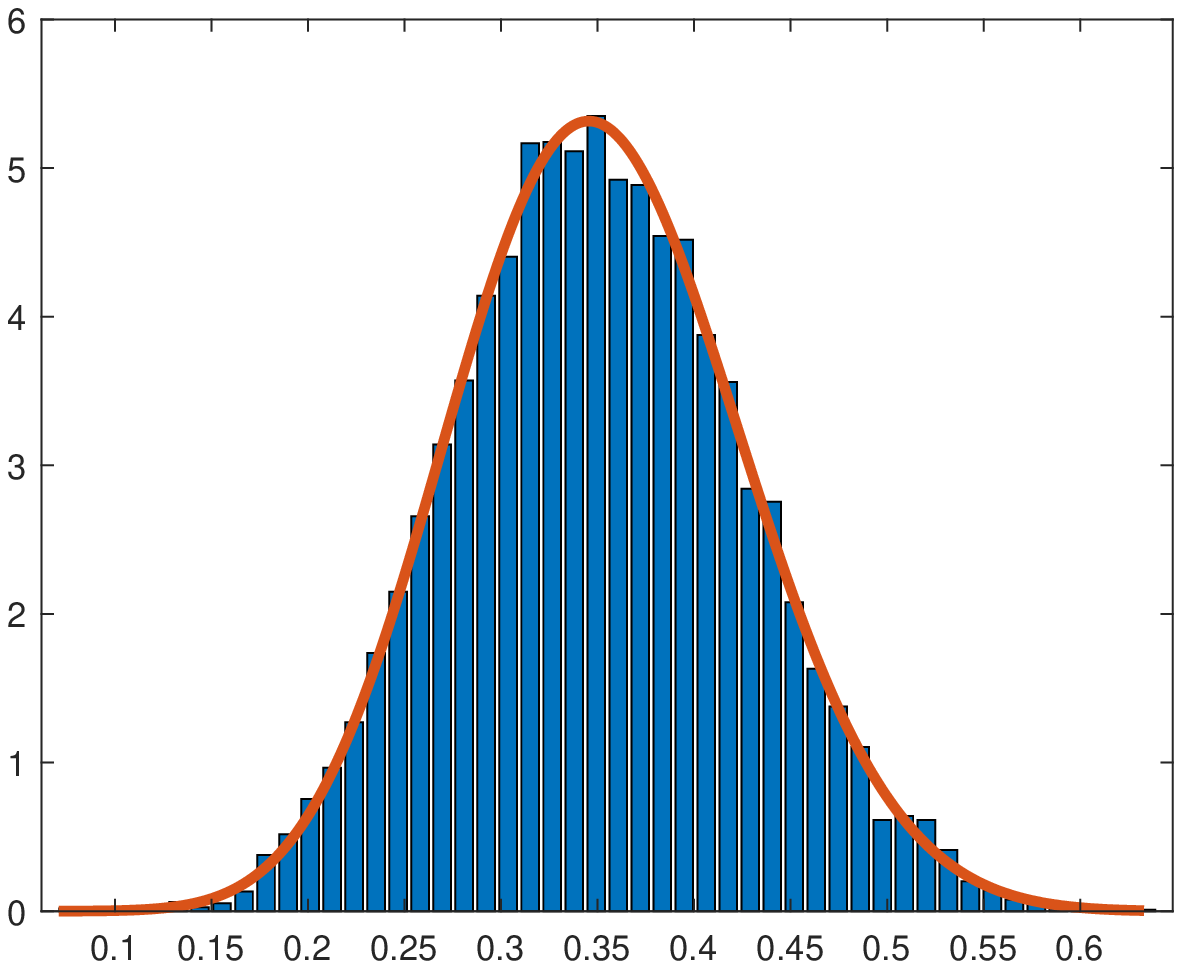} }} \hspace*{5mm}
	\subfloat
	\centering{{\includegraphics[width=.28\linewidth]{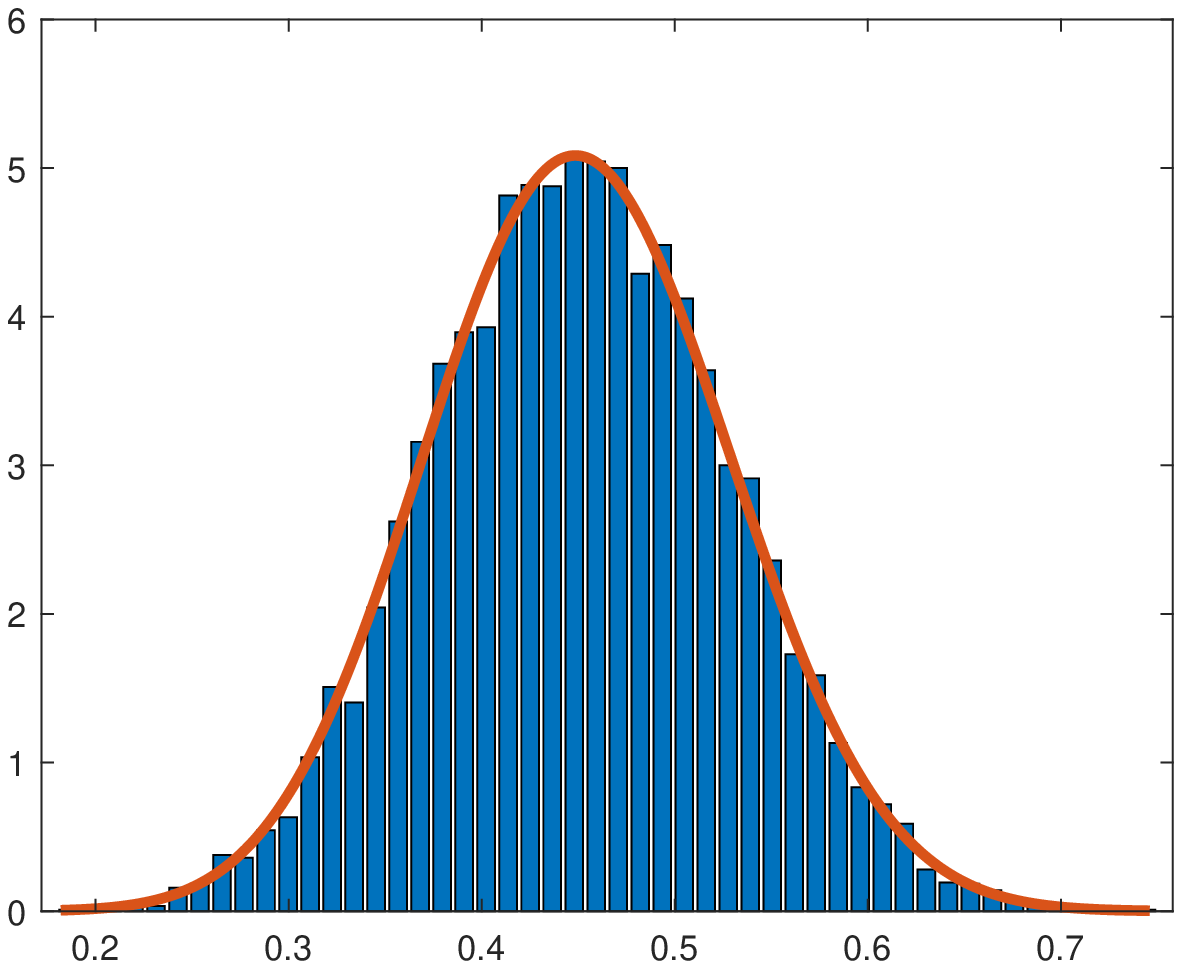}}}
	\\ \vspace*{5mm}
	\centering
	\subfloat
	\centering{{\includegraphics[width=.28\linewidth]{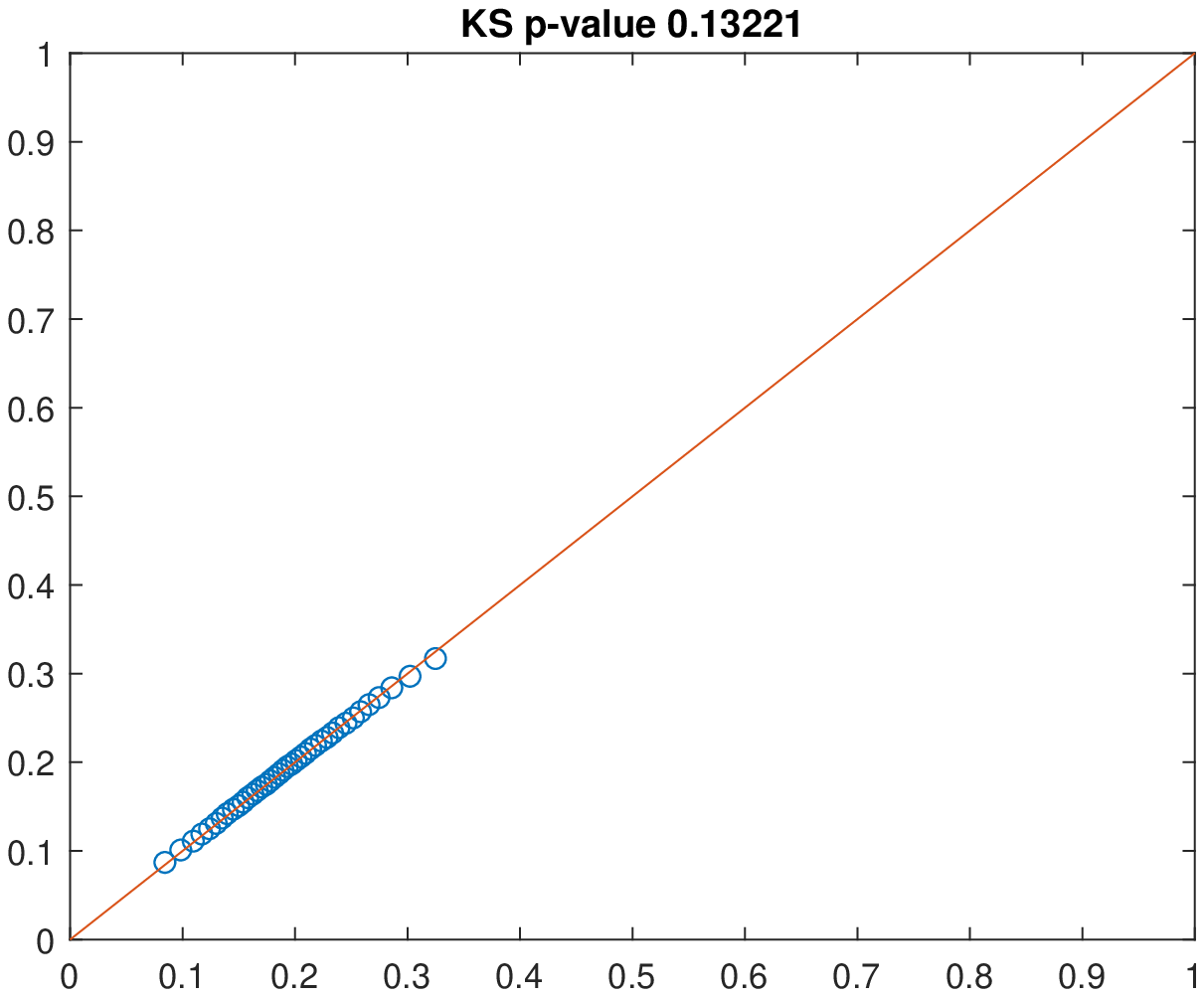} }} \hspace*{5mm}
	\subfloat
	\centering{{\includegraphics[width=.28\linewidth]{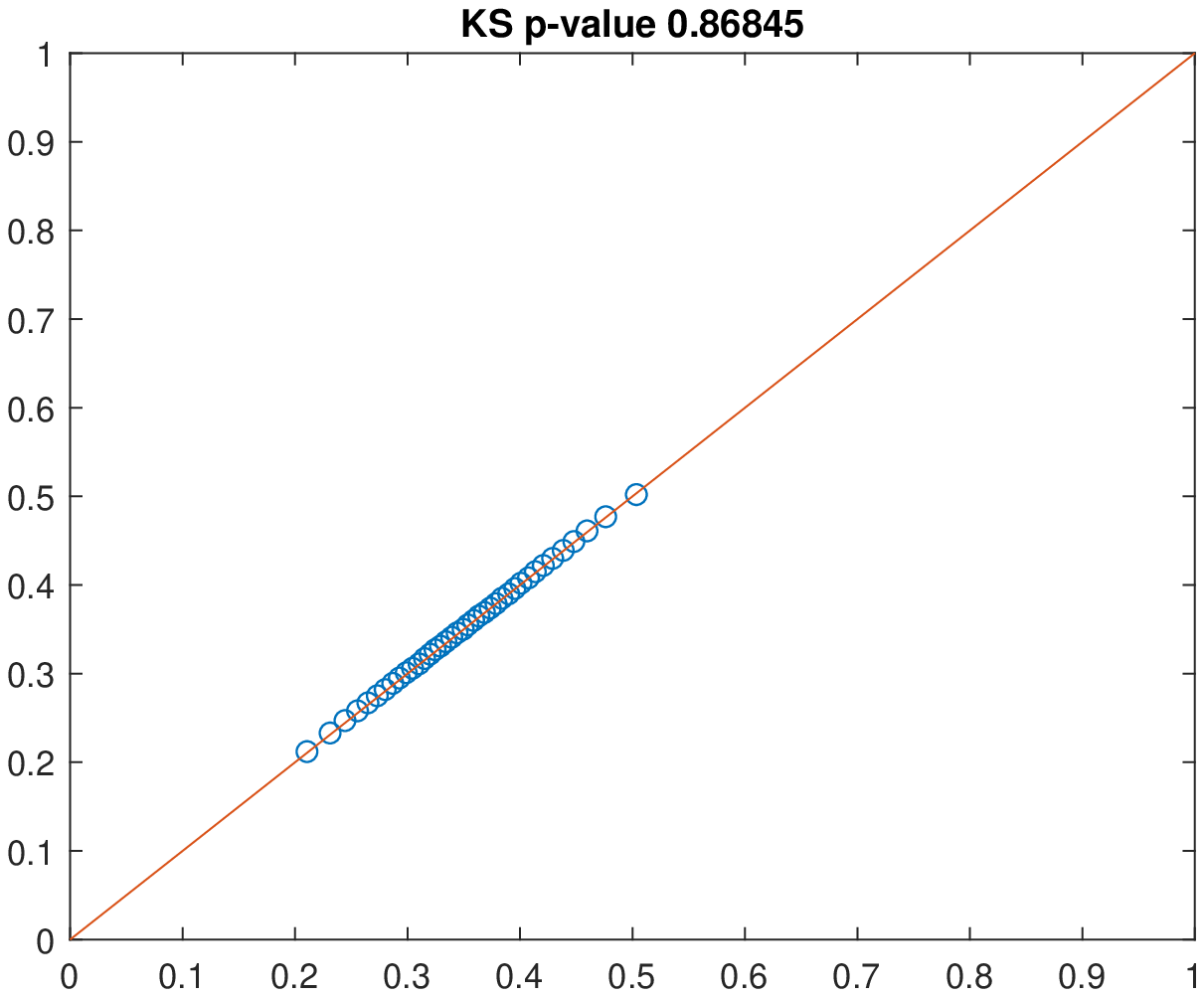} }} \hspace*{5mm}
	\subfloat
	\centering{{\includegraphics[width=.28\linewidth]{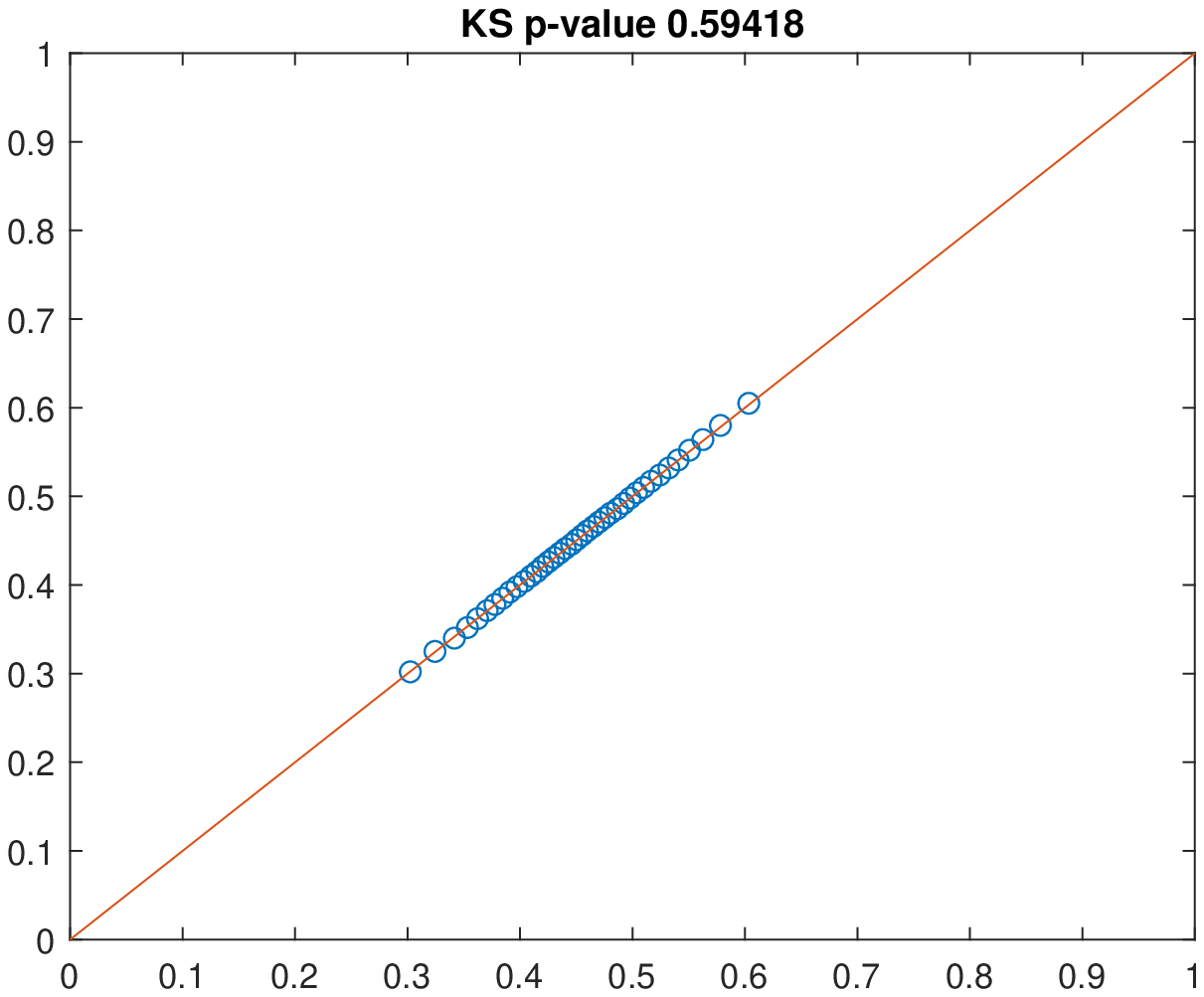}}}
	\caption{(Top row): Histograms for 10,000 samples generated from the law of the Wright--Fisher diffusion bridge `Bridge 2' as given in Table \ref{ConditionedBridges} above, sampled at the times given by the corresponding row in Table \ref{ConditionedBridgesSamplingTimes}. The truncated transition density is plotted in red. (Bottom row): QQ-plots for the corresponding samples with the $p$-value returned from the Kolmogorov--Smirnov test reported above the plot.}
\end{figure}

\begin{figure}[H]
	\centering
	\subfloat
	\centering{{\includegraphics[width=.28\linewidth]{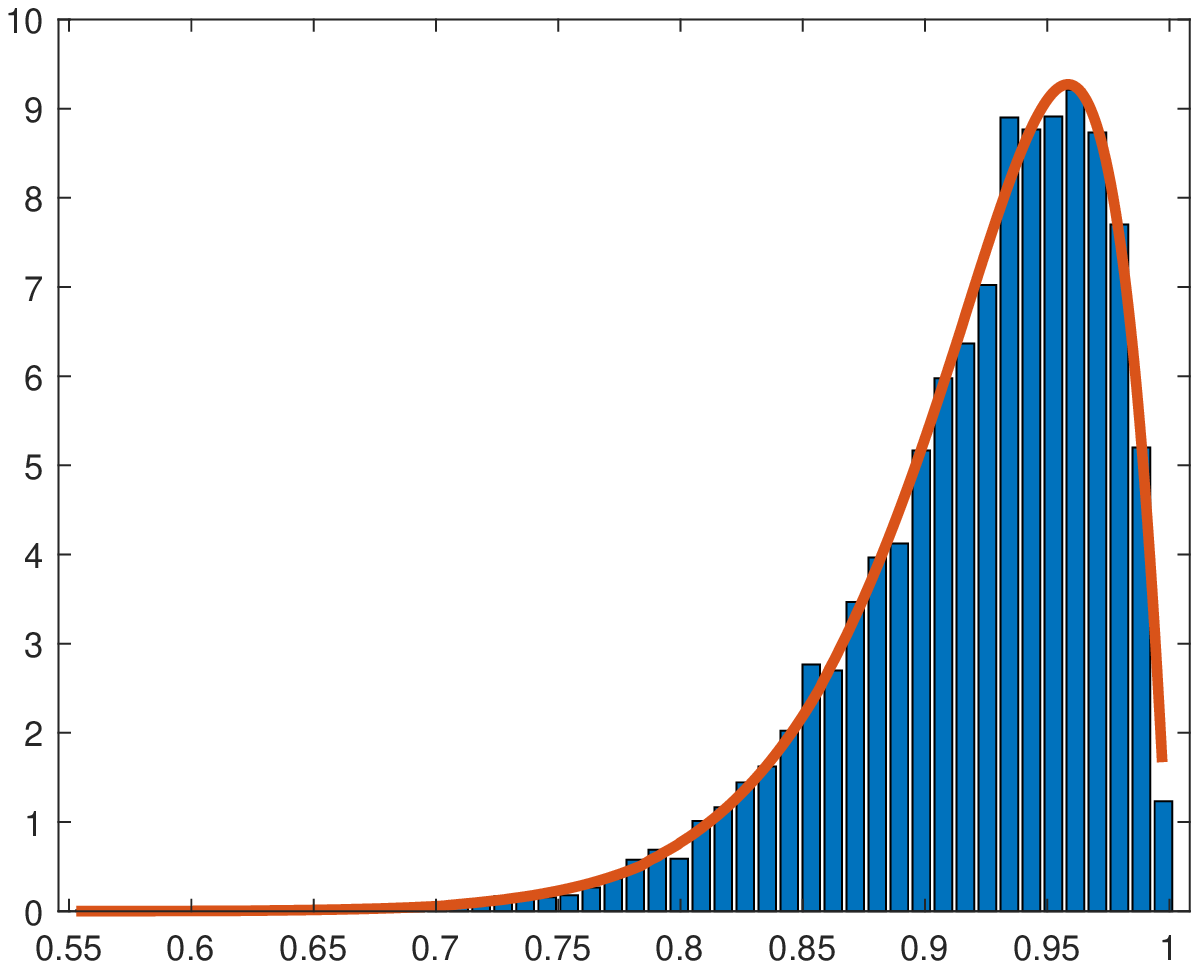} }} \hspace*{5mm}
	\subfloat
	\centering{{\includegraphics[width=.28\linewidth]{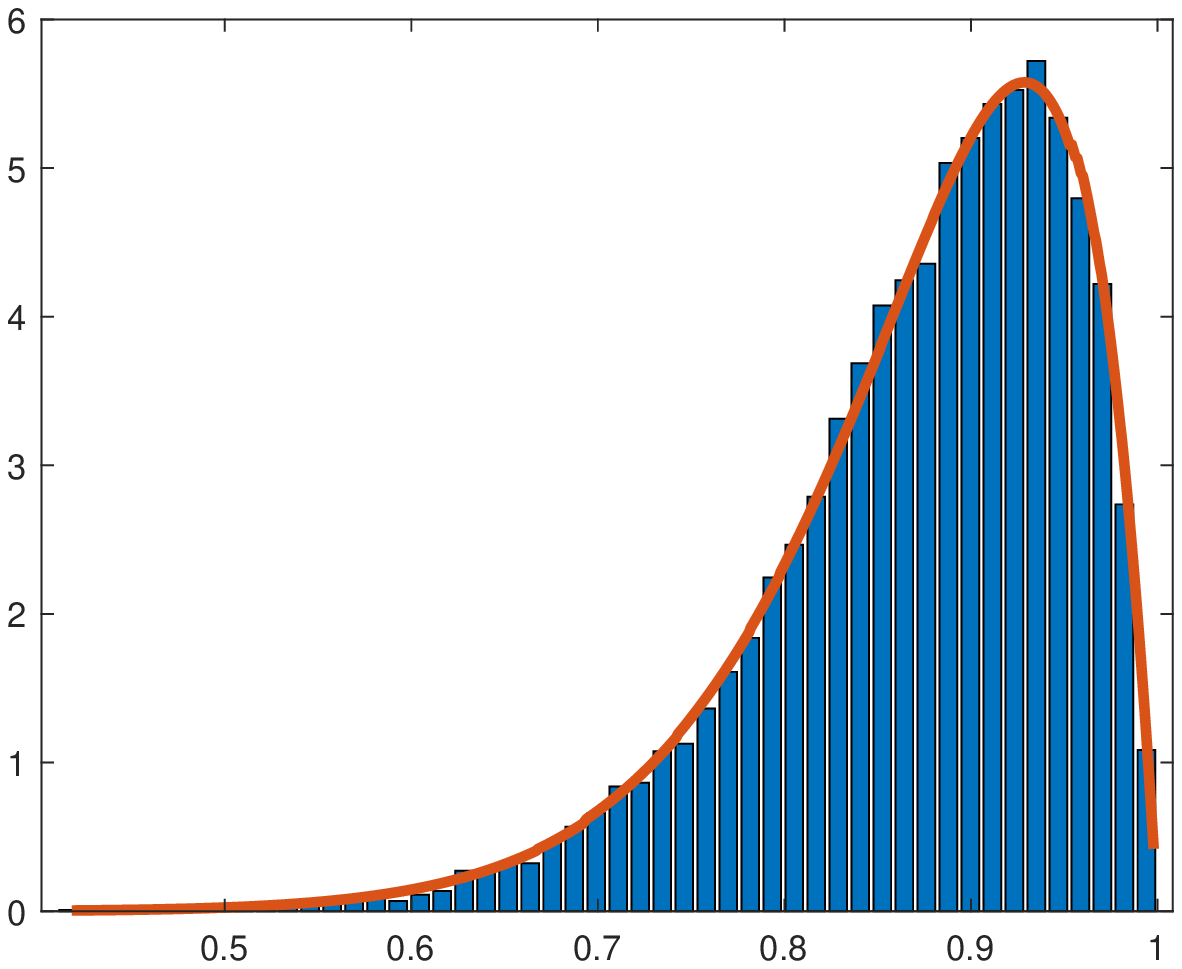} }} \hspace*{5mm}
	\subfloat
	\centering{{\includegraphics[width=.28\linewidth]{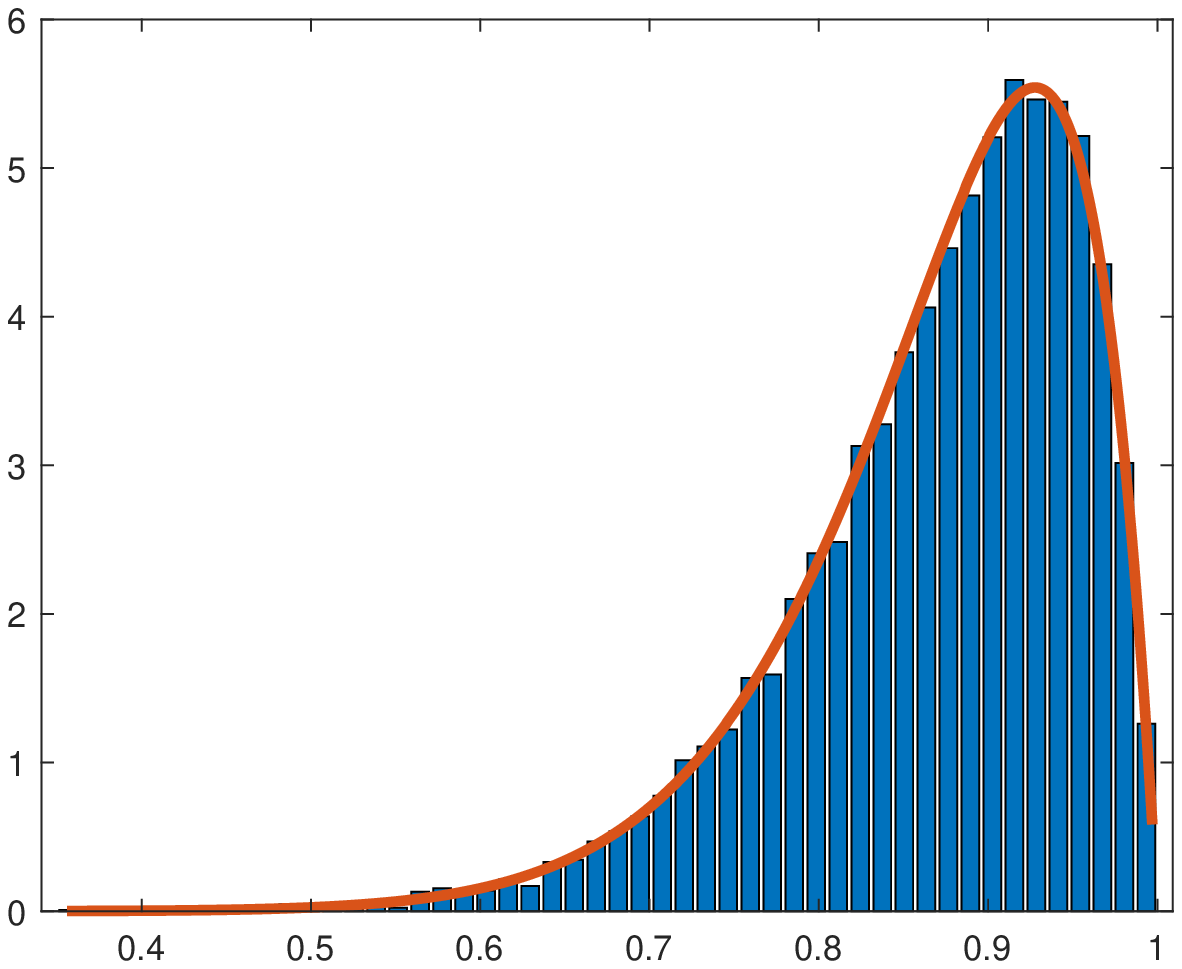}}}
	\\ \vspace*{5mm}
	\centering
	\subfloat
	\centering{{\includegraphics[width=.28\linewidth]{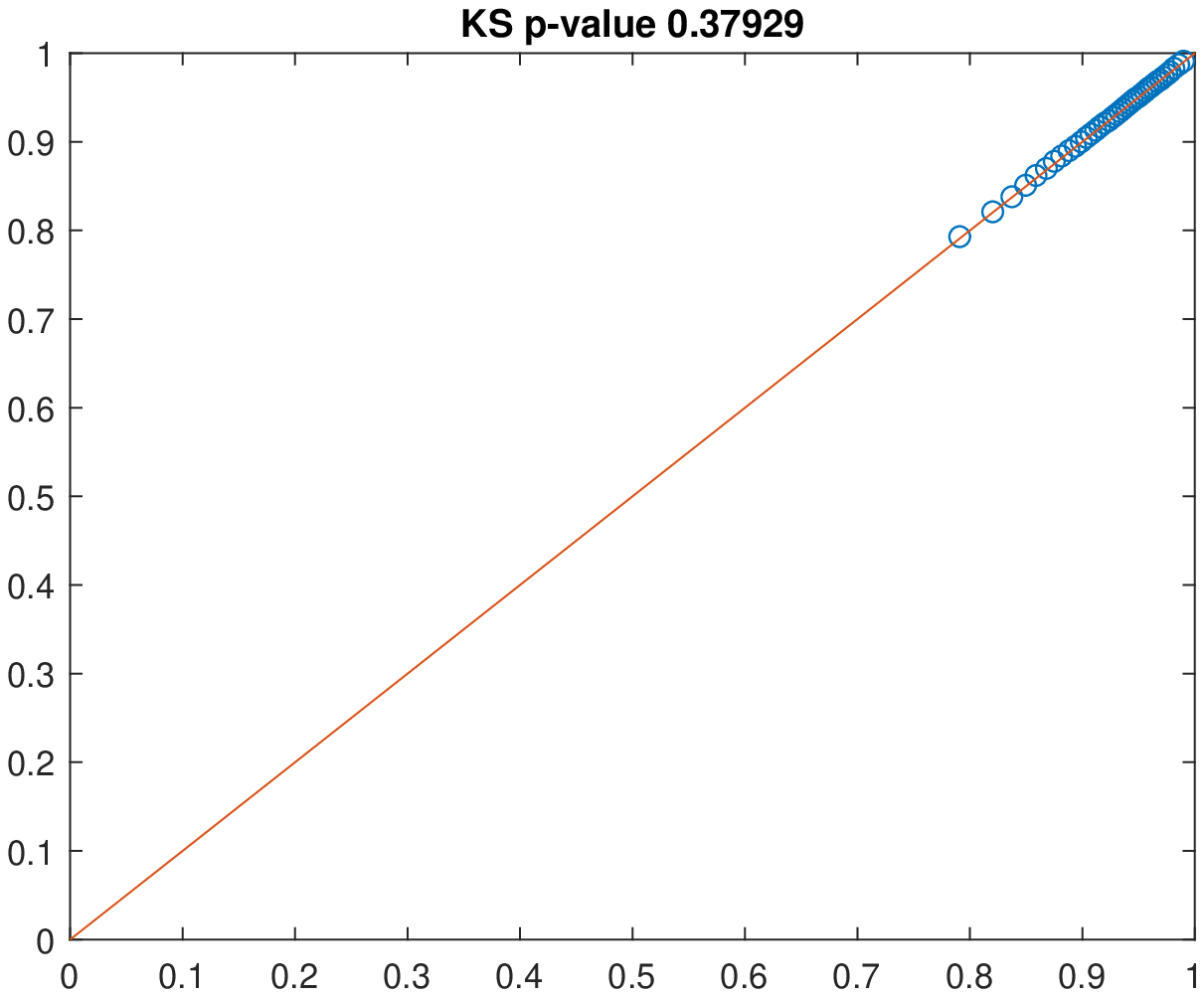} }} \hspace*{5mm}
	\subfloat
	\centering{{\includegraphics[width=.28\linewidth]{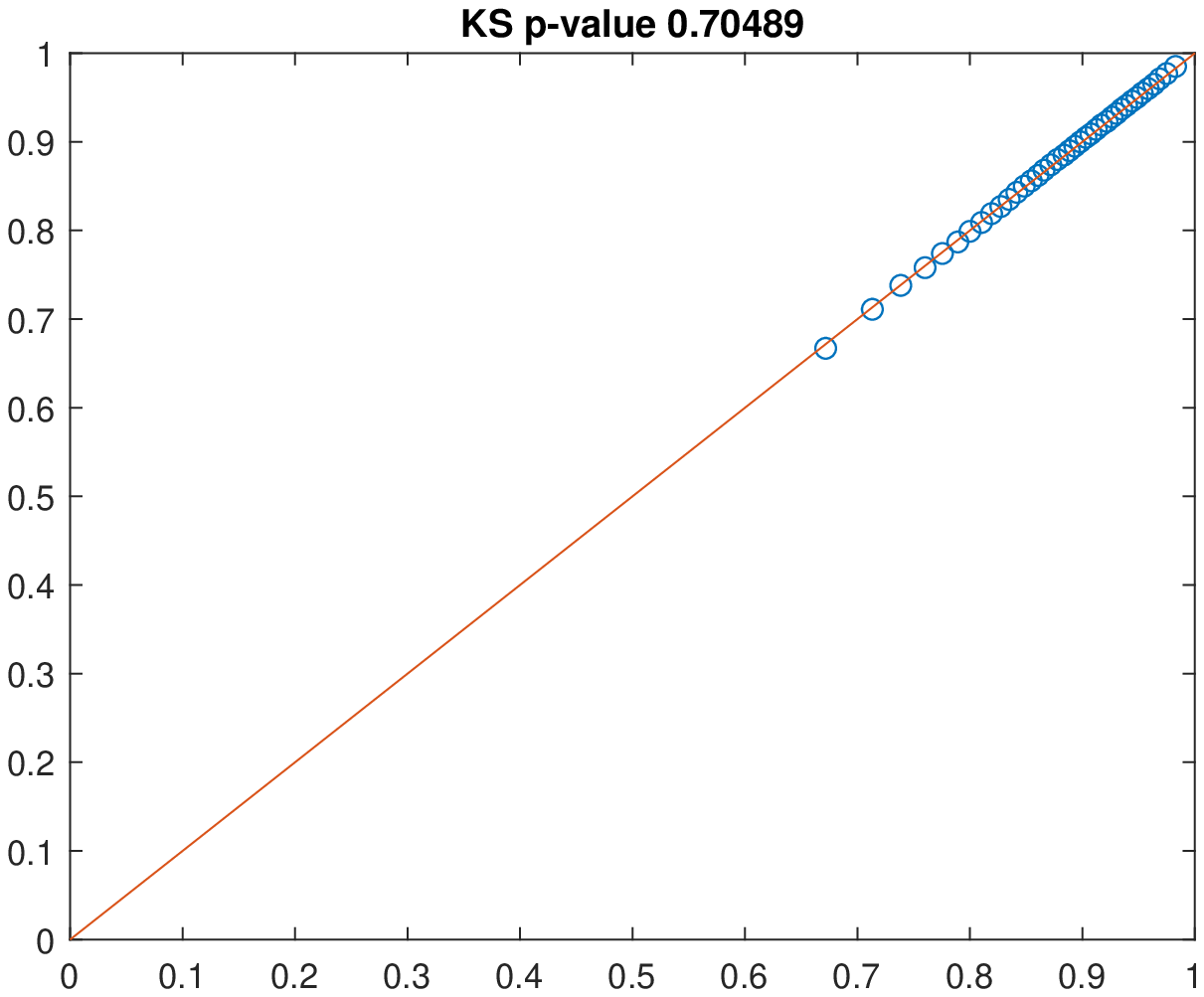} }} \hspace*{5mm}
	\subfloat
	\centering{{\includegraphics[width=.28\linewidth]{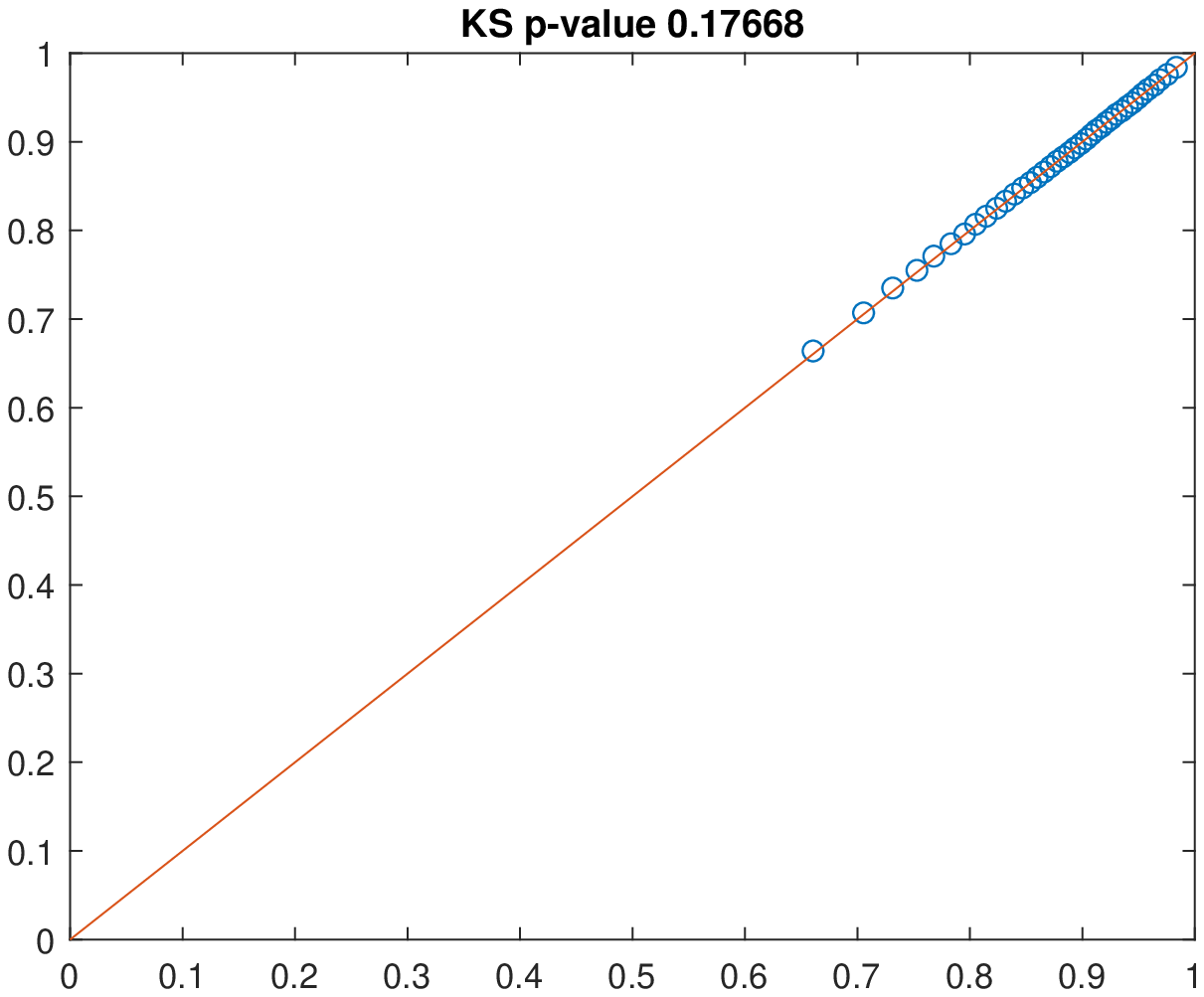}}}
	\caption{(Top row): Histograms for 10,000 samples generated from the law of the Wright--Fisher diffusion bridge `Bridge 3' as given in Table \ref{ConditionedBridges} above, sampled at the times given by the corresponding row in Table \ref{ConditionedBridgesSamplingTimes}. The truncated transition density is plotted in red. (Bottom row): QQ-plots for the corresponding samples with the $p$-value returned from the Kolmogorov--Smirnov test reported above the plot.}
\end{figure}

\subsection{Unconditioned bridges}
\noindent When the diffusion bridge is allowed to be absorbed at the boundary and $\boldsymbol{\theta}=\boldsymbol{0}$, we need only consider the cases when $z\in\{0,1\}$. To this end we considered the following two setups:

\begin{table}[H]
	\begin{center}
		\begin{tabular}{c|c c}
			& $(t_0,x_0)$ & $(t_1,x_1)$ \\ \hline
			Bridge 1 & (0,0.25) & (0.3,1) \\ \hline 
			Bridge 2 & (0,0.5) & (0.5,0) 
		\end{tabular}
	\end{center}
	\caption{The left and right endpoints for the three different bridges simulated, where $(t_0,x_0)$ denotes the bridge's start time $t_0$ and start point $x_0$, $(t_1,x_1)$ denotes the second observation time and point for the diffusion bridge and so on.}
	\label{UnconditionedBridges}
\end{table}

\noindent We further considered the following sampling times:

\begin{table}[H]
	\begin{center}
		\begin{tabular}{c|c c c}
			& $s_1$ & $s_2$ & $s_3$ \\ \hline
			Bridge 1 & 0.05 & 0.15 & 0.25 \\ \hline
			Bridge 2 & 0.05 & 0.25 & 0.45
		\end{tabular}
	\end{center}
	\caption{Sampling times for the two different diffusion bridges considered.}
	\label{UnconditionedBridgesSamplingTimes}
\end{table}

\noindent As in the diffusion case, we report the probability of absorption at the boundary in the table below, where once more $\widehat{\mathbb{P}}$ denotes the empirical estimate for this quantity whereas $\mathbb{P}$ is the theoretical value obtained by evaluating the truncation to the transition density at the boundary.

\begin{table}[H]
	\begin{center}
		\begin{tabular}{c|c c}
			Bridge 1 & $\widehat{\mathbb{P}}[\textnormal{Absorbed at 1}]$ & $\mathbb{P}[\textnormal{Absorbed at 1}]$ \\ \hline
			$s = 0.05$ & 0 & 3.900485e-16 \\
			$s = 0.15$ & 7e-4 & 6.752749e-4 \\
			$s = 0.25$ & 0.2331 & 0.234209
		\end{tabular}
	\end{center}
	\caption{Empirical ($\widehat{\mathbb{P}}$) and theoretical ($\mathbb{P}$) absorption probabilities for the diffusion started at $x=0.25$ and ending at $z=1$.}
	\label{AbsorptionProbabilitiesB1}
\end{table}

\begin{table}[H]
	\begin{center}
		\begin{tabular}{c|c c}
			Bridge 2 & $\widehat{\mathbb{P}}[\textnormal{Absorbed at 0}]$ & $\mathbb{P}[\textnormal{Absorbed at 0}]$ \\ \hline
			$s = 0.05$ & 0 & 2.418920e-10 \\
			$s = 0.25$ & 0.0881 & 0.085472 \\
			$s = 0.45$ & 0.7634 & 0.765359
		\end{tabular}
	\end{center}
	\caption{Empirical ($\widehat{\mathbb{P}}$) and theoretical ($\mathbb{P}$) absorption probabilities for the diffusion started at $x=0.5$ and ending at $x=0$.}
	\label{AbsorptionProbabilitiesB2}
\end{table}

\noindent The output generated is plotted below, starting with bridge 1, and the sampling time $s$ increasing from left to right. All of the plots, tests and probabilities above confirm that we are drawing samples from the desired distribution.

\begin{figure}[H]
	\centering
	\subfloat
	\centering{{\includegraphics[width=.28\linewidth]{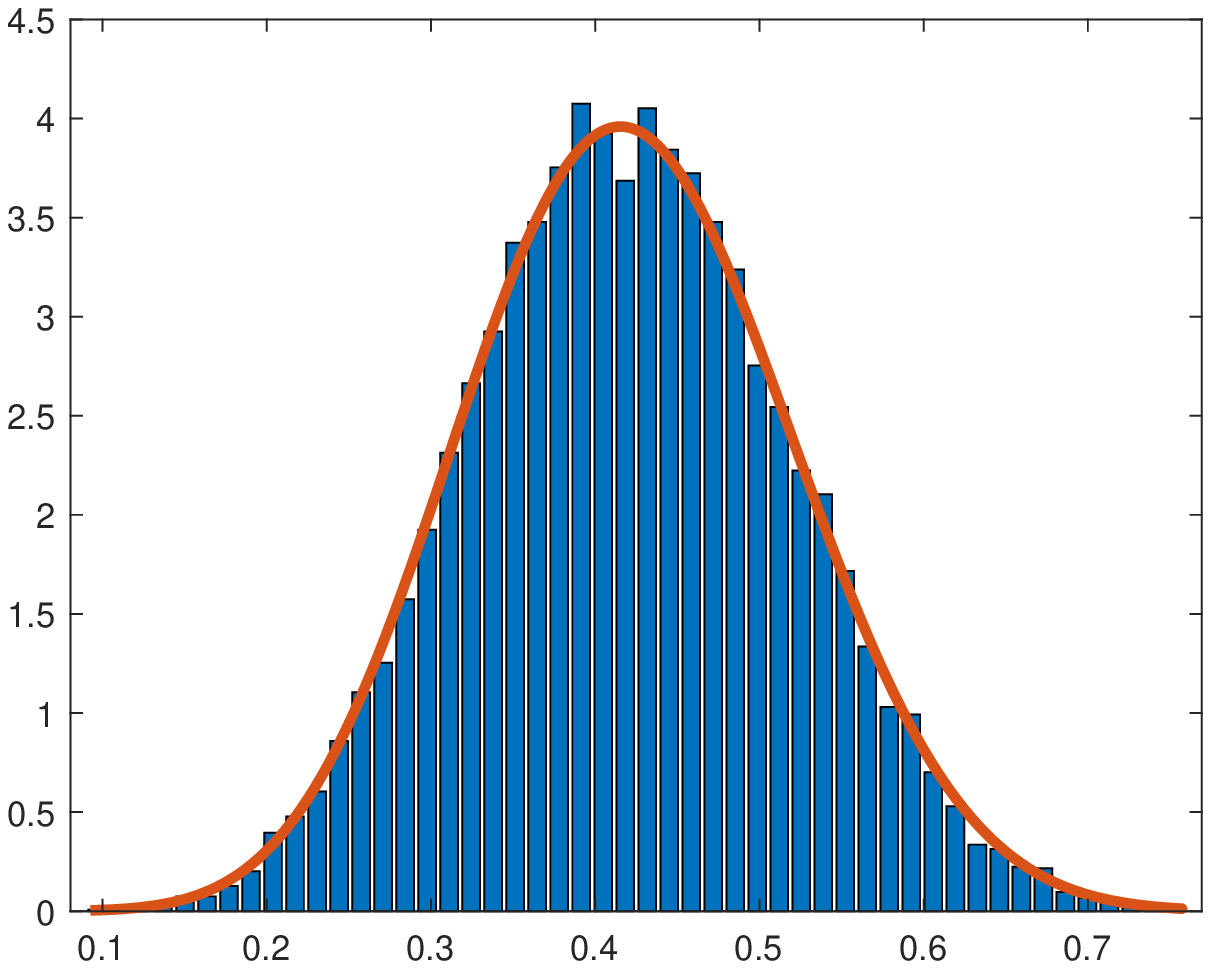} }} \hspace*{5mm}
	\subfloat
	\centering{{\includegraphics[width=.28\linewidth]{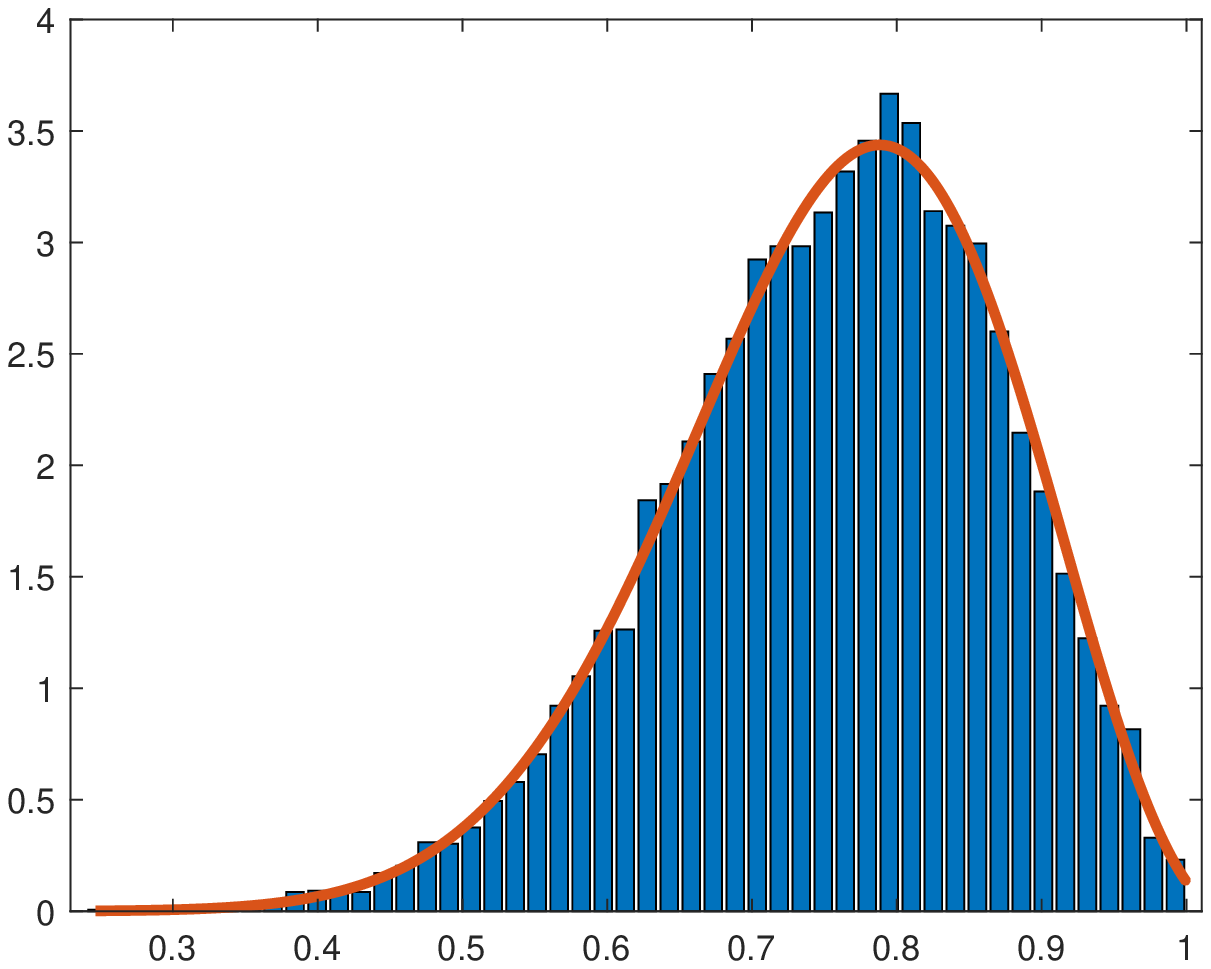} }} \hspace*{5mm}
	\subfloat
	\centering{{\includegraphics[width=.28\linewidth]{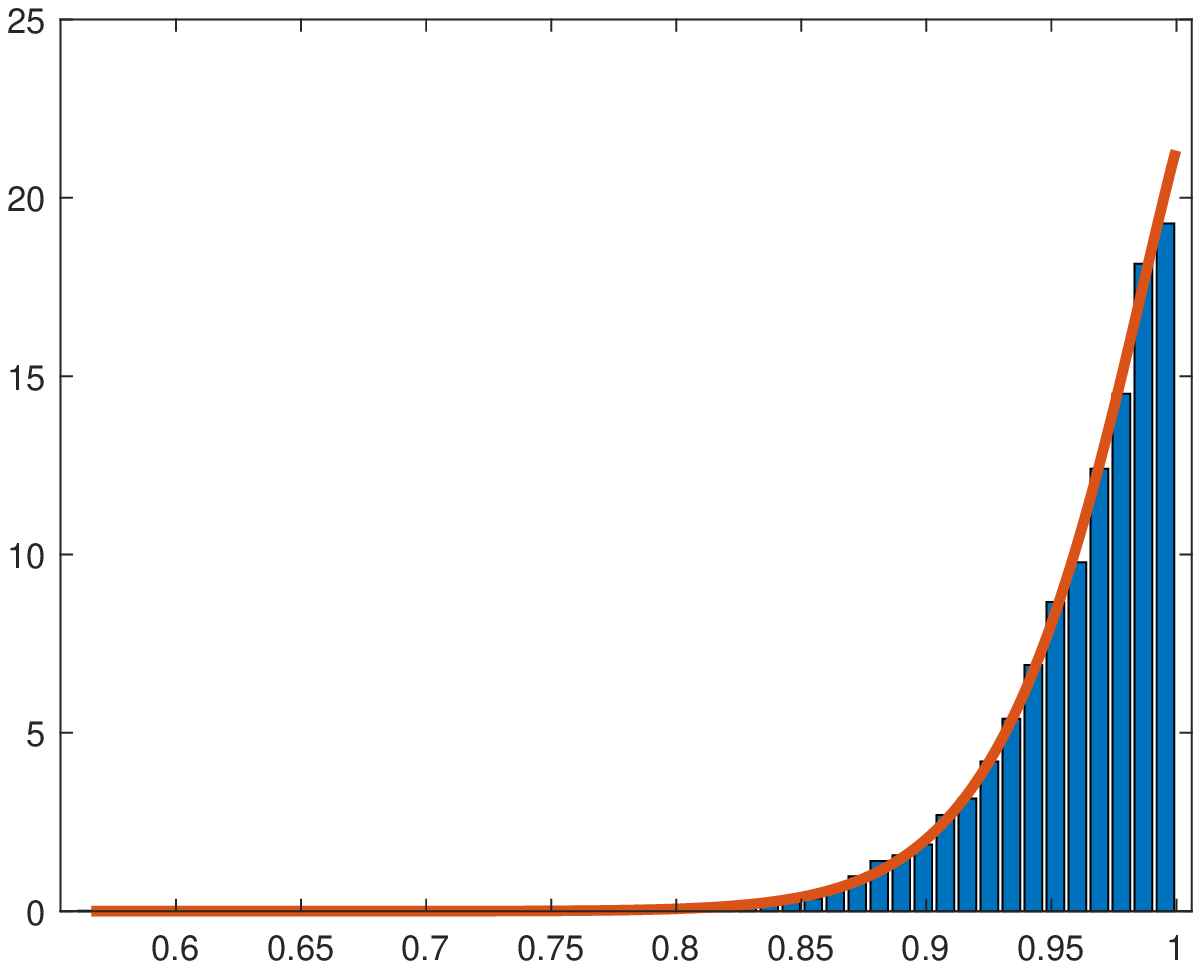}}}
	\\ \vspace*{5mm}
	\centering
	\subfloat
	\centering{{\includegraphics[width=.28\linewidth]{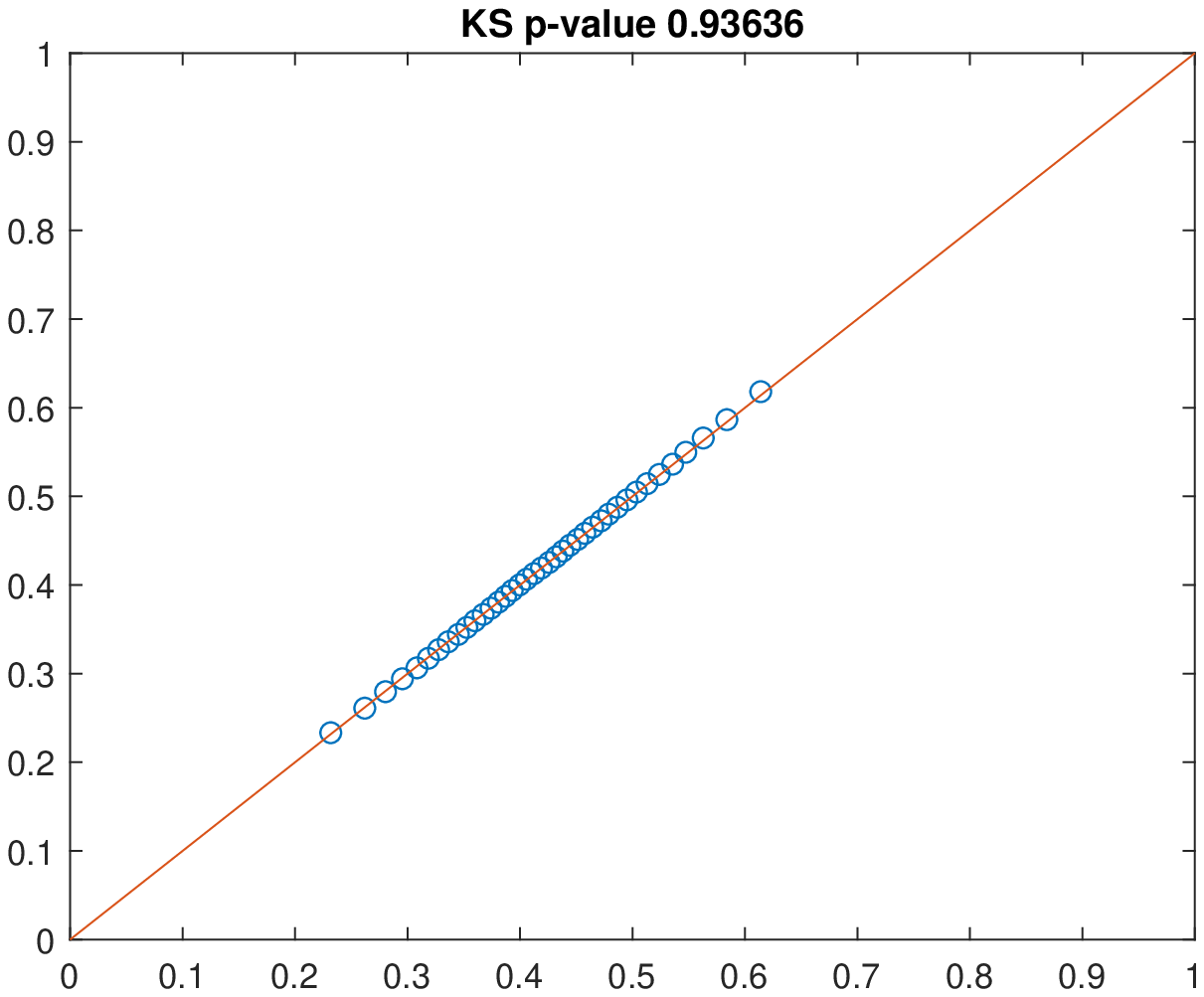} }} \hspace*{5mm}
	\subfloat
	\centering{{\includegraphics[width=.28\linewidth]{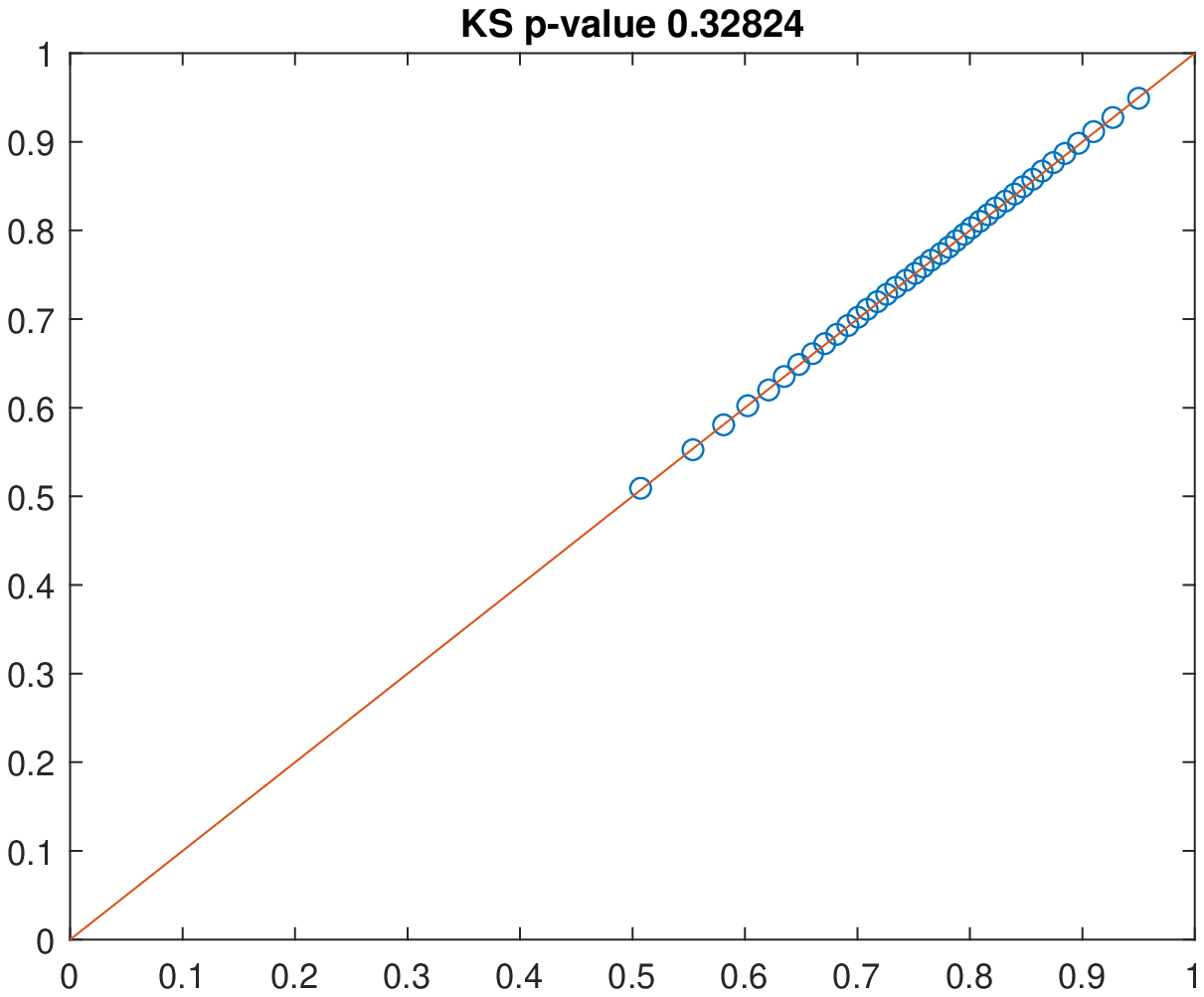} }} \hspace*{5mm}
	\subfloat
	\centering{{\includegraphics[width=.28\linewidth]{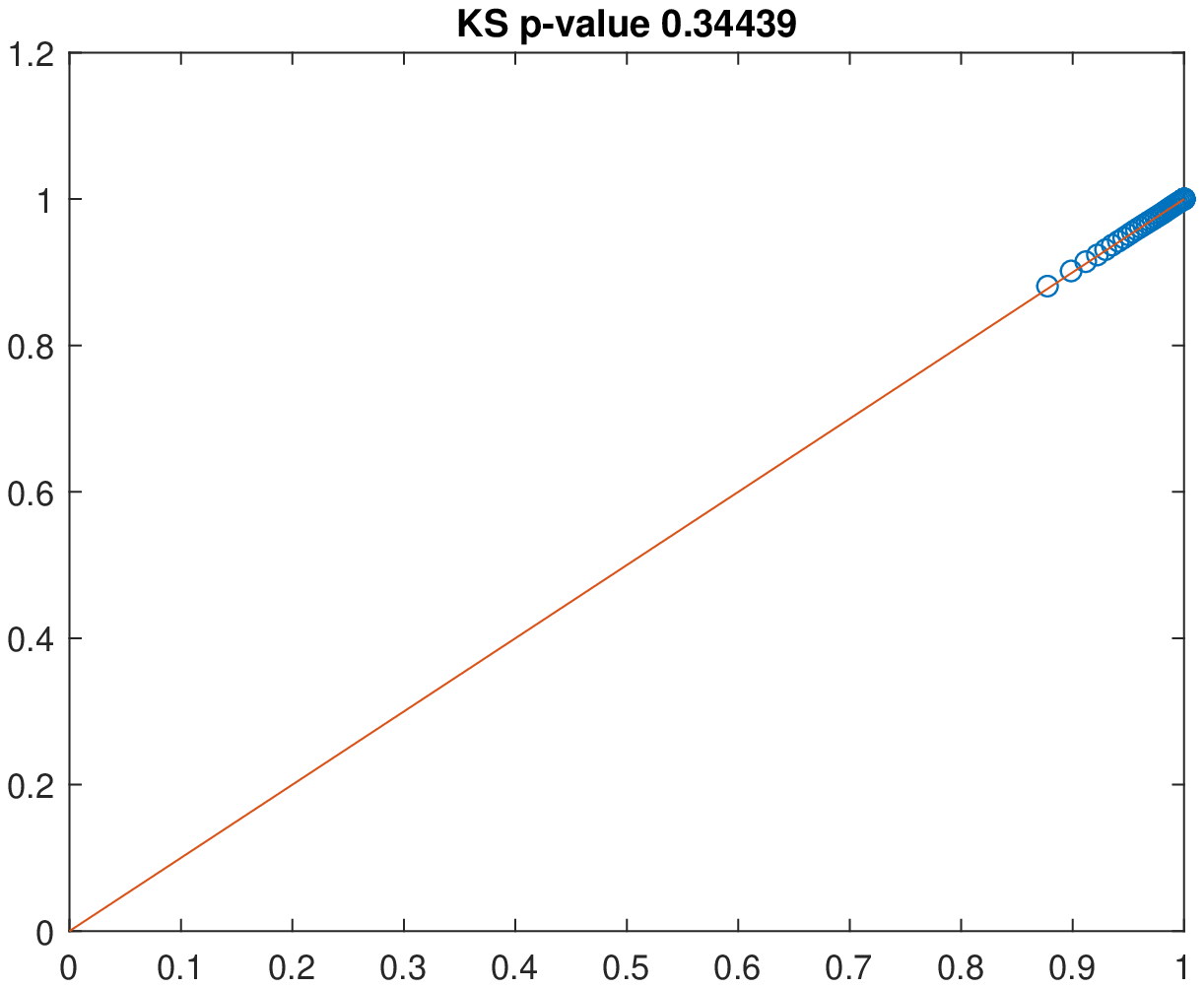}}}
	\caption{(Top row): Histograms for 10,000 samples generated from the law of the Wright--Fisher diffusion bridge `Bridge 1' (allowed to be absorbed at 1) as given in Table \ref{UnconditionedBridges}, sampled at the times given by the corresponding row in Table \ref{UnconditionedBridgesSamplingTimes}. Note that the samples equal to 1 are not included in the above plots, but their relative frequency can be found in Table \ref{AbsorptionProbabilitiesB1}. The truncated transition density is plotted in red. (Bottom row): QQ-plots for the corresponding samples with the $p$-value returned from the Kolmogorov--Smirnov test reported above the plot.}
\end{figure}

\begin{figure}[H]
	\centering
	\subfloat
	\centering{{\includegraphics[width=.28\linewidth]{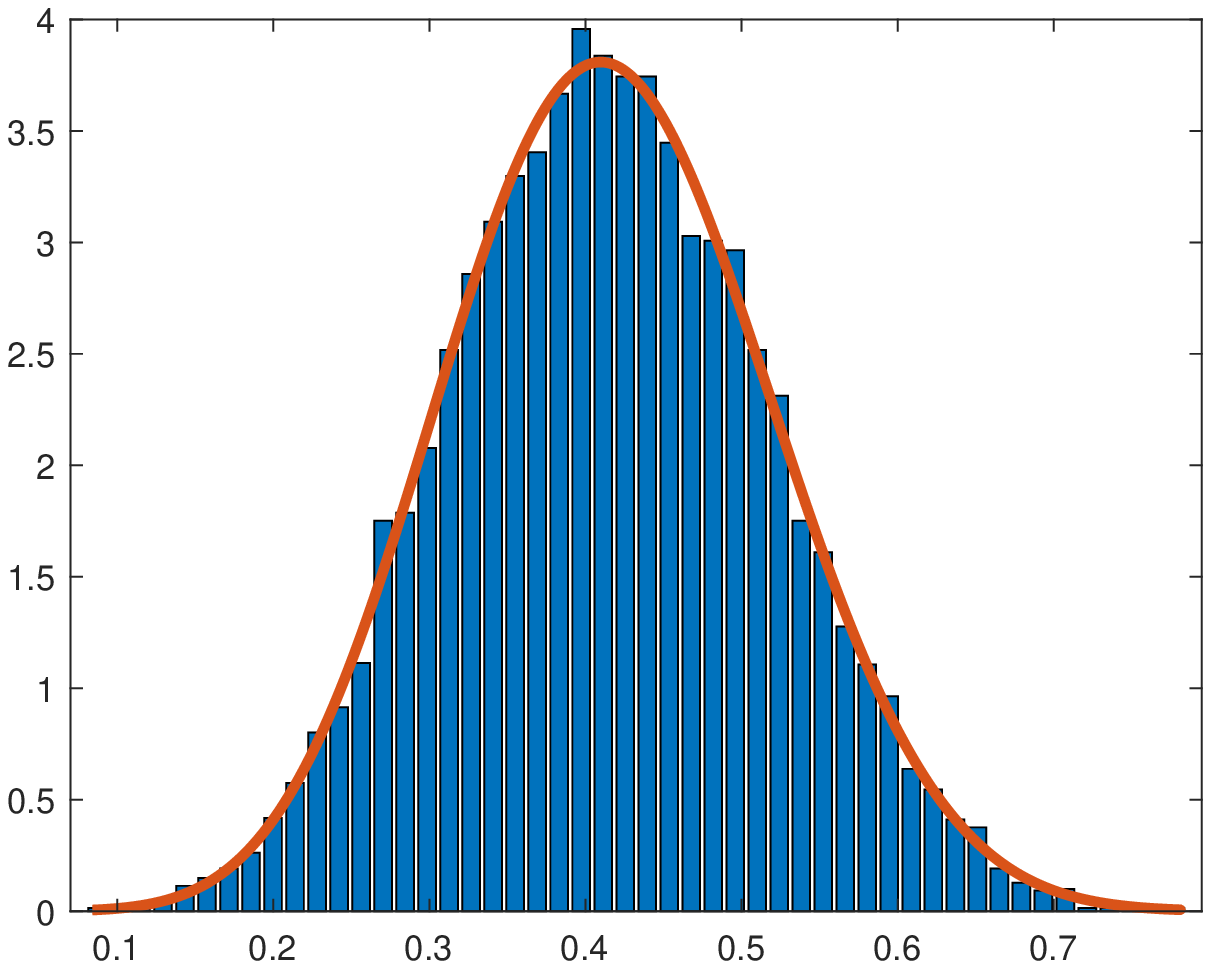} }} \hspace*{5mm}
	\subfloat
	\centering{{\includegraphics[width=.28\linewidth]{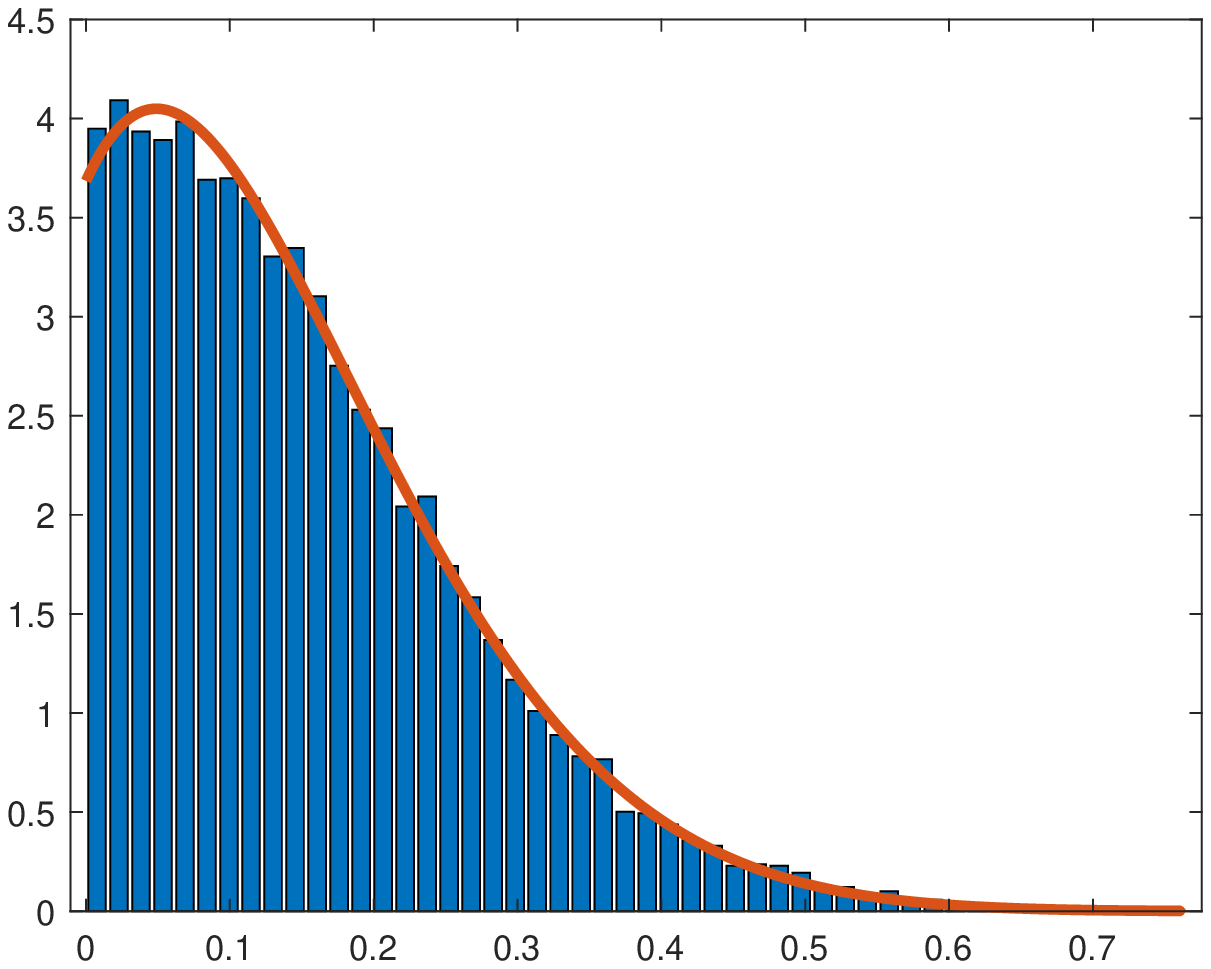} }} \hspace*{5mm}
	\subfloat
	\centering{{\includegraphics[width=.28\linewidth]{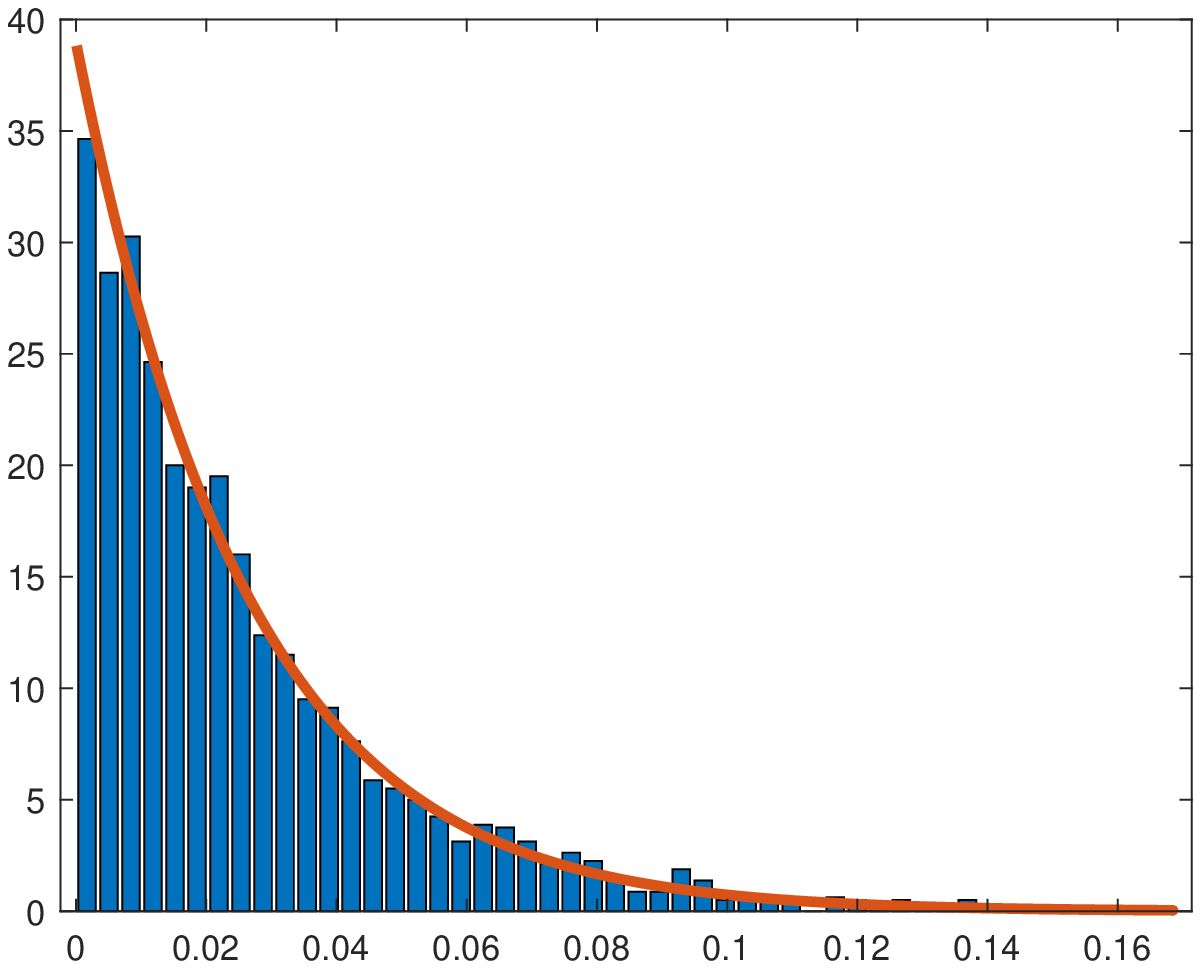}}}
	\\ \vspace*{5mm}
	\centering
	\subfloat
	\centering{{\includegraphics[width=.28\linewidth]{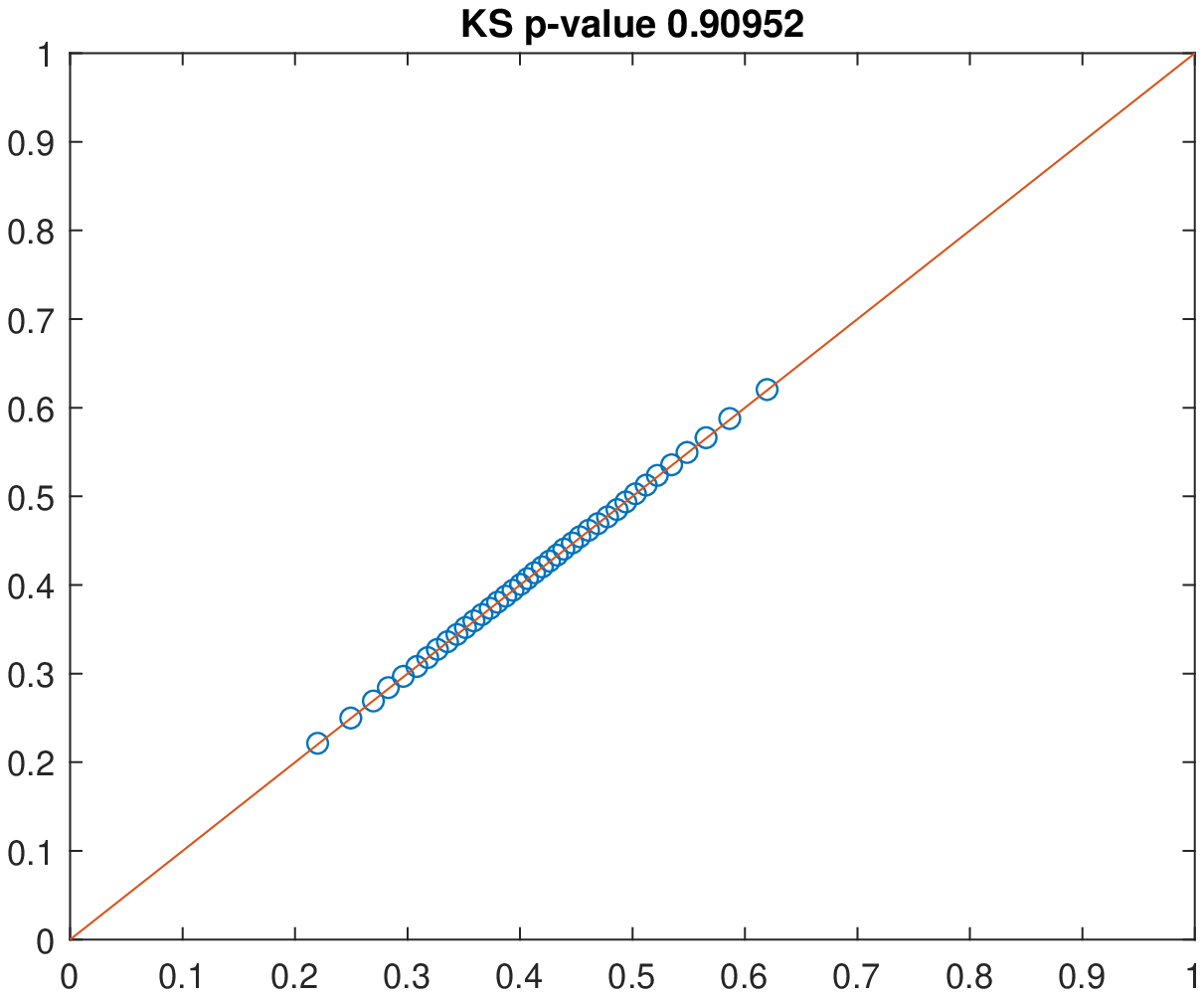} }} \hspace*{5mm}
	\subfloat
	\centering{{\includegraphics[width=.28\linewidth]{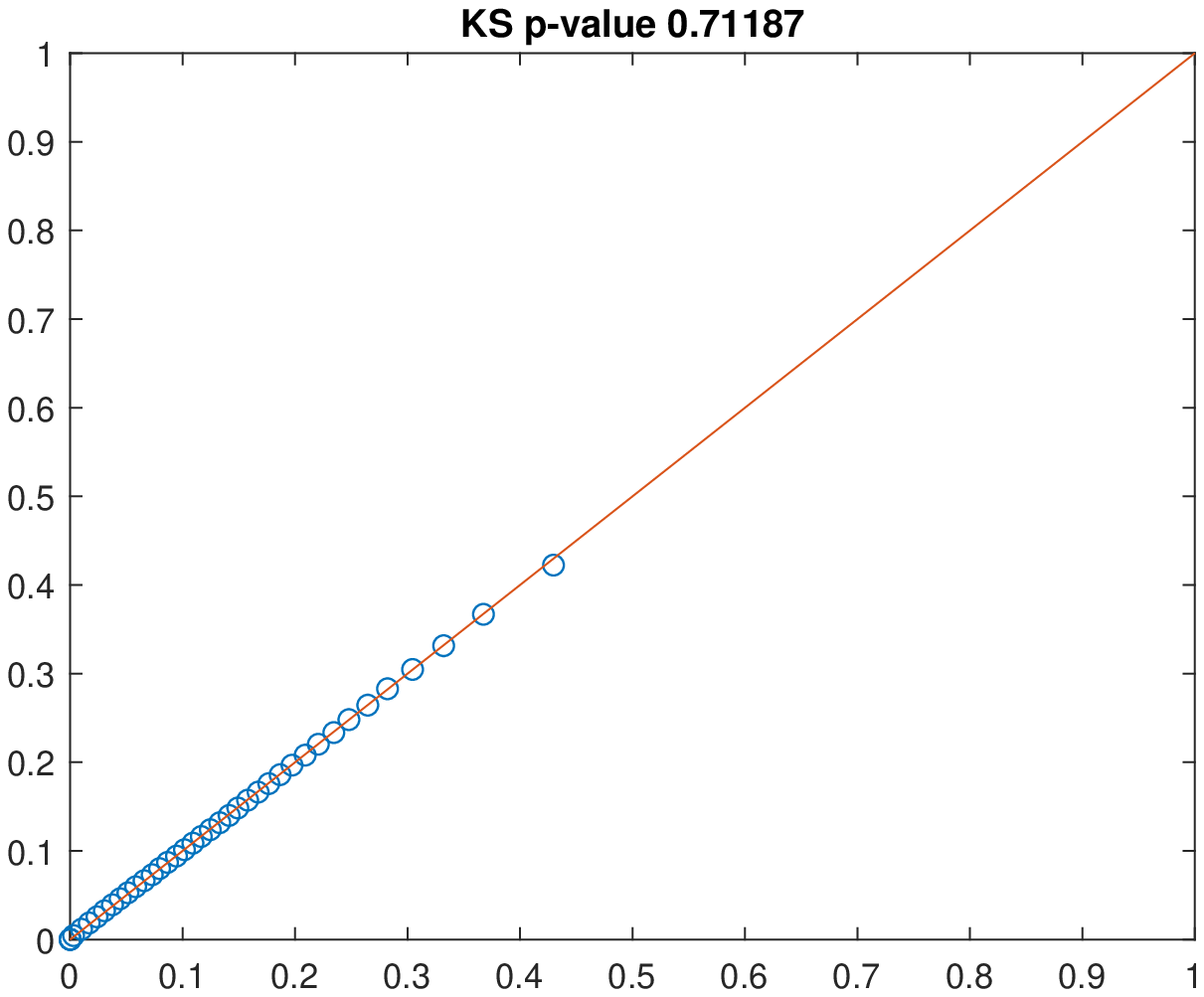} }} \hspace*{5mm}
	\subfloat
	\centering{{\includegraphics[width=.28\linewidth]{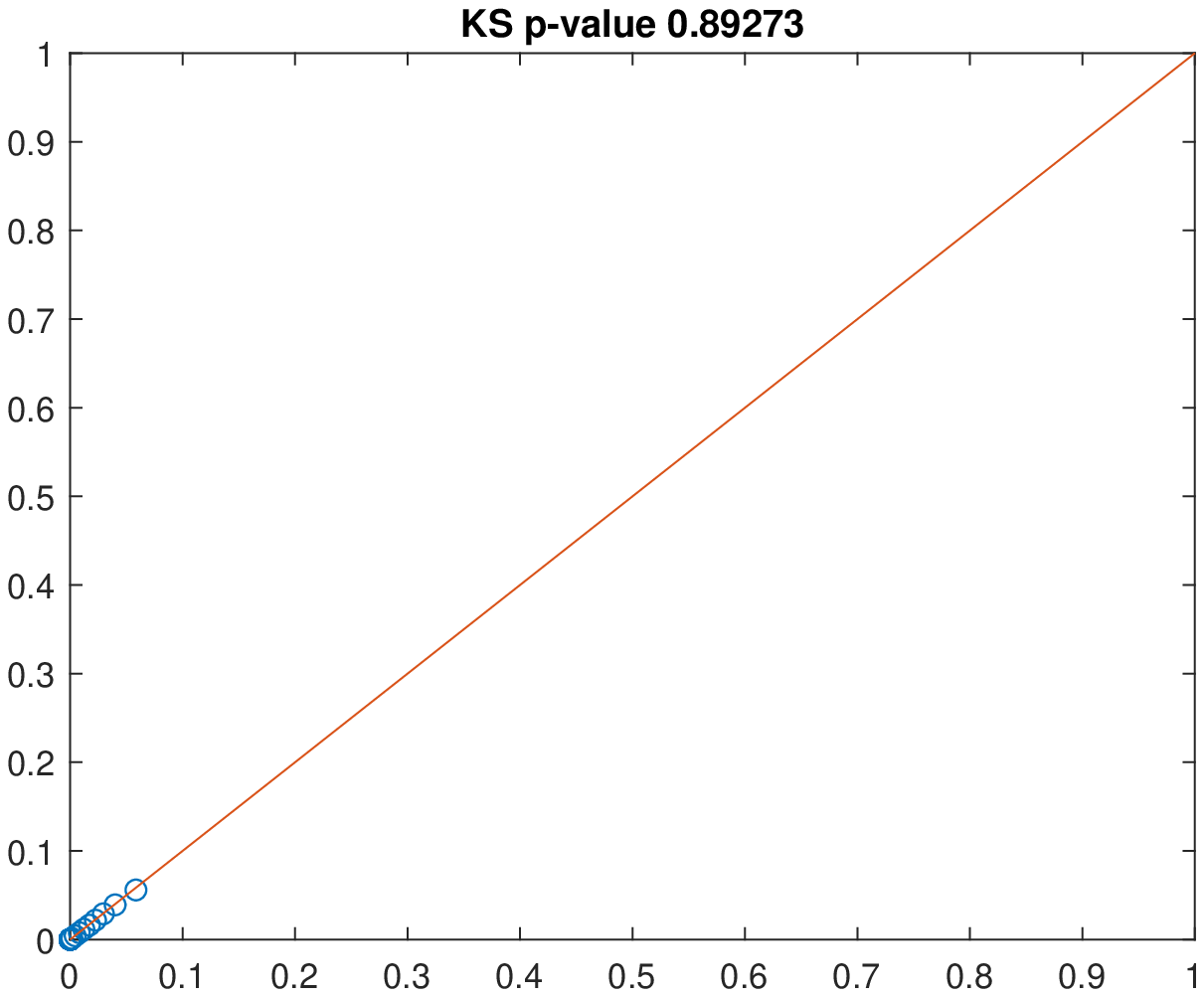}}}
	\caption{(Top row): Histograms for 10,000 samples generated from the law of the Wright--Fisher diffusion bridge `Bridge 2' (allowed to be absorbed at 0) as given in Table \ref{UnconditionedBridges}, sampled at the times given by the corresponding row in Table \ref{UnconditionedBridgesSamplingTimes}. Note that the samples equal to 0 are not included in the above plots, but their relative frequency can be found in Table \ref{AbsorptionProbabilitiesB2}. The truncated transition density is plotted in red. (Bottom row): QQ-plots for the corresponding samples with the $p$-value returned from the Kolmogorov--Smirnov test reported above the plot.}
\end{figure}

\subsection{Non-neutral diffusions and diffusion bridges}
\noindent Non-neutral Wright--Fisher paths can be generated (as described in Section \ref{NonneutralPaths}) through the use of neutral paths coupled with an appropriate Poisson point process. This technique was proposed in \cite{JenkinsSpano} and is used (without any alteration) in the current implementation of EWF to return non-neutral draws from the laws of both diffusions and diffusion bridges. Thus, although EWF does allow for non-neutral draws under a very broad class of selective regimes (and instructions on how to do this can be found in the respective configuration files), we omit the resulting output. 

\bibliographystyle{natbib}
\bibliography{main}

\end{document}